%% file: main.tex
\documentclass[a4paper,UKenglish,cleveref, autoref, thm-restate]{lipics-v2021}
\pdfoutput=1 %uncomment to ensure pdflatex processing (mandatatory e.g. to submit to arXiv)
\hideLIPIcs  %uncomment to remove references to LIPIcs series (logo, DOI, ...), e.g. when preparing a pre-final version to be uploaded to arXiv or another public repository

\title{Dynamic Grammar-Compressed Self-Index in \texorpdfstring{$\delta$}{delta}-Optimal Space} %TODO Please add

\author{Takaaki Nishimoto}{RIKEN Center for Advanced Intelligence Project, Japan}{takaaki.nishimoto@riken.jp}{https://orcid.org/0009-0008-3798-8397}{}%mandatory, please use full name; only 1 author per \author macro; first two parameters are mandatory, other parameters can be empty.
\author{Yasuo Tabei}{RIKEN Center for Advanced Intelligence Project, Japan}{yasuo.tabei@riken.jp}{https://orcid.org/0000-0003-2368-5607}{}
\authorrunning{T. Nishimoto and Y. Tabei}%mandatory. First: Use abbreviated first/middle names. Second (only in severe cases): Use first author plus 'et al.'
\Copyright{Takaaki Nishimoto and Yasuo Tabei}%mandatory, please use full first names. LIPIcs license is "CC-BY";  

\ccsdesc[500]{Theory of computation~Pattern matching} %TODO mandatory: Please choose ACM 2012 classifications from https://dl.acm.org/ccs/ccs_flat.cfm 

\keywords{Compressed string indexes, grammar-compression, dynamic data structures} %TODO mandatory; please add comma-separated list of keywords

\supplementdetails[linktext={https://github.com/TNishimoto/dynamic-rlslp-index}]{Software}{https://github.com/TNishimoto/dynamic-rlslp-index} %linktext, cite, and subcategory are optional

\funding{This work was supported by JST, ACT-X Grant Number JPMJAX24CL, Japan.}%optional, to capture a funding statement, which applies to all authors. Please enter author specific funding statements as fifth argument of the \author macro.

\nolinenumbers %uncomment to disable line numbering

\EventEditors{Philip Bille, Seth Pettie, and Sabine Storandt}
\EventNoEds{3}
\EventLongTitle{34th Annual European Symposium on Algorithms (ESA 2026)}
\EventShortTitle{ESA 2026}
\EventAcronym{ESA}
\EventYear{2026}
\EventDate{August 31--September 4, 2026}
\EventLocation{L'Aquila, Italy}
\EventLogo{}
\SeriesVolume{388}
\ArticleNo{4}
\begin{document}
\input{tex/macros}

\maketitle

%TODO mandatory: add short abstract of the document
\input{tex/abst}
\input{tex/1_intro}

\input{tex/2_preliminaries}
\input{tex/3_restricted_recompression}

\input{tex/4_0_dag}
\input{tex/4_1_improvement}

\input{tex/7_update}
\input{tex/8_experiment}

\bibliography{ref}
\clearpage
\appendix
\input{tex/9a_appendix}

\input{tex/9b_appendix}

\input{tex/9c_appendix}

\input{tex/9d_appendix}

\end{document}

%% file: tex/macros.tex
%\newcommand{\floor}[1]{\left \lfloor #1 \right \rfloor}
%\newcommand{\ceil}[1]{\left \lceil #1 \right \rceil}

%%Other
%\newcommand{\argmax}{\mathop{\rm arg~max}\limits}
%\newcommand{\argmin}{\mathop{\rm arg~min}\limits}
%\newcommand{\polylog}{\mathop{\rm polylog}\limits}

%% Basic Notation
\newcommand{\reverse}{\mathsf{reverse}}
\newcommand{\Occ}{\mathit{Occ}}
\newcommand{\pOcc}{\mathit{pOcc}}
\newcommand{\occ}{\mathsf{occ}}
\newcommand{\avg}{\mathsf{avg}}

%%% Range Report

%%% Grammar
\newcommand{\val}{\mathsf{val}}
\newcommand{\expr}{\mathsf{expr}}
\newcommand{\assign}{\mathsf{assign}}

%%%% DAG
\newcommand{\DAG}{\mathsf{DAG}}
\newcommand{\dagraph}{\mathfrak{D}} % DAG symbol (distinct from \mathcal{D} = production rules)
\newcommand{\expl}{\mathsf{expl}}
\newcommand{\impl}{\mathsf{impl}}
\newcommand{\pathset}{\Pi} % set of paths in the path decomposition (formerly \pathset)

%%%% Point
\newcommand{\grid}{\mathsf{grid}}
\newcommand{\rect}{\mathsf{rect}}

%%%% Locate
\newcommand{\rocc}{\mathit{rocc}}
\newcommand{\vOcc}{\mathsf{vOcc}}

%%%% Update
\newcommand{\diff}{\mathsf{diff}}
\newcommand{\change}{\mathsf{change}}
\newcommand{\children}{\mathsf{children}}
%\newcommand{\tree}{\mathsf{tree}}

%%%% Appendix
\newcommand{\short}{\mathsf{short}}
\newcommand{\leaf}{\mathsf{leaf}}

%%%% Experiment
%\newcommand{\SEI}{\mathsf{SEI}}
%\newcommand{\RRI}{\mathsf{RRI}}

%%%% DEBUG
%\newcommand{\unproven}{\color{red}UNPROVEN\color{black}}
%\newcommand{\DESCRIPTION}{\color{red}DESCRIPTION\color{black}}
%\newcommand{\DUMMY}{\color{red}DUMMY DUMMY \color{black}}

%%%%% 

%\newcommand{\rmv}{\mathsf{rmv}}

%% file: tex/abst.tex
\begin{abstract}
A compressed self-index stores a string in compressed form while supporting locate queries without decompression.
For highly repetitive strings---arising in web crawls, versioned documents, and genomic collections---static self-indexes can match the $\delta$-optimal lower bound of $\Omega(\delta \log(n \log \sigma / (\delta \log n)) \log n)$ bits up to constant factors, where $n$ is the string length, $\sigma$ is the alphabet size, and $\delta$ is the substring complexity.
Their dynamic counterparts, however, remain scarce: every existing dynamic self-index either fails to attain $\delta$-optimal space, pays $\Omega(\log n)$ time per reported occurrence for locate queries, or has an update time that grows with the maximum value in the longest common prefix (LCP) array of the text.
We present the \emph{dynamic RR-index}, a dynamic grammar-compressed self-index built on the restricted recompression run-length straight-line program (RLSLP).
To our knowledge, it is the first dynamic self-index to attain $\delta$-optimal space.
The index uses $\mathcal{O}(\delta \log(n \log \sigma / (\delta \log n)) \log n)$ bits in expectation, answers locate queries in expected $\mathcal{O}(m + \log m \log^{2} n + \occ (\log n / \log \log n))$ time---where $m$ is the pattern length and $\occ$ is the number of occurrences---and supports insertion of a length-$m'$ string and deletion of a length-$m'$ substring in expected amortized $\mathcal{O}(m' \log^{2} n + \log^{3} n)$ time, with no dependence on the maximum LCP value.
On eleven highly repetitive corpora, including a $37$~GB Wikipedia dump and a $59$~GB human-chromosome collection, the dynamic RR-index is up to $77\times$ faster than the dynamic r-index for updates and up to $11\times$ faster than other dynamic indexes for locate queries.
\end{abstract}

%% file: tex/1_intro.tex
\section{Introduction}\label{sec:intro}

Datasets composed of \emph{highly repetitive} strings---those with many repeated substrings---have grown rapidly in recent years.
Typical examples include web pages collected by crawlers~\cite{FerraginaM10}, version-controlled documents such as Wikipedia with its complete edit history~\cite{dataset:enwiki-all-pages}, and, perhaps most notably, biological sequences such as collections of human genomes~\cite{Przeworski00,1000Genomes}.
These datasets are not static: edits, deletions, and new releases arrive continuously.
Yet the methods practitioners rely on most heavily, such as the r-index~\cite{DBLP:journals/jacm/GagieNP20}, a compressed index supporting locate queries, are \emph{static}: reflecting even a single edit requires rebuilding the index from scratch.
There is therefore a strong and growing need to develop \emph{dynamic} compressed self-indexes for highly repetitive strings that can be updated efficiently without rebuilding.

A practical solution must simultaneously (i) use space proportional to the repetitiveness of the string, (ii) answer locate queries quickly, and (iii) support string insertions and deletions without rebuilding.
The third requirement has proven difficult: while compressed \emph{self-indexes}---data structures that represent the string in compressed form while answering locate queries directly---are well understood in the static case, their dynamic counterparts remain challenging to develop.

For highly repetitive strings, compressed self-indexes aim to approach the \emph{$\delta$-optimal space} of $\mathcal{O}(\delta \log(n \log \sigma / (\delta \log n)) \log n)$ bits established by Kociumaka et al.~\cite{9961143}, where $n$ is the string length, $\sigma$ is the alphabet size, and $\delta$ is the \emph{substring complexity} of the string---a repetitiveness measure satisfying $\delta \ll n$ on genomic and versioned corpora.
A line of specialized self-indexes---grammar-based~\cite{DBLP:journals/talg/ChristiansenEKN21,9961143,ViceVersa}, LZ-based~\cite{DBLP:conf/latin/ChristiansenE18}, block-tree--based~\cite{DBLP:journals/jcss/BelazzouguiCGGK21,DBLP:journals/corr/abs-2602-13735,9961143}, 
and run-length BWT (RLBWT)-based indexes such as 
the r-index~\cite{DBLP:journals/jacm/GagieNP20} and OptBWTR~\cite{DBLP:conf/icalp/NishimotoT21,DBLP:conf/wea/Bertram0N24}---has closed much of this gap, with several attaining $\delta$-optimal space up to constant factors.
None, however, supports updates.

Only a handful of \emph{dynamic} self-indexes have been proposed, and each sacrifices at least one of the three requirements (Table~\ref{tab:dat} in Appendix~\ref{app:summarize_table} gives the full comparison).
The dynamic FM-index~\cite{DBLP:journals/jda/SalsonLLM10,DBLP:journals/jda/LeonardMS12} builds on the Burrows--Wheeler Transform (BWT)~\cite{burrows1994block} but uses $\mathcal{O}(n \log \sigma)$ bits, oblivious to repetitiveness, with an $\mathcal{O}(\log^{2+\epsilon} n)$ per-occurrence locate factor.
Grammar-based dynamic indexes---the SE-index~\cite{DBLP:journals/dam/NishimotoIIBT20}, the TST-index-d~\cite{DBLP:journals/iandc/NishimotoTT20}, and the index of Gawrychowski et al.~\cite{DBLP:journals/corr/GawrychowskiKKL15}---either pay $(\log n \log^{*} n)^{2}$-type factors in locate/update or require $\Omega(n \log n)$ bits.
A recent dynamic r-index~\cite{dynamic_r_index_dcc} uses $\mathcal{O}(r \log n)$ bits ($r$ the number of BWT runs), but its update time $\mathcal{O}((m' + L_{\max}) \log n)$ depends on the maximum value $L_{\max} \leq n$ of the longest common prefix (LCP) array---the length of the longest repeated substring---which can grow independently of $\delta$ and $r$ as versioned copies accumulate.
In short, every existing dynamic self-index either fails to compress to $\delta$-optimal space, pays $\Omega(\log n)$ time per reported occurrence for locate queries, or has an update time that grows with the maximum LCP value. We address this gap.

\subparagraph*{Our contributions.}
We present the \emph{dynamic RR-index}, a dynamic grammar-compressed self-index built on the \emph{restricted recompression} RLSLP of Je\.{z}~\cite{DBLP:journals/talg/Jez15} and Kociumaka et al.~\cite{DBLP:journals/corr/KociumakaRRW13,9961143}.
Our starting observation is that the derivation tree of a restricted recompression RLSLP admits a compact \emph{DAG} representation whose explicit-node count is tied to $\delta$ rather than to $r$ or $g$.
Moreover, insertions and deletions perturb this DAG only along two short
popped sequences~\cite{DBLP:journals/dam/NishimotoIIBT20,DBLP:conf/cpm/I17},
independently of the maximum LCP value.
Combined with two-dimensional range reporting~\cite{DBLP:conf/soda/Blelloch08,NAVARRO20142} and with a path-cached ancestor field that accelerates locate (\cref{sec:improvement}), this design simultaneously achieves $\delta$-optimal space and a sub-logarithmic per-occurrence locate bound.
To our knowledge, the dynamic RR-index is the first dynamic self-index to attain $\delta$-optimal space.
Concretely, it uses $\delta$-optimal space---$\mathcal{O}(\delta \log(n \log \sigma / (\delta \log n)) \log n)$ bits in expectation---and answers locate queries in expected $\mathcal{O}(m + \log m \log^{2} n + \occ (\log n / \log \log n))$ time, where $m$ is the pattern length and $\occ$ is the number of occurrences.
Insertion of a length-$m'$ string and deletion of a length-$m'$ substring are supported in expected amortized $\mathcal{O}(m' \log^{2} n + \log^{3} n)$ time.
We implement and evaluate the dynamic RR-index on eleven highly repetitive corpora, including the $37$~GB \textsf{enwiki} Wikipedia dump and the $59$~GB \textsf{chr19} genomic collection, and we show that it is up to $77\times$ faster than the dynamic r-index on updates and up to $11\times$ faster than other dynamic indexes for locate queries, while using working memory that empirically scales with $\delta$.

\subparagraph*{Organization.}
Sections~\ref{sec:preliminary}--\ref{sec:recompression} fix notation and review the structural properties of restricted recompression. Sections~\ref{sec:component}--\ref{sec:locate_query} introduce the dynamic RR-index, its DAG representation, the locate algorithm, and the ancestor-path caching of \cref{sec:improvement}. Section~\ref{sec:update} gives the insertion and deletion algorithms, and Section~\ref{sec:experiment} reports the experiments. 
Omitted proofs and implementation details are deferred to Appendices~\ref{app:summarize_table}-\ref{app:experiment}.

%% file: tex/2_preliminaries.tex
\section{Preliminaries}\label{sec:preliminary}

\subparagraph{Basic notation.}
Let $T$ be a string of length $|T| = n \geq 2$ over a totally ordered alphabet $\Sigma$ of size $\sigma = n^{\mathcal{O}(1)}$. 
We denote the empty string by $\varepsilon$, the $i$-th character of $T$ by $T[i]$, 
the substring of $T$ from position $i$ to position $j$ by $T[i..j]$, 
and the reverse of $T$ by $\reverse(T) = T[n]T[n-1] \cdots T[1]$.
A prefix (resp.\ suffix) of $T$ is a substring beginning at position $1$ (resp.\ ending at position $n$); 
for a string $P \in \Sigma^{+}$, 
$T \cdot P$ denotes concatenation. We use the standard lexicographic order on $\Sigma^{*}$. 
The set of occurrence positions of $P$ in $T$ is $\Occ(T, P) = \{ i \in \{1, 2, \ldots, n-|P|+1\} \mid T[i..i+|P|-1] = P \}$. For a nonempty set $\mathcal{S}$ of integers, $\min \mathcal{S}$ and $\max \mathcal{S}$ denote its minimum and maximum. For a path $\mathbb{P}$ in a graph, let $|\mathbb{P}|$ denote the number of edges on $\mathbb{P}$.

\subparagraph{Model of computation.}
We use base-2 logarithms throughout this paper unless otherwise indicated. 
Our computation model is a unit-cost word RAM~\cite{DBLP:conf/stacs/Hagerup98} with multiplication, randomization, and machine word size $B = \Theta(\log n)$ bits. 
%We measure space in machine words; multiplying by $B$ gives the space in bits. 
We analyze our randomized algorithms against an \emph{oblivious adversary} (e.g., \cite{DBLP:conf/stoc/BeimelKMNSS22}): the sequence of updates to $T$ is chosen in advance, without access to the random bits used by our algorithms.

\subparagraph{Substring complexity.}
Substring complexity~\cite{DBLP:journals/algorithmica/RaskhodnikovaRRS13,DBLP:journals/talg/ChristiansenEKN21} measures the repetitiveness of a string.
For each $d \geq 1$, let $\mathcal{S}_{d} = \{ T[i..i+d-1] \mid 1 \leq i \leq n - d + 1 \}$ be the set of length-$d$ substrings of $T$.
The \emph{substring complexity} of $T$ is $\delta = \max_{1 \leq d \leq n} |\mathcal{S}_{d}| / d$.

The \emph{$\delta$-optimal space} is $\Theta(\delta \log \tfrac{n \log \sigma}{\delta \log n} \log n)$ bits: every string can be encoded within this space bound, and this bound is tight in $n$, $\sigma$, and $\delta$~\cite{9961143}. 

%\subsection{Grammar-Based Compression}\label{subsec:slp}
\subparagraph{Grammar-based compression.}
A grammar-based compression of $T$ is a context-free grammar $\mathcal{G} = (\mathcal{V}, \Sigma, \mathcal{D}, E)$ that generates exactly $T$, where $\mathcal{V} = \{X_1, X_2, \ldots, X_g\}$ is a set of \emph{nonterminals}, $\Sigma$ is the alphabet of $T$, and $\mathcal{D}$ is a set of \emph{production rules} $X_i \rightarrow \expr_i$ with $\expr_i \in (\mathcal{V} \cup \Sigma)^*$. 
A rule $X_i \rightarrow \expr_i$ expands $X_i$ to the sequence $\expr_i$.
The \emph{start symbol} $E \in \mathcal{V}$ does not appear on any right-hand side, while every other nonterminal appears on at least one right-hand side.
%The \emph{start symbol} $E \in \mathcal{V}$ does not appear on any right-hand side, while every other nonterminal does appear on at least one.
To avoid cycles, we require $j < i$ whenever $X_j$ appears in $\expr_i$. 
In addition, $\expr_{i} \neq \expr_{j}$ for any two distinct integers $i, j \in \{ 1, 2, \ldots, g \}$.

Each nonterminal $X \in \mathcal{V}$ derives a substring of $T$; in particular, the start symbol $E$ derives $T$ itself.
The tree whose root is labeled $E$ and whose leaves spell out $T$ is the \emph{derivation tree} of $\mathcal{G}$.
Let $\val(X)$ denote the substring of $T$ derived by $X \in \mathcal{V}$, 
and let $\val(X_{1}, X_{2}, \ldots, X_{d}) = \val(X_{1}) \cdot \val(X_{2}) \cdot \cdots \cdot \val(X_{d})$ 
for nonterminals $X_{1}, X_{2}, \ldots, X_{d} \in \mathcal{V}$. 

A \emph{straight-line program} (SLP)~\cite{DBLP:journals/njc/KarpinskiRS97} is a grammar-based compression in which each nonterminal produces either a character or a pair of nonterminals.
A \emph{run-length SLP (RLSLP)}~\cite{DBLP:conf/mfcs/TanimuraNBIT17} is an SLP whose nonterminals can additionally produce a single nonterminal or a repetition of a single nonterminal.
Specifically, $\mathcal{D} \subseteq \mathcal{V} \times (\Sigma \cup \mathcal{V}^+)$, and each production rule in $\mathcal{D}$ takes one of the following forms:
(i) $X_{i} \rightarrow c$ for $c \in \Sigma$;
(ii) $X_{i} \rightarrow X_j X_k$ ($j, k < i$) for distinct nonterminals $X_j, X_k \in \mathcal{V}$;
(iii) $X_{i} \rightarrow X_j$ ($j < i$) for $X_j \in \mathcal{V}$;
(iv) $X_{i} \rightarrow (X_{j})^{d}$ ($j < i$), where $(X_{j})^{d}$ denotes $d \geq 2$ successive copies of $X_j \in \mathcal{V}$.
Several algorithms build an RLSLP for a given string, e.g., \emph{signature encoding}~\cite{DBLP:conf/mfcs/TanimuraNBIT17}, \emph{recompression}~\cite{DBLP:journals/talg/Jez15}, \emph{signature grammar}~\cite{DBLP:conf/latin/ChristiansenE18}, and \emph{restricted block compression}~\cite{ViceVersa}. 

%% file: tex/3_restricted_recompression.tex
\section{Restricted Recompression}\label{sec:recompression}
\begin{figure}[t]
\makebox[\linewidth][c]{%
        \includegraphics[scale=0.7]{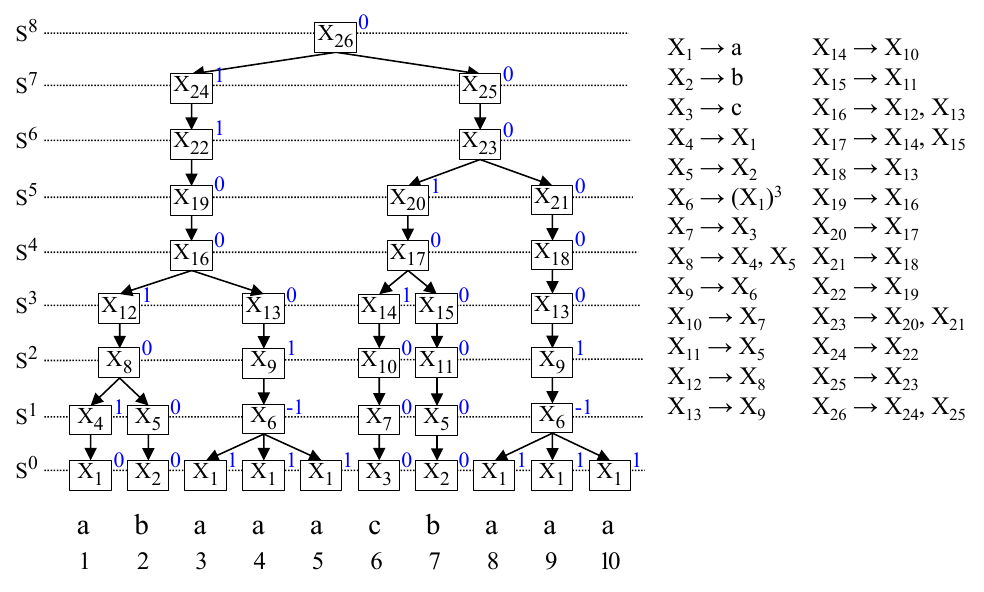}
}
	  \caption{
	  An illustration of the derivation tree of a restricted recompression RLSLP (left), and the corresponding production rules (right).
        Each node is depicted as a rectangle enclosing its corresponding nonterminal $X_{i}$.
        The blue integer shown at the upper right of each node is the value of $\assign(X_{i})$.
	  }
\label{fig:restricted_recompression}
\end{figure}

Restricted recompression~\cite{DBLP:journals/talg/Jez15,9961143,DBLP:journals/corr/KociumakaRRW13} is a randomized algorithm that constructs an RLSLP $\mathcal{G}^{R}$ whose derivation tree is \emph{height-balanced} (i.e., all children of every node have the same height).
Let $H$ denote the height of the derivation tree of $\mathcal{G}^{R}$.
For each $h \in \{0, 1, \ldots, H\}$, the nonterminals labeling the nodes at height $h$, read left to right, form a sequence $S^{h}$.
In this paper, $\mathcal{G}^{R}$ is called a \emph{restricted recompression RLSLP}. 

Restricted recompression builds the derivation tree bottom-up, processing $S^{0}, S^{1}, \ldots, S^{H}$ in order.
While processing $S^h$, each newly introduced nonterminal $X_{i}$ in $S^{h}$ is assigned a value $\assign(X_{i}) \in \{-1, 0, 1\}$ as follows.
Let $\mu(h) = (8/7)^{\lceil (h+1)/2 \rceil - 1}$.
If $|\val(X_{i})| \leq \mu(h)$, then $\assign(X_{i})$ is drawn uniformly at random from $\{0, 1\}$; otherwise, $\assign(X_{i}) = -1$.
The $\assign$ values partition $S^h$ into segments as follows.
For odd $h$,
every two consecutive nonterminals $X_{i}, X_{j}$ in $S^{h}$ with $\assign(X_i)=1$ and $\assign(X_j)=0$ form a segment;
for even $h$,
every maximal run of a nonterminal $X_{i}$ with $\assign(X_i) \in \{0,1\}$ forms a segment.
In either case, every remaining nonterminal forms a segment by itself.
Unless $|S^{h}| = 1$,
$S^{h+1}$ is constructed by replacing each segment with a nonterminal.
See Appendix~\ref{app:restricted_recompression_construction} for a detailed description of the construction.

Figure~\ref{fig:restricted_recompression} shows an example of the derivation tree for $T = \mathrm{abaaacbaaa}$.
For readability, Figure~\ref{fig:restricted_recompression} uses illustrative thresholds
$\tilde{\mu}(0)=\tilde{\mu}(1)=1$ and $\tilde{\mu}(h)=10$ for $2 \le h \le 8$,
instead of the actual thresholds $\mu(h)$.
%Since the actual values of $\mu(h)$ would make the tree too large to display, we use $\lfloor \mu(0) \rfloor = \lfloor \mu(1) \rfloor = 1$ and $\lfloor \mu(2) \rfloor = \lfloor \mu(3) \rfloor = \cdots = \lfloor \mu(8) \rfloor = 10$ in the figure.

The following lemma shows that $H = \mathcal{O}(\log n)$ with high probability.
\begin{lemma}\label{lem:tree_height}
$H \leq \lceil 2(w+1)\log_{8/7}(4n)+2 \rceil$ holds with probability at least $1-1/n^{w}$ for every integer $w \geq 1$.
\end{lemma}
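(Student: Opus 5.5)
The plan is to show that the length of $S^{h}$ shrinks geometrically in expectation, in rounds of two levels, and then apply Markov's inequality. Throughout, symbols $X$ with $|\val(X)| \leq \mu(h)$ are called \emph{short} at level $h$. For the top level, observe that $\mu(h)$ exceeds $n$ once $\lceil (h+1)/2\rceil - 1 > \log_{8/7} n$. From then on every symbol is short, and the process becomes ordinary recompression: runs are collapsed at even levels and random pairs at odd levels. So the aim is to bound how many levels it takes until $|S^{h}|=1$.

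\textbf{Step 1 (potential).} Define the potential $\Phi_h$ as the number of positions $i$ in $S^{h}$ such that both $S^{h}[i]$ and $S^{h}[i+1]$ are short at level $h$. Every remaining position is adjacent to a long symbol. Long symbols have length $> \mu(h)$, and the lengths of all symbols in $S^h$ sum to $n$. Hence the number of positions not counted by $\Phi_h$ is at most $3n/\mu(h) + 1$. This gives
\[
|S^{h}| \leq \Phi_h + 3n/\mu(h) + 1 .
\]

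\textbf{Step 2 (one round of two levels).} Take an even level $h$. After run compression, no two adjacent symbols of $S^{h+1}$ are equal. At the odd level $h+1$, each adjacent pair of short, distinct symbols is merged with probability $1/4$, since it needs $\assign = 1$ then $0$ and the bits are independent for distinct symbols. Merges cannot overlap, because a symbol cannot be both $1$ and $0$. A standard counting argument, as in Je\.{z}/Kociumaka et al., then shows that the expected number of surviving short-short adjacencies decreases by a constant factor. Combined with the growth $\mu(h+2) = (8/7)\mu(h)$, this should give
\[
\mathbb{E}[\Phi_{h+2}] \leq \tfrac{7}{8}\,\Phi_h + O(n/\mu(h)).
\]
Unrolling the recursion yields $\mathbb{E}\,|S^{h}| = O(n (7/8)^{h/2})$, or more precisely a bound of the form $\mathbb{E}|S^{2k}| \leq 4n(7/8)^{k}$, up to lower-order terms.

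\textbf{Step 3 (tail bound).} Set $k = \lceil (w+1)\log_{8/7}(4n) \rceil + 1$ and $h = 2k$. Step 2 gives $\mathbb{E}[|S^{h}|-1] \leq 1/n^{w}$. If $H > h$, then $|S^{h}| \geq 2$. By Markov's inequality applied to $|S^{h}|-1$, we get $\Pr[H > h] \leq n^{-w}$. The extra $+2$ in the statement absorbs the rounding and the parity of $h$.

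\textbf{Main obstacle.} The hard step is Step 2. It must be shown that the long symbols and the boundary effects (runs adjacent to long symbols, symbols that become long after merging) contribute only $O(n/\mu(h))$. It must also be shown that the dependence created by repeated symbols sharing a single $\assign$ value does not destroy the $1/4$ merge probability. I would try first to follow Kociumaka et al.'s analysis of restricted recompression, adapted to the thresholds $\mu(h)$. Concretely, I would charge each unmerged short-short pair either to a merged neighbour or to a nearby long symbol.
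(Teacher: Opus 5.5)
Your outer frame is the paper's. The paper sets $|S^h|:=1$ for $h>H$ and takes $\mathbb{E}[|S^h|]<1+4n/\mu(h+1)$ from Lemma~V.10 of Kociumaka et al. It then applies Markov's inequality to $|S^h|-1$ at $h=\lceil 2(w+1)\log_{8/7}(4n)+2\rceil$, where $\mu(h+1)\ge(4n)^{w+1}$. Since $\mu(2k+1)=(8/7)^k$, your target is exactly that cited lemma, provided you state it for $|S^{2k}|-1$ (as a bound on $|S^{2k}|$ it is false for large $k$, because $|S^h|\ge 1$).

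The gap is that you try to re-derive this bound, and Step~2 is asserted rather than proved. Worse, the recurrence you assert does not give the bound you draw from it. Write $a_k=\mathbb{E}[\Phi_{2k}]$. Then $a_{k+1}\le\tfrac78 a_k+c\,n(7/8)^k$ unrolls to $a_k\le(7/8)^k n+c\,k\,n(7/8)^{k-1}$. The additive long-symbol term decays at exactly the contraction rate, so the sum picks up a factor $k=\Theta(\log n)$. Through Markov this gives only $H\le 2(w+1)\log_{8/7}(4n)+\Theta(\log\log n)$, not the stated threshold.

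The missing idea is that the potential must contract strictly faster than $\mu$ grows. The base $8/7$ sits below the rate $4/3$ at which pair compression shrinks your potential, and that gap is what absorbs the long-symbol term. Your potential can deliver this:
\begin{itemize}
\item Since $\mu(2k+1)=\mu(2k)$, run compression creates no new short symbols. So $\Phi_{2k+1}\le\Phi_{2k}$.
\item Short symbols adjacent in $S^{2k+1}$ come from distinct maximal runs, so they are distinct nonterminals. Each short--short adjacency is therefore merged with probability exactly $1/4$. By linearity of expectation, the sharing of one $\assign$ value among repeated symbols is irrelevant, so that worry in your plan is moot.
\item New short--short adjacencies in $S^{2k+2}$ arise only at boundaries touching a symbol of $S^{2k+1}$ whose length lies in $(\mu(2k+1),\mu(2k+2)]$. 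There are fewer than $2n/\mu(2k+1)$ of them.
\end{itemize}
Hence $a_{k+1}\le\tfrac34 a_k+2n(7/8)^k$. Induction gives $a_k\le 16n(7/8)^k$, since $\tfrac34\cdot 16+2=\tfrac78\cdot 16$. You would still have to track constants, because the slack at the stated threshold is only about a factor $4^{w+1}$. That is why citing the constant-$4$ bound, as the paper does, is the clean route.

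Step~3 also has an off-by-one. With $x=(w+1)\log_{8/7}(4n)$, your $h=2(\lceil x\rceil+1)$ can exceed $\lceil 2x+2\rceil$ by one. A tail bound at a larger $h$ says nothing about the stated threshold, so the remark that the $+2$ absorbs rounding is wrong as written. With constant $4$, $k=\lceil x\rceil$ already suffices, and then $2k\le\lceil 2x+2\rceil$.
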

\begin{proof}
See Appendix~\ref{app:tree_height}.
\end{proof}
For a user-defined constant $w \geq 2$, 
we assume $H \leq \lceil 2(w+1)\log_{8/7}(4n)+2 \rceil$ throughout this paper. 
By Lemma~\ref{lem:tree_height}, 
this bound holds with high probability.

Let $M$ denote the number of nonterminals in $\mathcal{V}$ excluding those that produce a single nonterminal.
By Proposition V.19 of~\cite{9961143}, $\mathbb{E}[M] = \mathcal{O}(\delta \log \tfrac{n \log \sigma}{\delta \log n})$. 
Using this bound on $M$, Section~\ref{sec:component} represents the restricted recompression RLSLP as a DAG that achieves $\delta$-optimal space in expectation.

%% file: tex/4_0_dag.tex
\section{Dynamic RR-index}\label{sec:component}
\begin{figure}[t]
\makebox[\linewidth][c]{%
\includegraphics[scale=0.55]{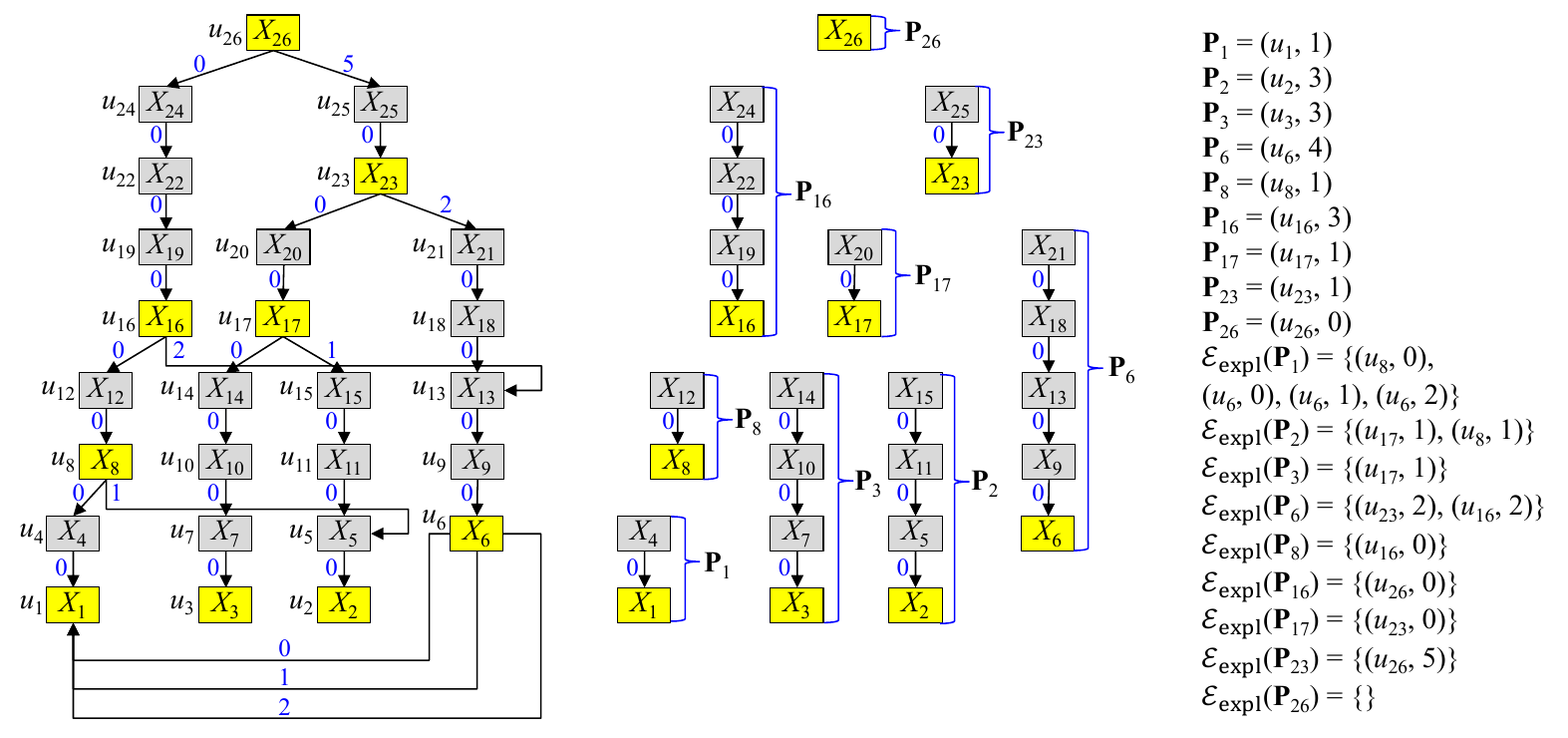}
}
\caption{(Left) DAG for the derivation tree of Figure~\ref{fig:restricted_recompression}: each rectangle is a node $u_{i}$ with nonterminal $X_{i}$ (yellow = explicit, gray = implicit), and the blue number above each edge is its label. (Center) The nine paths obtained by removing edges whose source is explicit. (Right) Each path together with its associated set of inter-path edges. For simplicity, each edge is represented as a pair consisting of its source and edge label.}
\label{fig:rrdag}
\end{figure}
The \emph{dynamic RR-index} is built on the restricted recompression RLSLP $\mathcal{G}^{R} = (\mathcal{V}, \Sigma, \mathcal{D}, E)$ introduced in the previous section.
It consists of two components: (i) a directed acyclic graph (DAG) representing the derivation tree, and (ii) a set of points in two-dimensional space corresponding to explicit nodes in the DAG. The details of these two components are described in the following subsections.

\subsection{DAG Representation of the Derivation Tree and Path Decomposition}

\subparagraph{DAG representation.}
The derivation tree of $\mathcal{G}^{R}$ is represented as a directed acyclic multigraph~\cite{DBLP:conf/mfcs/TanimuraNBIT17,DBLP:journals/dam/NishimotoIIBT20} $\dagraph = (\mathcal{U}, \mathcal{E}, \mathrm{src}, \mathrm{dst}, \mathcal{L}_U, \mathcal{L}_E)$ obtained by merging nodes labeled with the same nonterminal: $\mathcal{L}_U : \mathcal{U} \to \mathcal{V}$ is a bijection between nodes and nonterminals, $\mathcal{E}$ is the set of directed edges; for $e \in \mathcal{E}$, $\mathrm{src}(e)$ is its source (corresponding to a parent node in the derivation tree) and $\mathrm{dst}(e)$ its destination (a child node), and the outgoing edges of $u_i$ are in one-to-one correspondence with those of any node labeled $\mathcal{L}_U(u_i)$ in the derivation tree. The label $\mathcal{L}_E(e) \in \{0,1,\ldots,n-1\}$ is $j-i$, where $i$ and $j$ are the starting positions of the substrings derived from $\mathrm{src}(e)$ and $\mathrm{dst}(e)$. 
%Let $d^+(u)$ denote the outdegree of $u$. 
Let $d^+(u)$ denote the outdegree of $u$, counting parallel edges with multiplicity.
Figure~\ref{fig:rrdag} (left) illustrates this for Figure~\ref{fig:restricted_recompression}.

\subparagraph{Path decomposition.}
We partition $\mathcal{U}$ into
$\mathcal{U}_{\expl} = \{ u \in \mathcal{U} \mid d^+(u) \neq 1 \}
\qquad\text{and}\qquad
\mathcal{U}_{\impl} = \mathcal{U} \setminus \mathcal{U}_{\expl}$.
Nodes in $\mathcal{U}_{\expl}$ are called \emph{explicit}, and nodes in $\mathcal{U}_{\impl}$ are called \emph{implicit}.
After removing every edge whose source is explicit, the resulting subgraph $\dagraph'$ consists of $M$ disjoint directed paths, each ending at an explicit node (see Figure~\ref{fig:rrdag}, center). 

For each explicit node $u \in \mathcal{U}_{\expl}$, let $\mathbb{P}_u$ denote the directed path in $\dagraph'$ that ends at $u$.
Writing $\langle u, k \rangle$ for the implicit node $k$ edges before $u$ on this path, $\mathbb{P}_u$ has the form
\[
\mathbb{P}_u : \langle u,|\mathbb{P}_u| \rangle \rightarrow \langle u,|\mathbb{P}_u|-1\rangle \rightarrow \cdots \rightarrow \langle u,1\rangle \rightarrow u.
\]
Therefore, $\mathbb{P}_{u}$ can be encoded as the pair $(u, |\mathbb{P}_u|)$.

%where $\ell_u \ge 0$ is the number of edges in $\mathbb{P}_u$ (if $\ell_u = 0$, the path consists of the single node $u$).
The \emph{path decomposition} of $\dagraph$ is the set $\pathset = \{ \mathbb{P}_u \mid u \in \mathcal{U}_{\expl} \}$, so $|\pathset| = |\mathcal{U}_{\expl}| = M$ (see Section~\ref{sec:recompression}).
The directed edges of $\dagraph$ are partitioned into two kinds: \emph{intra-path edges}, which are the edges of $\dagraph'$ (equivalently, edges between consecutive nodes on the same path in $\pathset$), and \emph{inter-path edges}, whose source and destination lie on different paths in $\pathset$.

\subsection{Point Representation of Explicit Nodes}\label{subsubsec:range_report}
\begin{figure}[t]
 \begin{center}
		\includegraphics[scale=0.6]{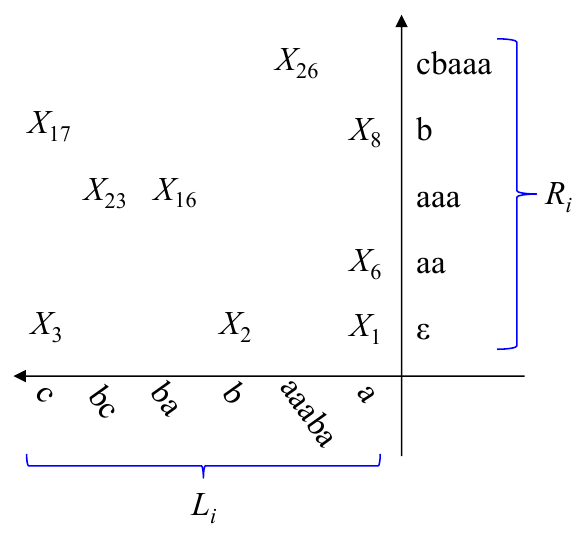}
	  \caption{Set $\mathcal{P}$: each nonterminal corresponds to a point in two-dimensional space.}
\label{fig:grid}
 \end{center}
\end{figure}

To support efficient locate queries, we map each explicit node $u_i \in \mathcal{U}_{\expl}$ to a point $p_i$ in two-dimensional space with $x$-coordinate $L_i$ and $y$-coordinate $R_i$, defined from the production rule of $X_i = \mathcal{L}_U(u_i)$ as follows: if $X_i \rightarrow X_j X_k$ then $L_i = \reverse(\val(X_{j}))$ and $R_i = \val(X_{k})$; if $X_i \rightarrow c$ for $c \in \Sigma$ then $L_i = c$ and $R_i = \varepsilon$; if $X_i \rightarrow (X_j)^{d}$ with $d \geq 2$ then $L_i = \reverse(\val(X_{j}))$ and $R_i = \val((X_j)^{d-1})$.
Let $\mathcal{P}$ be the resulting point set, and let $\mathcal{X}$ and $\mathcal{Y}$ be $\mathcal{U}_{\expl}$ sorted lexicographically by $L_i$ and $R_i$, with ties broken by $u_i$. 
This point representation will later allow us to find the relevant explicit nodes
for locate queries by two-dimensional range reporting, i.e., by queries that return all
points in a given axis-aligned rectangle. 
Figure~\ref{fig:grid} illustrates $\mathcal{P}$, $\mathcal{X}$, and $\mathcal{Y}$ for the DAG in Figure~\ref{fig:rrdag}.
%Primary occurrences of a pattern can then be found by \emph{range reporting queries}, i.e., queries returning all points in a given axis-aligned rectangle. 

\subsection{Dynamic Data Structures for the DAG and Point Set}\label{subsec:data_structures}
The DAG $\dagraph$ is indexed by $\mathscr{D}_{\DAG}(\pathset)$ and the point set $\mathcal{P}$ by $\mathscr{D}_{\grid}(\mathcal{P})$.

\subparagraph{Data structure for path decomposition.}
$\mathscr{D}_{\DAG}(\pathset)$ consists of a doubly linked list and two types of hash tables~\cite{DBLP:journals/siamcomp/DietzfelbingerKMHRT94}.
Intra-path edges are implicit in the list structure; inter-path edges are stored in per-path hash tables.
For each explicit node $u \in \mathcal{U}_{\expl}$ with $X_i := \mathcal{L}_U(u)$, 
the list stores one element associated with the endpoint $u$ that records: 
a pair $(u, |\mathbb{P}_u|)$ for recovering $\mathbb{P}_u$, 
the production rule $X_i \rightarrow \expr_i$, the height of $u$, $|\val(X_i)|$, and 
$\assign(\mathcal{L}_{U}(u_{j}))$ for each node $u_{j}$ on $\mathbb{P}_{u}$ 
(occupying $\mathcal{O}(H)$ bits since $|\mathbb{P}_u| \leq H+1$ and each label is $2$ bits).
For each path $\mathbb{P}_u \in \pathset$, 
the set $\mathcal{E}_{\expl}(\mathbb{P}_u) = \{ e \in \mathcal{E} \mid \mathrm{src}(e) \in \mathcal{U}_{\expl},\; \mathrm{dst}(e) \text{ lies on } \mathbb{P}_u \}$ is indexed by a per-path hash table whose entry for $e$ has key $\mathrm{src}(e)$ and value $\mathrm{dst}(e)$ (only distinct key-value pairs are stored). 
For a fixed path, an explicit source has at most one distinct child on that path; hence $\mathrm{src}(e)$ is sufficient as the key, and $\mathrm{dst}(e)$ is stored as the value. 
The keys are also linked in a doubly linked list sorted by the height of the corresponding nodes and indexed by a \emph{list indexing data structure}~\cite{DBLP:conf/wads/Dietz89}. 
It requires $\mathcal{O}(kB)$ additional bits and supports random access queries on the doubly linked list in $\mathcal{O}(\log k)$ time, where $k$ is the number of elements in the list; each insertion or deletion of an element takes amortized $\mathcal{O}(\log k)$ time.

A global hash table indexes the $M$ production rules with key $\expr_i$ (encoded in $\mathcal{O}(B)$ bits) and value $X_i$.

\subparagraph{Data structure for point set.}
We index $\mathcal{P} \subseteq \mathcal{X} \times \mathcal{Y}$ by Blelloch's dynamic data structure $\mathscr{D}_{\grid}(\mathcal{P})$~\cite{DBLP:conf/soda/Blelloch08} 
for range reporting and point insertion/deletion, which
uses $\mathcal{O}(M B)$ bits of space and requires $\mathcal{O}(1)$-time comparison of elements of $\mathcal{X}$ and $\mathcal{Y}$. 
We maintain \emph{order maintenance data structures}~\cite{DBLP:conf/stoc/DietzS87} on $\mathcal{X}$ and $\mathcal{Y}$ to provide $\mathcal{O}(1)$-time comparison and $\mathcal{O}(\log M)$-time insertion/deletion, and list indexing data structures for $\mathcal{O}(\log M)$-time access; each coordinate is a pointer to the corresponding list element of $\pathset$, so each point uses $\mathcal{O}(B)$ bits. Each list element of $\mathscr{D}_{\DAG}(\pathset)$ in turn stores pointers into the list indexing and order maintenance structures for $\mathcal{X}$ and $\mathcal{Y}$.
Range reporting takes $\mathcal{O}(\log M + \rocc(\log M / \log \log M))$, where $\rocc$ is the number of reported points, and a point insertion or deletion takes amortized $\mathcal{O}(\log M)$ time. 

\subsection{Space Complexity}
The doubly linked list contributes $\mathcal{O}(M(H+B))$ bits; the global hash table, the per-path hash tables (which hold $\mathcal{O}(M)$ distinct key-value pairs in total since every DAG node has at most two distinct children) together with their list indexing structures, and $\mathscr{D}_{\grid}(\mathcal{P})$ with its auxiliary structures each contribute $\mathcal{O}(MB)$ bits. Summing, the dynamic RR-index occupies $\mathcal{O}(M(H+B))$ bits.
Since $B = \Theta(\log n)$ and $\mathbb{E}[M] = \mathcal{O}(\delta \log \frac{n \log \sigma}{\delta \log n})$, the total space is $\delta$-optimal in expectation under the assumption that $H = \mathcal{O}(\log n)$.

\section{Locate Query Algorithm}\label{sec:locate_query}
This section presents the algorithm answering a locate query for a pattern $P$ of length $m$, establishing the baseline complexity bound (Theorem~\ref{thm:locate}); Section~\ref{sec:improvement} then improves the per-occurrence factor to $\log n / \log \log n$ via ancestor-path caching (Theorem~\ref{thm:locate_improved}).
For $m \geq 2$, the algorithm proceeds in four steps, each detailed in a subsection below: from the derivation tree we extract a specific \emph{core}~\cite{DBLP:journals/jda/MaruyamaNKS13} of $P$ as the \emph{popped sequence}~\cite{DBLP:journals/mst/DuysterK26,DBLP:conf/cpm/I17}, whose run-length encoding has $\rho$ runs, build $\rho$ axis-aligned rectangles in $\mathcal{P}$, apply range reporting to obtain the \emph{primary occurrences}~\cite{DBLP:journals/dam/NishimotoIIBT20}, and convert these into $\Occ(T, P)$ by traversing the DAG to the root. The case $m = 1$ is handled separately in Section~\ref{sec:locate_impl}.

\subsection{Step (i): Computing the Popped Sequence}
This step computes the popped sequence of $P$, a specific core of $P$ whose run-length encoding has $\mathcal{O}(\min\{m, H\})$ runs.
We first define a core and then present the popped sequence as a specific construction.

Let $Q = X_1, X_2, \ldots, X_d$ be a sequence of nonterminals (each $X_i \in \mathcal{V}$) such that $\val(Q) = P$.
We say that $Q$ is \emph{embedded} in the derivation tree of $\mathcal{G}^{R}$ at position $s \in \{1, 2, \ldots, n\}$ if, for every $i \in \{1, 2, \ldots, d\}$, some node labeled $X_i$ in the derivation tree derives the substring $T[s_{i}..s_{i} + |\val(X_{i})| - 1]$ of $T$, where $s_i := s + \sum_{b=1}^{i-1} |\val(X_b)|$.
Such an embedding witnesses an occurrence of $P$ at position $s$ (i.e., $s \in \Occ(T, P)$).
We call $Q$ a core of $P$ if $Q$ is embedded at every position $s \in \Occ(T, P)$.

Since $P$ can have multiple cores, we use a specific one, called the \emph{popped sequence} of $P$, defined as follows.
Restricted recompression is applied to $P$ itself, where production rules are shared with $\mathcal{G}^{R}$ whenever the right-hand sides agree.
At each height, the leftmost and rightmost segments of the current sequence are \emph{popped} (i.e., removed and recorded) unless they consist of two distinct nonterminals, and the other segments are replaced by their parent nonterminals as in Section~\ref{sec:recompression}.
The popped sequence $Q$ of $P$ is the concatenation of the popped left segments in bottom-up order, followed by the popped right segments in top-down order; $\val(Q) = P$ holds, and $Q$ is a core of $P$.
Its run-length encoding consists of $\rho$ runs, where $\rho = \mathcal{O}(\min\{m, H\})$ and $\mathbb{E}[\rho] = \mathcal{O}(\log m)$.
%Since a core of $P$ is not unique, we use a specific one called the popped sequence of $P$, whose $\rho$ runs satisfy $\rho = \mathcal{O}(\min\{m, H\})$ and $\mathbb{E}[\rho] = \mathcal{O}(\log m)$.
The popped sequence is computed from the derivation tree of $\mathcal{G}^R$ by the algorithm of~\cite{DBLP:journals/mst/DuysterK26}, invoked in phase (i) of Section~\ref{subsubsec:locate2}. 
See Appendix~\ref{app:popped_sequence} for the precise definition and properties of the popped sequence. 
The next step uses these $\rho$ runs to construct $\rho$ axis-aligned rectangles; Lemma~\ref{lem:core_rectangle} ensures this suffices for finding all primary occurrences.

\subsection{Step (ii): Computing the Rectangles from the Popped Sequence}

\subparagraph{Primary occurrences and their rectangles.}
A primary occurrence~\cite{DBLP:journals/dam/NishimotoIIBT20} of $P$ for a split position $\ell \in \{1,2,\ldots,m-1\}$ is a pair $(u_i, \ell)$ with $u_i \in \mathcal{U}_{\expl}$ such that (i) $\reverse(P[1..\ell])$ is a prefix of $L_i$ and (ii) $P[\ell+1..m]$ is a prefix of $R_i$; that is, $u_i$ is an explicit node where an occurrence of $P$ crosses the first boundary between substrings derived from $u_i$'s production rule.
We write $\pOcc(P,\ell) \subseteq \mathcal{U}_{\expl}$ for the set of such $u_i$ at split position $\ell$.
Recall from Section~\ref{subsubsec:range_report} that $u_{i}$ is mapped to the point $p_{i}$ with $x$-coordinate $L_{i}$ and $y$-coordinate $R_{i}$. The two prefix conditions constrain $L_i$ and $R_i$, respectively, so $\pOcc(P, \ell)$ coincides with the set of points of $\mathcal{P}$ in the axis-aligned rectangle $\rect_{\ell} = [x, x'] \times [y, y']$, where $x, x' \in \mathcal{X}$ are the smallest and largest elements whose $x$-coordinate has $\reverse(P[1..\ell])$ as a prefix, and $y, y' \in \mathcal{Y}$ are the analogous bounds for $P[\ell+1..m]$ (if any of the four does not exist, $\rect_{\ell}$ is empty).

\subparagraph{Reducing to $\rho$ rectangles.}
Because the popped sequence is a core of $P$, only split positions at its run boundaries or at the end of its first nonterminal can yield a nonempty $\pOcc(P, \ell)$, as formalized below.
\begin{lemma}[Generalization of Lemma 7 in \cite{DBLP:journals/dam/NishimotoIIBT20}]\label{lem:core_rectangle}
Let $Q$ be a core of a pattern $P$ of length $m \geq 2$, and let $Q[1]$ be the first nonterminal of $Q$.
Also, let $\rho$ be the number of runs in the run-length encoding of $Q$.
For each $i \in \{1, \ldots, \rho-1\}$, let $\phi(i)$ denote the length of the string derived from the first $i$ runs.
Then, for any integer $\ell \in \{1, 2, \ldots, m-1\}$,
if $\ell \not \in \{ |\val(Q[1])| \} \cup \{ \phi(i) \mid i = 1, 2, \ldots, \rho - 1 \}$,
then $\pOcc(P, \ell) = \emptyset$.
%where, $Q[1]$ is the first nonterminal of $Q$.
\end{lemma}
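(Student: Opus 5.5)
We argue by contraposition: assume $u_i \in \pOcc(P,\ell)$ and show that $\ell$ is either $|\val(Q[1])|$ or some run boundary $\phi(k)$. The first step turns the primary occurrence into an actual occurrence of $P$ in $T$ that crosses a fixed node boundary. Pick any node $v$ of the derivation tree labeled $X_i = \mathcal{L}_U(u_i)$, and let $T[b..b']$ be the substring it derives. By the definitions of $L_i$ and $R_i$, the first child of $v$ (the one labeled $X_j$) derives a block ending at some position $c$, and the rest of $v$ begins at $c+1$. Conditions (i) and (ii) then say that $P$ occurs at $s = c-\ell+1$. For the run-length rule $X_i \to (X_j)^d$, $R_i = \val((X_j)^{d-1})$ is still the suffix of $\val(X_i)$ after the first child, so this works the same way. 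For $X_i \to c$, $R_i = \varepsilon$ cannot have the nonempty prefix $P[\ell+1..m]$, so that case does not arise. Consequently $s \in \Occ(T,P)$, and the occurrence straddles position $c \mid c+1$, which is the boundary between the first child of $v$ and the remainder of $v$.

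Next we use that $Q$ is a core: $Q$ is embedded at $s$. So each $Q[t]$ is realized by some derivation-tree node $v_t$ deriving exactly its block, and these blocks tile $T[s..s+m-1]$. Consider the boundary $c$, which lies strictly inside the occurrence because $1 \le \ell \le m-1$. The key structural fact is that the derivation tree is a tree: any two nodes derive intervals that are either nested or disjoint. Since $v$'s interval contains $c$ and $c+1$, and so does the occurrence, we compare $v$ with the nodes $v_t$.

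\emph{Case A: some $v_t$ derives an interval containing both $c$ and $c+1$.} Then $v_t$ and $v$ are nested. If $v_t$ were a strict descendant of the first child of $v$, or of a later child, its interval would lie on one side of $c$, which is impossible. Hence $v_t$ is $v$ itself or an ancestor of $v$, so $\val(X_i)$ is contained in $\val(Q[t])$ starting at offset $b$. The occurrence begins at $s \le b$'s predecessor boundary only when $t=1$; if $t \ge 2$, the block of $v_t$ lies entirely within the occurrence, so $v$'s whole interval lies inside $T[s..s+m-1]$.

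\emph{Case B: $c$ is a boundary between consecutive blocks of $v_t$ and $v_{t+1}$.} Then $\ell$ is the length of $\val(Q[1..t])$, and it remains to show that $t$ is either $1$ or the end of a run.

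The main obstacle is ruling out the two bad configurations: Case A with $t \ge 2$, and Case B with $Q[t] = Q[t+1]$ inside a run and $t \ge 2$. Both are handled by the same observation. A primary occurrence at $u_i$ gives one particular placement, but $\pOcc$ membership depends only on the strings $L_i$ and $R_i$. So we are free to pick the node $v$ and then consider where $\val(Q[2..d])$ falls. In Case B inside a run, $X_j$ equals the repeated symbol $Y = Q[t]$, because the last block before $c$ is $\val(Y)$ and the tree is height-balanced with nested intervals. This forces $X_i$ to be a run $Y^e$ or to contain one, and also $\val(Q[t+1])$ starts right after the first child. That contradicts the restricted recompression rule that maximal runs are collapsed into a single segment at even heights; at odd heights no two equal symbols are paired, since a pair needs distinct $\assign$ values. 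To make this rigorous I would argue via the embedding: the nodes $v_t$ and $v_{t+1}$ are siblings-in-sequence at a common height $h$ in $S^h$, and $v$ is an ancestor at a higher height whose first child ends exactly between them. So in $S^h$ the two equal symbols $Y$ are split into different segments at some level, which contradicts the maximality of runs in the even-height segmentation.

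For Case A with $t \ge 2$, $v$ lies strictly inside the block of $v_t$, which is a proper suffix-free portion of the occurrence. This is actually compatible with an occurrence, so I would instead pick the canonical embedding guaranteed by the popped-sequence/core property. With that choice, a boundary of any node lying strictly inside the region covered by $Q[2..d]$ coincides with a block boundary, which reduces to Case B. I expect this step — matching node boundaries of arbitrary ancestors with block boundaries of the core — to need the most care. I would handle it by following the original proof of Lemma 7 in \cite{DBLP:journals/dam/NishimotoIIBT20}, replacing ``signature encoding'' by the height-balanced segmentation of Section~\ref{sec:recompression}.
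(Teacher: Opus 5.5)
Your skeleton matches the paper's. You realize the primary occurrence as an occurrence of $P$ at $s=c-\ell+1$ straddling the first boundary $c$ of a node $v$ labeled $X_i$, embed the core there, and separate ``$c$ strictly inside a block'' from ``$c$ between the blocks of $Q[t]$ and $Q[t+1]$''. But there is a genuine gap: neither case is closed, and the missing idea is the same in both. The prefix conditions give $\ell\le|L_i|$ and $m-\ell\le|R_i|$, so the interval $[b,b']$ of $v$ contains the \emph{whole} occurrence $[s,s+m-1]$.

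\textbf{Case A.} This fact rules the configuration out at once. Every $v_t$ has its interval inside $[s,s+m-1]\subseteq[b,b']$. So $v_t$ is a proper descendant of $v$, hence lies below one child and on one side of $c$, unless its interval equals $[b,b']$, which forces $|Q|=1$. You instead treat Case A with $t=1$ as harmless, but it is not: there $\ell$ is smaller than every element of the allowed set. You also call $t\ge 2$ ``compatible with an occurrence'' and fall back on an unspecified canonical embedding of the popped sequence, which is not available for the arbitrary core in the statement. Note that $|Q|\ge 2$ is genuinely needed. For $P=T$, the single-symbol sequence $(E)$ is a core, yet $E$ itself gives a primary occurrence. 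The paper applies the lemma only to popped sequences, which satisfy $|Q|\ge 2$ by the fourth property, and its own first step tacitly uses this.

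\textbf{Case B.} The contradiction you claim does not exist. Suppose $Q[t]=Q[t+1]=Y$ and $c$ separates $v_t$ from $v_{t+1}$. Then $v_t$ lies under the first child of $v$ and $v_{t+1}$ under the second, so $v=\mathrm{lca}(v_t,v_{t+1})$. These two nodes are consecutive in $S^{h'}$, where $h'$ is the height of $Y$, and local consistency makes their LCA a run node all of whose children derive $\val(Y)$. That is a perfectly legal configuration, not a violation of run maximality. Your argument never uses $t\ge 2$, so it would also exclude $t=1$, which really occurs; that is why $|\val(Q[1])|$ is in the set. The paper finishes quantitatively: $L_i=\reverse(\val(Q[t]))$, so $\ell\le|L_i|=|\val(Q[t])|$, whereas $\ell=|\val(Q[1..t])|$, which forces $t=1$.

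If you want the local-consistency property rigorously for restricted recompression, it is not enough that two equal symbols are never paired with each other. You must also exclude one copy being paired with an outside neighbour at an odd height. This holds for the following reason:
\begin{itemize}
\item Adjacent equal symbols at an odd height $2k+1$ arise only from singleton segments with $\assign=-1$ at height $2k$.
\item Since $\mu(2k)=\mu(2k+1)$, their parents again have $\assign=-1$.
\item They therefore rise in lockstep through unary parents until some even height merges them into one run.
\end{itemize}
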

\begin{proof}
See Appendix~\ref{app:core_rectangle}.   
\end{proof}
Applying Lemma~\ref{lem:core_rectangle} to the popped sequence $Q$, 
this step constructs at most $\rho$ distinct rectangles $\rect_{|\val(Q[1])|}, \rect_{\phi(1)}, \rect_{\phi(2)}, \ldots, \rect_{\phi(\rho-1)}$.
%this step constructs only the $\rho$ rectangles $\rect_{|\val(Q[1])|}, \rect_{\phi(1)}, \rect_{\phi(2)}, \ldots, \rect_{\phi(\rho-1)}$.

\subsection{Step (iii): Range Reporting for Primary Occurrences}
We issue one range-reporting query using Blelloch's data structure for $\mathcal{P}$ (Section~\ref{subsec:data_structures}) for each of these at most $\rho$ distinct rectangles;
%We apply a range reporting query using Blelloch's data structure for $\mathcal{P}$ (Section~\ref{subsec:data_structures}) to each of the $\rho$ rectangles; 
since $\pOcc(P, \ell)$ coincides with the points in $\rect_\ell$, 
taking the union of the reported pairs $(u_i,\ell)$ over the considered split positions yields the complete set of primary occurrences.
%taking the union yields the complete set $\bigcup_{\ell=1}^{m-1} \pOcc(P, \ell)$ of primary occurrences. 

\subsection{Step (iv): Converting Primary Occurrences to \texorpdfstring{$\Occ(T, P)$}{Occ(T, P)}}
For $u_{i} \in \mathcal{U}$, let $\vOcc(u_{i}) \subseteq \{1, 2, \ldots, n\}$ be the starting positions in $T$ of the occurrences of $\mathcal{L}_U(u_i)$ in the derivation tree (i.e., $j \in \vOcc(u_{i})$ if and only if $\mathcal{L}_U(u_i)$ is embedded at position $j$). The following lemma converts primary occurrences into $\Occ(T, P)$ using $\vOcc$, distinguishing two cases by the production rule of $u_i$.
\begin{lemma}[\cite{DBLP:journals/dam/NishimotoIIBT20}]\label{lem:pOcc_and_locate}
Recall that $L_{i}$ is the $x$-coordinate of $u_{i}$ (defined in Section~\ref{subsubsec:range_report}).
For each $\ell \in \{1,2,\ldots,m-1\}$ and each $u_{i} \in \pOcc(P, \ell)$, define $\mathcal{Z}_{\ell}(u_{i})$ according to the production rule of $u_{i}$:
\begin{itemize}
    \item if the rule has the form $X_{i} \rightarrow X_{j} X_{k}$ with two distinct nonterminals,
    \[ \mathcal{Z}_{\ell}(u_{i}) = \{ q + |L_{i}| - \ell \mid q \in \vOcc(u_{i}) \}; \]
    \item if the rule has the form $X_{i} \rightarrow (X_{k})^{d}$,
    \[ \mathcal{Z}_{\ell}(u_{i}) = \Bigl\{ q + j \cdot |L_{i}| - \ell \;\Big|\; q \in \vOcc(u_{i}),\ j \in \{1, 2, \ldots, \lfloor \tfrac{|\val(X_{i})| - (m - \ell)}{|L_{i}|} \rfloor\} \Bigr\}. \]
\end{itemize}
Then,
\[
    \Occ(T, P) = \bigcup_{\ell = 1}^{m-1} \bigcup_{u_{i} \in \pOcc(P, \ell)} \mathcal{Z}_{\ell}(u_{i}).
\]
Moreover, the sets $\mathcal{Z}_{\ell}(u_i)$ are pairwise disjoint over all pairs $(u_i,\ell)$ with $u_i \in \pOcc(P,\ell)$.
%where the ${}+1$ terms convert offsets in $\vOcc$ to positions in $\Occ(T, P)$.
\end{lemma}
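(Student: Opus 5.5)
The plan is to prove the two inclusions separately and then derive disjointness from a uniqueness argument. Throughout, fix an occurrence $s \in \Occ(T,P)$ and consider the derivation tree of $\mathcal{G}^R$. The interval $[s, s+m-1]$ has length $m \geq 2$, so there is a unique lowest node $v$ of the derivation tree whose derived interval contains $[s,s+m-1]$. Since $m\ge 2$, $v$ is not a leaf. If $v$ has a single child, that child derives the same interval, contradicting minimality. Hence $v$ is labeled by some $X_i$ whose rule has the form $X_j X_k$ or $(X_k)^d$, and $u_i := \mathcal{L}_U^{-1}(X_i)$ is explicit.

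\textbf{Inclusion $\supseteq$.} Let $u_i \in \pOcc(P,\ell)$ and $q \in \vOcc(u_i)$. Then $T[q..q+|\val(X_i)|-1] = \val(X_i)$.

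In the binary case, $\val(X_i) = \reverse(L_i)\, R_i$. By the prefix conditions, $P[1..\ell]$ is a suffix of $\reverse(L_i)$ and $P[\ell+1..m]$ is a prefix of $R_i$. Therefore $P$ occurs at $q + |L_i| - \ell$.

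In the run case, $\val(X_i) = \val(X_k)^d$ with $|L_i| = |\val(X_k)|$. For each $j \ge 1$, the boundary after the $j$-th copy is preceded by $\reverse(L_i)$. It is followed by $\val(X_k)^{d-j}$, which is a prefix of $R_i = \val(X_k)^{d-1}$, so it agrees with $R_i$ on its length. Thus $P$ occurs at $q + j|L_i| - \ell$ whenever $P[\ell+1..m]$ fits inside the remaining part, i.e.\ $j|L_i| + (m-\ell) \le |\val(X_i)|$. This is exactly the stated range of $j$. Note that $\ell \le |L_i|$ is guaranteed by the prefix condition.

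\textbf{Inclusion $\subseteq$.} Take $v$ as above, with $q$ the start of $v$'s interval, so $q \in \vOcc(u_i)$.

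In the binary case, the boundary between the children of $v$ lies at $q+|L_i|$. By minimality of $v$, this boundary lies strictly inside the occurrence. Set $\ell = q + |L_i| - s \in \{1,\ldots,m-1\}$. The prefix conditions then hold, so $u_i \in \pOcc(P,\ell)$ and $s \in \mathcal{Z}_\ell(u_i)$.

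In the run case, the occurrence crosses at least one copy boundary. Let $j$ be the first crossed boundary, at $q+j|L_i|$, and set $\ell = q + j|L_i| - s$. The part before this boundary lies inside a single copy, so $\reverse(P[1..\ell])$ is a prefix of $L_i$. The part after starts at the beginning of copy $j+1$, so $P[\ell+1..m]$ is a prefix of $\val(X_k)^{d-j}$ and hence of $R_i$. Since the occurrence ends by $q+|\val(X_i)|-1$, $j$ lies in the stated range. Hence $s \in \mathcal{Z}_\ell(u_i)$.

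\textbf{Disjointness.} This is the main obstacle: I must show that each $s$ arises from exactly one triple $(u_i,\ell,q,j)$. My approach is to show that any triple generating $s$ must coincide with the canonical one constructed above from the lowest node $v$.

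Suppose $s$ arises from $(u_i,\ell,q,j)$. The node occurrence of $X_i$ at $q$ contains $[s,s+m-1]$. In the binary case, the occurrence crosses the unique child boundary of that node. In the run case, $j$ must be the first crossed boundary: otherwise $\ell > |L_i|$, contradicting the requirement that $\reverse(P[1..\ell])$ be a prefix of $L_i$. Any node strictly below that node lies within a single child, and so cannot contain the crossing. Hence that node is the lowest node containing the interval, which is $v$.

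Consequently $v$ determines $u_i$ and $q$, and the first crossed boundary determines $\ell$ and $j$. The key point I would check carefully is that the height-balanced tree makes "the node at start $q$ labeled $X_i$" unambiguous as the lowest covering node. This holds because the interval has length $\ge 2$, so chains of unary rules never tie.
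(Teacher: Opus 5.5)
Your proof is correct. The paper gives no proof of this lemma; it imports it from \cite{DBLP:journals/dam/NishimotoIIBT20}. Your argument is the standard one behind that citation: every occurrence has a unique lowest covering node in the derivation tree, and that node's child boundary (the first crossed copy boundary in the run case) determines $(u_i,\ell,q,j)$, which yields both inclusions and the disjointness.

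Your closing worry is unnecessary, and it needs neither height-balance nor the length-$\ge 2$ remark. Any node labeled $X_i$ starting at $q$ that contains the occurrence, with the occurrence straddling one of its child boundaries, is automatically the lowest covering node. That node is unique because the covering nodes form a chain.
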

By Lemma~\ref{lem:pOcc_and_locate}, taking the union $\bigcup_{\ell = 1}^{m-1} \bigcup_{u_{i} \in \pOcc(P, \ell)} \mathcal{Z}_{\ell}(u_{i})$ yields $\Occ(T, P)$, completing the locate query.

\subsection{Implementation and Complexity Analysis}\label{sec:locate_impl}
This subsection gives the full locate algorithm for $m \geq 1$, after first setting up the auxiliary queries it relies on.

\subsubsection{Auxiliary queries on the DAG}\label{subsubsec:aux_queries}
For the locate algorithm we use two ingredients: Lemma~\ref{lem:basic_queries} for traversing nodes and edges of the DAG, and random access, LCE, and reversed LCE queries on $T$ for computing rectangle coordinates in phase (ii) of Section~\ref{subsubsec:locate2}.
Although inter-path edges are stored without labels (Section~\ref{subsec:data_structures}), each label can be reconstructed in $\mathcal{O}(1)$ time from the source's production rule (Observation~\ref{ob:comp_edges} in Appendix~\ref{app:basic_queries}); each value in $\vOcc(u_i)$ then equals $1$ plus the sum of edge labels along the corresponding root-to-$u_i$ path. The following lemmas summarize the resulting queries.

\begin{lemma}\label{lem:basic_queries}
Using the doubly linked list for $\pathset$, 
we can support the following four operations on a given node $u_i \in \mathcal{U}$ 
lying on a path $\mathbb{P} \in \pathset$:
{\renewcommand{\labelenumi}{(\roman{enumi})}
\begin{enumerate}
\item return the following information in $\mathcal{O}(1)$ time:
(A) the production rule $X_i \rightarrow \expr_i$ corresponding to $u_i$,
(B) $|\val(X_i)|$,
(C) the height of $u_i$,
(D) $\assign(X_i)$, and
(E) the distinct children of $u_i$;
\item return the outgoing edges of $u_i$ with edge labels in $\mathcal{O}(1)$ time per edge;
\item return the directed edges with edge labels in $\mathcal{E}_{\expl}(\mathbb{P})$ in $\mathcal{O}(1)$ time per edge;
\item return $\vOcc(u_i)$ in $\mathcal{O}(H|\vOcc(u_i)|)$ time if $u_i$ is explicit.
\end{enumerate}
}
\end{lemma}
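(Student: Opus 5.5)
The plan is to treat a node $u_i$ on a path $\mathbb{P} = \mathbb{P}_u$ as the pair $(u, k)$ with $u_i = \langle u, k\rangle$ ($k=0$ meaning $u_i=u$). Every operation will be answered from the single list element stored for the endpoint $u$, plus the per-path hash table for $\mathcal{E}_{\expl}(\mathbb{P})$. The key structural fact I would establish first concerns the implicit nodes. An implicit node has outdegree exactly one, so by the rule forms (ii)--(iv) its rule must be of form (iii), $X_i \rightarrow X_j$; rules (ii) and (iv) give outdegree at least two and rule (i) gives outdegree zero. Hence every implicit node $\langle u,k\rangle$ derives the same string as $u$. Moreover, restricted recompression creates such a unary rule exactly when a nonterminal forms a singleton segment at some height. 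Therefore $\langle u,k\rangle$ is a chain of unary copies of $\mathcal{L}_U(u)$, with height equal to the height of $u$ plus $k$.

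For (i), I would handle the two cases separately. If $k=0$, items (A)--(E) are read directly from the list element: it stores the rule, the height, $|\val(X_i)|$, and $\assign$, and the distinct children are read off the rule. If $k>0$, the rule is $X_i \rightarrow \mathcal{L}_U(\langle u,k-1\rangle)$ and $|\val(X_i)| = |\val(\mathcal{L}_U(u))|$. The height is the height of $u$ plus $k$, and $\assign(X_i)$ is the $k$-th entry of the stored $\mathcal{O}(H)$-bit array, which is accessed in $\mathcal{O}(1)$ time by word operations. The single child is $\langle u,k-1\rangle$.

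For (ii), the edges are generated from the rule. Each edge label is computed as in Observation~\ref{ob:comp_edges}. For $X_i \rightarrow X_jX_k$ the labels are $0$ and $|\val(X_j)|$. For $(X_j)^d$ the labels are $t|\val(X_j)|$, enumerated one per step. A unary edge has label $0$. This gives $\mathcal{O}(1)$ time per edge. For (iii), I would iterate over the doubly linked list of keys of the per-path hash table. Each key is an explicit source $v$, and the hash table returns its distinct destination $\langle u,k'\rangle$ on $\mathbb{P}$. The parallel edges from $v$ to that destination (for instance the $d$ copies in a run rule) and their labels are then enumerated from $v$'s rule using (ii). Each edge costs $\mathcal{O}(1)$, and each key yields at least one edge, so the iteration overhead is charged to the output.

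For (iv), the plan is a recursive upward traversal. Every root-to-$u_i$ path in the derivation tree climbs the intra-path chain of $\mathbb{P}$ (label $0$) up to the top node $\langle u,|\mathbb{P}|\rangle$. Above that node it follows an inter-path edge from an explicit parent $v$. So $\vOcc(u_i)$ is the union, over the edges $e$ listed by (iii) on $\mathbb{P}$, of $\{q + \mathcal{L}_E(e) \mid q \in \vOcc(v)\}$ with $v = \mathrm{src}(e)$; the root contributes $\{1\}$. I would recurse on each $v$, which is explicit, so its own path's hash table applies.

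The main obstacle is the time bound $\mathcal{O}(H|\vOcc(u_i)|)$. I would argue it by viewing the recursion as the tree of all distinct root-to-$u_i$ paths. Each branch ends in a distinct occurrence, and branches correspond bijectively to occurrences because distinct derivation-tree nodes give distinct paths. Each recursion node has depth at most $H$, since heights strictly increase along explicit ancestors. The recursion tree thus has $|\vOcc(u_i)|$ leaves and depth $\le H$, hence $\mathcal{O}(H|\vOcc(u_i)|)$ nodes. By (iii), each enumerated edge is a distinct branch point and costs $\mathcal{O}(1)$. Every enumerated edge leads to at least one occurrence, so there is no wasted work.
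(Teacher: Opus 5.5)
Your proposal is correct and follows the paper's proof essentially step by step. Items (A)--(E) are read from the endpoint's list element, using the offset $k$ for implicit nodes. Edges and labels come from the rule as in Observation~\ref{ob:comp_edges}, (iii) iterates over the per-path hash-table keys, and (iv) recurses over $\mathcal{E}_{\expl}(\mathbb{P})$ with the $\mathcal{O}(H|\vOcc(u_i)|)$ bound coming from root-to-$u_i$ paths of length at most $H$.

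One small slip: a root-to-$u_i$ path need not climb to the top node $\langle u,|\mathbb{P}|\rangle$, because inter-path edges can enter $\mathbb{P}$ at intermediate nodes. Your recurrence already ranges over all of $\mathcal{E}_{\expl}(\mathbb{P})$, so the algorithm and the bound are unaffected.
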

\begin{proof}
See Appendix~\ref{app:basic_queries}.
\end{proof}

\subparagraph{Random access, LCE, and reversed LCE queries.}
A random access query returns $T[i]$; an LCE (resp.\ reversed LCE) query returns the length of the longest common prefix (resp.\ suffix) of $T[i..n]$ and $T[j..n]$ (resp.\ $T[1..i]$ and $T[1..j]$); one argument may instead be a suffix (resp.\ prefix) of $P$ once the popped sequence of $P$ has been computed.
Kempa and Kociumaka's algorithms (Theorems 5.24 and 5.25 in \cite{DBLP:journals/corr/abs-2308-03635}) traverse the derivation tree of the restricted recompression RLSLP to answer these in $\mathcal{O}(H)$ time; Lemma~\ref{lem:basic_queries} enables the same traversal in our representation, and a modification computes a substring of length $\ell$ starting at position $i$ in $\mathcal{O}(H+\ell)$ time.
Phase (ii) of Section~\ref{subsubsec:locate2} uses them to compare, via binary search, a suffix of $P$ 
with the coordinate $R_{s}$ of a candidate node $u_{s}$ (or the reverse of a prefix of $P$ with $L_{s}$) in $\mathcal{O}(H + \log M)$ time per comparison. 
See Appendix~\ref{app:lce} for details of the three queries.

\subsubsection{Full locate algorithm and complexity bound}\label{subsubsec:locate2}
Combining Steps (i)--(iv) with the auxiliary queries of Section~\ref{subsubsec:aux_queries} yields the following.

\begin{theorem}\label{thm:locate}
Given a pattern $P$ of length $m \geq 1$, a locate query returning $\Occ(T, P)$ can be answered in expected 
\[
\mathcal{O}\bigl(m + \log m (H + \log M)\log M + \occ(H + \log M / \log \log M)\bigr)
\]
time using the dynamic RR-index. 
\end{theorem}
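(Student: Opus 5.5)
The proof follows the four steps, bounding each phase of the implementation, and treats $m=1$ separately.

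\emph{Case $m=1$.} The occurrences of the single character $c=P[1]$ are exactly the leaves labeled by the nonterminal $X_i \rightarrow c$. That nonterminal is found in $\mathcal{O}(1)$ expected time through the global hash table keyed by $\expr_i$. It is explicit (outdegree $0$), so Lemma~\ref{lem:basic_queries}(iv) returns $\vOcc(u_i)=\Occ(T,P)$ in $\mathcal{O}(H\cdot\occ)$ time.

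\emph{Phase (i), $m \geq 2$.} We compute the popped sequence $Q$ with the algorithm of~\cite{DBLP:journals/mst/DuysterK26}. It simulates restricted recompression on $P$, looking up each right-hand side in the global hash table. The total work is bounded by the sum of the lengths of the levels of $P$, which is $\mathcal{O}(m)$ in expectation, because each level shrinks geometrically in expectation except for the $\mathcal{O}(1)$ popped boundary segments. This yields $\rho$ runs with $\mathbb{E}[\rho]=\mathcal{O}(\log m)$. As $Q$ is built, we also record prefix lengths, so that each $\phi(i)$ and $|\val(Q[1])|$ is available in $\mathcal{O}(1)$ time.

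\emph{Phase (ii).} For each of the at most $\rho$ split positions given by Lemma~\ref{lem:core_rectangle}, we find $x,x'$ by binary search over $\mathcal{X}$ and $y,y'$ by binary search over $\mathcal{Y}$, using the list indexing structures for $\mathcal{O}(\log M)$-time access. Each comparison is an LCE or reversed LCE query between a suffix of $P$ (or the reverse of a prefix) and $R_s$ (or $L_s$). This costs $\mathcal{O}(H+\log M)$ per comparison, as stated in Section~\ref{subsubsec:aux_queries}. With $\mathcal{O}(\log M)$ comparisons per rectangle, phase (ii) takes $\mathcal{O}(\rho(H+\log M)\log M)$ time, whose expectation is $\mathcal{O}(\log m(H+\log M)\log M)$.

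This is the main obstacle. Comparing a suffix of $P$ against $R_s$ must use LCE queries in which one argument is a suffix of $P$, represented through the popped sequence, and the other is a text substring identified from $u_s$'s production rule and a position in $\vOcc(u_s)$. To get such a position we need one root-to-node traversal, or alternatively we can descend from $u_s$ directly using Kempa--Kociumaka style traversal. We must argue this costs $\mathcal{O}(H)$ per comparison rather than $\mathcal{O}(H\log M)$. I would rely on the stated per-comparison bound and check that the $\log M$ term there covers accessing the list element.

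\emph{Phase (iii).} Range reporting on the $\le\rho$ rectangles costs $\mathcal{O}(\rho\log M+\rocc\log M/\log\log M)$. This is absorbed by the phase (ii) term plus the per-occurrence term.

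\emph{Phase (iv).} For each reported $(u_i,\ell)$, we enumerate $\vOcc(u_i)$ by Lemma~\ref{lem:basic_queries}(iv) in $\mathcal{O}(H|\vOcc(u_i)|)$ time and output $\mathcal{Z}_\ell(u_i)$ as in Lemma~\ref{lem:pOcc_and_locate}. For run rules, each $q\in\vOcc(u_i)$ yields at least one output, since $u_i\in\pOcc(P,\ell)$ guarantees that the range of $j$ is nonempty. Because the sets $\mathcal{Z}_\ell(u_i)$ are pairwise disjoint, $\rocc\le\occ$ and $\sum|\vOcc(u_i)|\le\occ$. Hence phases (iii) and (iv) together cost $\mathcal{O}(\occ(H+\log M/\log\log M))$.

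Summing the four phases and taking expectations gives the stated bound.
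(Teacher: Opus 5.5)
Your proposal is correct and follows the paper's proof essentially phase by phase: the same $m=1$ shortcut via the global hash table and Lemma~\ref{lem:basic_queries}(iv), and the same $\mathcal{O}(\rho(H+\log M)\log M)$ binary-search bound resting on the stated $\mathcal{O}(H+\log M)$ per-comparison cost (your ``obstacle'' does not arise, because $R_s$ is spelled out by the child of $u_s$, so no position from $\vOcc(u_s)$ is needed, and the $\log M$ term is just the list-indexing access); it also uses the same range-reporting and $\vOcc$-based conversion, where your remark that the $j$-range of a run rule is nonempty makes explicit the inequality $|\vOcc(u_i)| \leq |\mathcal{Z}_{\ell}(u_i)|$ that the paper uses tacitly. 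The only thing you omit is the paper's abort rule in phase (i): if a right-hand side needed while building the popped sequence is absent from $\mathcal{G}^{R}$, then $P$ does not occur and $\emptyset$ is returned (note also that unary rules are resolved through the path structure, not the global hash table).
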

\begin{proof}
If $m = 1$, write $P = c$ with $c \in \Sigma$. In $\mathcal{O}(1)$ time, the global hash table tells us whether some $X_i \in \mathcal{V}$ satisfies $X_i \rightarrow c$ (and returns it if so); otherwise $\Occ(T, P) = \emptyset$. The corresponding DAG node $u_i$ is explicit, so Lemma~\ref{lem:basic_queries}(iv) enumerates $\Occ(T, P) = \vOcc(u_i)$ in $\mathcal{O}(\occ \cdot H)$ time, which is absorbed into the stated bound.

For $m \geq 2$, we execute the four phases corresponding to Steps (i)--(iv).

\emph{Phase (i).}
Compute the popped sequence $Q$ of $P$, whose run-length encoding has $\rho$ runs, by the algorithm of~\cite{DBLP:journals/mst/DuysterK26} (Appendix~\ref{app:pop}) in expected $\mathcal{O}(m)$ time.
If it fails, $\Occ(T, P) = \emptyset$ and we return $\emptyset$.

\emph{Phase (ii).}
Compute the $\rho$ rectangles $\rect_{|\val(Q[1])|}, \rect_{\phi(1)}, \ldots, \rect_{\phi(\rho-1)}$ by binary search on $\mathcal{X}$ and $\mathcal{Y}$ for each of the four coordinates. 
Each of the $\mathcal{O}(\log M)$ comparisons takes $\mathcal{O}(H + \log M)$ time (Section~\ref{subsubsec:aux_queries}), 
so this phase runs in $\mathcal{O}(\rho(H + \log M)\log M)$ time with $\mathbb{E}[\rho] = \mathcal{O}(\log m)$.

\emph{Phase (iii).}
Let $\Lambda = \{|\val(Q[1])|\} \cup \{\phi(1),\ldots,\phi(\rho-1)\}$. 
Here, $|\val(Q[1])|$ is a valid split position (i.e., $|\val(Q[1])| \in \{ 1, 2, \ldots, m-1 \}$) 
because $|Q| \geq 2$ for $m \geq 2$ (see the fourth property of the popped sequence in Appendix~\ref{app:popped_sequence}). 
Issue one range-reporting query on $\mathcal{P}$ for each $\ell \in \Lambda$,
and attach $\ell$ to every reported node. This obtains
\[
\{(u_i,\ell) \mid \ell \in \Lambda,\ u_i \in \pOcc(P,\ell)\}
\]
in $\mathcal{O}(\rho \log M + \occ_{\mathsf{prim}}(\log M / \log \log M))$ time,
where $\occ_{\mathsf{prim}} \leq \occ$ is the number of primary occurrences.

%Issue $\rho$ range reporting queries on $\mathcal{P}$ to obtain $\bigcup_{\ell=1}^{m-1} \pOcc(P, \ell)$ in $\mathcal{O}(\rho \log M + \occ_{\mathsf{prim}}(\log M / \log \log M))$ time, where $\occ_{\mathsf{prim}} \leq \occ$ is the number of primary occurrences.

\emph{Phase (iv).}
For each $(u_{i}, \ell)$ with $u_{i} \in \pOcc(P, \ell)$, compute $\vOcc(u_{i})$ in $\mathcal{O}(H|\vOcc(u_{i})|)$ time (Lemma~\ref{lem:basic_queries}(iv)) and return $\mathcal{Z}_{\ell}(u_{i})$ as defined in Lemma~\ref{lem:pOcc_and_locate}.
Since $\sum_{\ell}\sum_{u_{i} \in \pOcc(P, \ell)} |\mathcal{Z}_{\ell}(u_{i})| = \occ$, phase (iv) takes $\mathcal{O}(\occ \cdot H)$ time.

Summing the four phase bounds and absorbing the $\occ_{\mathsf{prim}}(\log M / \log \log M)$ term of phase (iii) into phase (iv) via $\occ_{\mathsf{prim}} \leq \occ$ gives the stated complexity.
\end{proof}

%% file: tex/4_1_improvement.tex
\subsection{Faster Locate via Ancestor-Path Caching}\label{sec:improvement}
\begin{figure}[t]
\makebox[\linewidth][c]{%
		\includegraphics[scale=0.5]{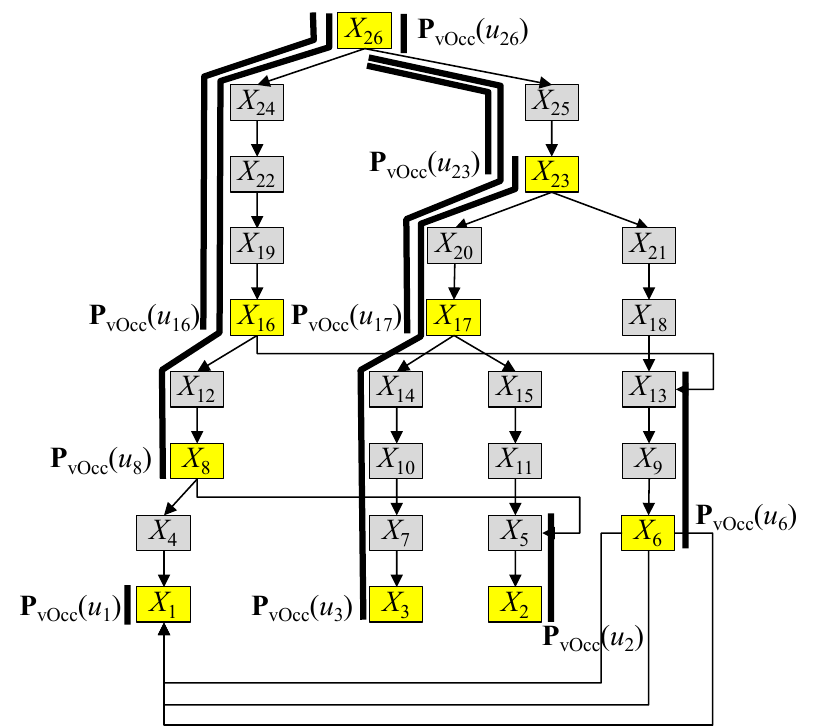}
}
\caption{Each path $\mathbb{P}_{\vOcc}(u_{i})$ on the DAG of Figure~\ref{fig:rrdag} with $\lceil \alpha + \log B \rceil = 6$: the bold line starting at $X_{i}$ is $\mathbb{P}_{\vOcc}(u_{i})$, and $u_{i}$ is the explicit node for $X_{i}$.}
\label{fig:vOccPath}
\end{figure}

In the locate algorithm of Theorem~\ref{thm:locate}, phase (iv) spends $\mathcal{O}(H)$ time per reported occurrence, so the term $\occ \cdot H$ dominates the running time whenever $\occ$ is large.
We reduce this factor to $H/\log B$ by caching, at each explicit node, a short ancestor path in the DAG. 
This yields the main result of this section.

\begin{theorem}\label{thm:locate_improved}
Given a pattern $P$ of length $m \geq 1$, a locate query returning $\Occ(T, P)$ can be answered in expected
\[
\mathcal{O}\bigl(m + \log m \log^{2} n + \occ (\log n / \log \log n) \bigr)
\]
time using the dynamic RR-index augmented with the additional data structures described below. 
This bound holds under the assumption that $H = \mathcal{O}(\log n)$. 
\end{theorem}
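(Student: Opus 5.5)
We reuse the four phases in the proof of Theorem~\ref{thm:locate} and change only phase (iv). With $H=\mathcal{O}(\log n)$, $B=\Theta(\log n)$ and $M\le n^{\mathcal{O}(1)}$, the non-occurrence terms of Theorem~\ref{thm:locate} become $\mathcal{O}(m+\log m\log^2 n)$. The range-reporting term $\occ_{\mathsf{prim}}(\log M/\log\log M)$ is already $\mathcal{O}(\occ\log n/\log\log n)$. So it suffices to enumerate $\vOcc(u_i)$ for an explicit $u_i$ in $\mathcal{O}(|\vOcc(u_i)|\cdot H/\log B)$ time. For the repetition rule this cost is paid once per $q$, but it is charged to the $\ge 1$ elements of $\mathcal{Z}_\ell(u_i)$ generated by each $q$, so the total stays within $\mathcal{O}(\occ \cdot H/\log B)$.

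\textbf{Additional structure.} For every explicit node $u_i$ we store an ancestor path $\mathbb{P}_{\vOcc}(u_i)$ of length about $\lceil\alpha+\log B\rceil$ (Figure~\ref{fig:vOccPath}). We build it by walking upward from $u_i$ in the DAG as long as the current node has a \emph{unique} parent occurrence, i.e.\ as long as every root-to-node path passes through the same parent edge. This happens exactly when the node appears once as a child of a single implicit or explicit parent, and it can be read off from the per-path hash tables and the intra-path list. We stop after $\lceil\alpha+\log B\rceil$ edges or at the first branching node. We cache the endpoint $v$ and the accumulated label offset, which takes $\mathcal{O}(B)$ bits per explicit node, so the total space remains $\mathcal{O}(M(H+B))$ bits.

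\textbf{Enumeration.} The enumeration of Lemma~\ref{lem:basic_queries}(iv) explores the ancestor DAG of $u_i$ as a tree of root-to-$u_i$ paths, whose leaves correspond bijectively to $\vOcc(u_i)$. After the modification, every jump either follows a cached path of $\Theta(\log B)$ edges in $\mathcal{O}(1)$ time, or lands on a node where the path branches, meaning the node has at least two parent occurrences. The plan is a charging argument on the tree of expanded paths. Every internal node with at least two children can be charged to leaves, and the number of such nodes is less than the number of leaves. Every maximal unary chain of length $k$ is traversed in $\mathcal{O}(1+k/\log B)$ jumps. Since each leaf-to-root path has length at most $H$, the total work is $\mathcal{O}(|\vOcc(u_i)|(1+H/\log B))$. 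With $H=\mathcal{O}(\log n)$ and $\log B=\Theta(\log\log n)$, this gives $\mathcal{O}(\log n/\log\log n)$ per occurrence.

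\textbf{Main obstacle.} The hardest step is making the unary chains jumpable in $\mathcal{O}(1)$ time per $\log B$ edges while staying in $\mathcal{O}(B)$ bits per explicit node. Unary chains can pass through implicit nodes on other paths. Implicit nodes carry no storage of their own, so a cached path must start at an explicit node, or at the entry point of a path, and the first step up from any node may have to go through the node's path $\mathbb{P}_u$. I would handle this by also caching, per path, a jump from the path's top to its nearest branching ancestor, skipping up to $\log B$ levels. I would then show that any unary stretch decomposes into $\mathcal{O}(1+k/\log B)$ such cached jumps plus intra-path moves, and the intra-path moves are resolved by arithmetic on the pair $(u,|\mathbb{P}_u|)$. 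I would also need these cached paths to be maintainable under updates, but that belongs to Section~\ref{sec:update} and is not part of this theorem.
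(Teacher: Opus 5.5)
Your plan matches the paper's. The field you cache is exactly $\mathbb{P}_{\vOcc}(u_i)$, encoded as $(u_j,W)$. Your charging on the tree of root-to-$u_i$ paths (branchings paid by leaves, a unary stretch of length $k$ paid by $\mathcal{O}(1+k/\log B)$ jumps) is the bound on $Z'$ in Appendix~\ref{app:z_bound}.

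The gap is the step you flag as the main obstacle: what the recursion does at the top node $u_j$ of a cached path, which may be implicit and stores nothing. Your proposed fix does not close it.
A per-path jump from a path's top is redundant. All edges into the top of a path come from explicit sources, so if there is only one, its source already stores its own cached path.
Arithmetic on $(u,|\mathbb{P}_u|)$ moves you along a path but does not tell you where inter-path edges enter it. Finding them by walking edge by edge can cost $\Theta(H)$ per occurrence, since a path in $\pathset$ may contain $\Theta(H)$ implicit nodes.
Your worry about unary chains crossing implicit nodes of \emph{other} paths is also unfounded. An implicit node's only child is its successor on its own path. So any implicit node on an upward chain from $u_i$ lies on the path of an explicit node already on that chain, and the field stored at $u_i$ covers it.

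The missing observation is the recurrence of Lemma~\ref{lem:additional_vOcc}. Let $\mathbb{P}_j\in\pathset$ be the path containing $u_j$.
Every node of the cached path other than $u_j$ has in-degree one, and if $u_j$ is implicit the cached path runs down $\mathbb{P}_j$ to its explicit endpoint.
Hence no inter-path edge enters $\mathbb{P}_j$ below $u_j$, and every root-to-$u_i$ path enters $\mathbb{P}_j$ at or above $u_j$ through an edge of $\mathcal{E}_{\expl}(\mathbb{P}_j)$. Because intra-path edges have label $0$,
\[
\vOcc(u_i)=\bigcup_{e\in\mathcal{E}_{\expl}(\mathbb{P}_j)}\{\,W+\mathcal{L}_E(e)+q \mid q\in\vOcc(\mathrm{src}(e))\,\}
\]
(and $\vOcc(u_i)=\{W+1\}$ if $\mathcal{E}_{\expl}(\mathbb{P}_j)=\emptyset$). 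This has three consequences:
\begin{itemize}
\item the recursion is only ever invoked on explicit nodes, each of which has its cache;
\item the whole zero-label stretch above $u_j$ is skipped at $\mathcal{O}(1)$ cost per edge of $\mathcal{E}_{\expl}(\mathbb{P}_j)$, via Lemma~\ref{lem:basic_queries}(iii);
\item each call either climbs at least $\lceil\alpha+\log B\rceil+1$ DAG edges of a unary stretch, or is charged to a branching node, a child edge of a branching node, or a leaf of the tree of root-to-$u_i$ paths.
\end{itemize}
With this recurrence in place of the extra per-path cache, your charging argument goes through unchanged. So does the rest of your accounting, including charging the per-$q$ cost of a run rule to the $\ge 1$ elements of $\mathcal{Z}_\ell(u_i)$ that $q$ generates.
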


Section~\ref{subsubsec:improved_phase4} proves Theorem~\ref{thm:locate_improved} by replacing the algorithm for computing $\vOcc(u_{i})$ (i.e., Lemma~\ref{lem:basic_queries}(iv)) with an improved algorithm that runs in $\mathcal{O}(|\vOcc(u_{i})| \lceil H / \log B \rceil)$ time. 

We also propose a complementary heuristic for phase (ii); 
it does not change the asymptotic complexity but is expected to reduce the running time in practice. 
See Appendix~\ref{app:practical_phase2} for details.

\subparagraph{Additional data structures.}
Fix tunable constants $\alpha, \beta = \mathcal{O}(1)$ controlling the length of a cached ancestor path and a stored coordinate prefix, respectively.
For each path $\mathbb{P} \in \pathset$ with explicit endpoint $u_{i} \in \mathcal{U}_{\expl}$, we augment the corresponding list element with two $\mathcal{O}(B)$-bit fields:
(i) the pair $(u_j, W)$ encoding the longest path $\mathbb{P}_{\vOcc}(u_{i})$ ending at $u_i$ with $|\mathbb{P}_{\vOcc}(u_{i})| \leq \lceil \alpha + \log B \rceil$ such that every node except its first node $u_{j}$ has exactly one parent in $\dagraph$, where $W$ is the sum of edge labels along the path (if $u_i$ does not have exactly one parent, $\mathbb{P}_{\vOcc}(u_i) = (u_i)$ with $u_j = u_i$ and $W = 0$); this field is used in Section~\ref{subsubsec:improved_phase4}.
%(ii) The first $\lceil \beta B / \log \sigma \rceil$ characters of $L_i$ and of $R_i$; used by the phase (ii) heuristic of Appendix~\ref{app:practical_phase2}.
(ii) the prefixes of $L_i$ and $R_i$ of length at most $\lceil \beta B / \log \sigma \rceil$; this field is used in the phase-(ii) heuristic.
Figure~\ref{fig:vOccPath} illustrates $\mathbb{P}_{\vOcc}(u_i)$.
The $M$ list elements together add $\mathcal{O}(MB)$ bits, so the doubly linked list still fits in $\mathcal{O}(M(H+B))$ bits.

\subsubsection{Speeding up computation of \texorpdfstring{$\vOcc(u_i)$}{vOcc}}\label{subsubsec:improved_phase4}
%\subsubsection{Speeding up phase (iv)}\label{subsubsec:improved_phase4}
The cached paths $\mathbb{P}_{\vOcc}(u_{i})$ let us compute $\vOcc(u_i)$ via the following recurrence.
\begin{lemma}\label{lem:additional_vOcc}
For an explicit node $u_{i} \in \mathcal{U}_{\expl}$, let $(u_{j}, W)$ be the encoding of $\mathbb{P}_{\vOcc}(u_{i})$ and let $\mathbb{P}_{j} \in \pathset$ be the path containing $u_{j}$. Then
\[
\vOcc(u_{i}) =
\begin{cases}
\bigcup_{e \in \mathcal{E}_{\expl}(\mathbb{P}_{j})} \{ W + \mathcal{L}_{E}(e) + q \mid q \in \vOcc(\mathrm{src}(e)) \} & \text{if } \mathcal{E}_{\expl}(\mathbb{P}_{j}) \neq \emptyset, \\
\{ W+1 \} & \text{otherwise.}
\end{cases}
\]
\end{lemma}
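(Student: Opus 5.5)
The proof follows how occurrences in the derivation tree correspond to root-to-node paths in the DAG. An element $q \in \vOcc(u)$ is exactly $1$ plus the sum of edge labels along some root-to-$u$ path in $\dagraph$, and distinct paths correspond to distinct tree nodes labeled $\mathcal{L}_U(u)$. So $\vOcc(u)$ is the set of values $1+\sum$ (labels) over all root-to-$u$ paths. The plan is to split every root-to-$u_i$ path into three pieces: a prefix ending at an explicit node, one inter-path edge into $\mathbb{P}_j$, and a suffix running along the cached path.

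\emph{Step 1: the cached path is forced.} Write $\mathbb{P}_{\vOcc}(u_i)$ as $u_j = v_0 \to v_1 \to \cdots \to v_t = u_i$. Every $v_s$ with $s \geq 1$ has exactly one parent in $\dagraph$, hence exactly one incoming edge. Since $\dagraph$ is a multigraph, I will read "one parent" as one incoming edge counted with multiplicity; if parallel edges $(v_{s-1},v_s)$ could exist, each would give a distinct occurrence, so this reading is needed and I will check it against the definition. By backward induction from $u_i$, every root-to-$u_i$ path must pass through $u_j$ and then follow exactly these edges. The suffix contributes the fixed label sum $W$. So $\vOcc(u_i) = \{W + q \mid q \in \vOcc(u_j)\}$, and when $u_j = u_i$ this holds with $W=0$.

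\emph{Step 2: reduce $\vOcc(u_j)$ to explicit sources.} Let $\mathbb{P}_j$ be $w_L \to \cdots \to w_1 \to w_0$, where $w_0$ is explicit. I claim each node $w$ on $\mathbb{P}_j$ has every incoming edge either intra-path (from its predecessor on $\mathbb{P}_j$) or from an explicit source. Indeed, an implicit source $x$ has outdegree $1$, so its unique out-edge lies in $\dagraph'$, and $x$ and $w$ are consecutive on the same path of $\dagraph'$. Paths in $\dagraph'$ are disjoint, so any edge from an implicit node into $\mathbb{P}_j$ is intra-path, ending at the successor of that node on $\mathbb{P}_j$. Now take any root-to-$u_j$ path and trace it backward from $u_j$. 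It stays on $\mathbb{P}_j$ via intra-path edges until it first enters $\mathbb{P}_j$ through an edge $e$ whose source is explicit, so $e \in \mathcal{E}_{\expl}(\mathbb{P}_j)$. The only alternative is that it starts inside $\mathbb{P}_j$, which happens only when the root lies on $\mathbb{P}_j$; in that case the root must be the first node $w_L$, since the root has no incoming edges. Intra-path edges run upward along $\mathbb{P}_j$, so once $e$ enters $\mathbb{P}_j$ at $\mathrm{dst}(e)$, the continuation from $\mathrm{dst}(e)$ to $u_j$ is uniquely determined.

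\emph{Step 3: assembling labels; this is the main obstacle.} For Step 2 to give the displayed formula, the label sum from $\mathrm{dst}(e)$ down to $u_j$ must be absorbed. The statement adds only $\mathcal{L}_E(e)$, with no term for the segment from $\mathrm{dst}(e)$ to $u_j$. I therefore need to show that intra-path edge labels are $0$. An implicit node has one outgoing edge, so its rule is $X \to Y$, and its child starts at the same position, giving label $0$. Then every root-to-$u_i$ path contributes $1+\sum(\text{prefix to } \mathrm{src}(e)) + \mathcal{L}_E(e) + 0 + W$, which is $q + \mathcal{L}_E(e) + W$ with $q \in \vOcc(\mathrm{src}(e))$. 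Conversely, every such triple $(e, q, \text{path})$ extends to a root-to-$u_i$ path, which gives the equality in the first case.

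For the second case, $\mathcal{E}_{\expl}(\mathbb{P}_j)=\emptyset$. Every node has a parent except the root, so the root must lie on $\mathbb{P}_j$ as its first node, and the unique root-to-$u_i$ path consists of zero-label intra-path edges followed by the cached path. This gives $\vOcc(u_i)=\{1+W\}$. Care is needed with the boundary case where $u_j$ itself is the root.
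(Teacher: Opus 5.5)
Your route is the paper's. Its proof is just Observation~\ref{ob:comp_edges} (intra-path labels are $0$) together with the fact that each element of $\vOcc(u)$ is $1$ plus the label sum of a root-to-$u$ path, and your Steps 1--3 spell this out. Your reading of ``exactly one parent'' as ``exactly one incoming edge, counted with multiplicity'' is the intended one: the Phase~3 maintenance algorithm extends the cached path across an inter-path edge only when $z_1 = 1$.

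However, the converse inclusion in Step~3 has a gap. You assert that every triple $(e,q,\text{path})$ extends to a root-to-$u_i$ path, but give no argument. This is exactly where the lemma differs from the recurrence in the proof of Lemma~\ref{lem:basic_queries}(iv). There, $u_i$ is explicit, hence the last node of its path, so every edge entering the path can reach it. Here, $u_j$ may be an implicit node in the middle of $\mathbb{P}_j$, while the union ranges over \emph{all} of $\mathcal{E}_{\expl}(\mathbb{P}_j)$. An edge $e$ with $\mathrm{dst}(e)$ strictly below $u_j$ on $\mathbb{P}_j$ has no continuation to $u_j$, because intra-path edges only point toward the explicit endpoint. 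Such an edge would contribute spurious values. Your Step~2 only shows that edges \emph{used} by root-to-$u_j$ paths enter at or above $u_j$.

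The missing argument is as follows. Read the cached path forward from $u_j$. While the current node is implicit, its unique out-edge is intra-path. So the cached path stays on $\mathbb{P}_j$ until its first explicit node, which must be the endpoint $w_0$ of $\mathbb{P}_j$; such a node exists because $u_i$ is explicit. Hence every node of $\mathbb{P}_j$ strictly below $u_j$ lies on the cached path and has exactly one incoming edge, namely the intra-path edge from its predecessor. So no edge of $\mathcal{E}_{\expl}(\mathbb{P}_j)$ lands there.

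There is a smaller omission of the same kind. In the first case you must also rule out that the root lies on $\mathbb{P}_j$; otherwise the root-to-$u_j$ route along $\mathbb{P}_j$ is not captured by the union. This holds because the root is explicit: its rule has at least two symbols since $|S^{H-1}| \geq 2$. So its path in $\pathset$ is the root alone, nothing enters it, and this is precisely the second case.

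With these two points added, the proof is complete. Note that the step you flagged as the main obstacle, zero intra-path labels, is the easy one.
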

\begin{proof}
Lemma~\ref{lem:additional_vOcc} follows from Observation~\ref{ob:comp_edges} in Appendix~\ref{app:basic_queries} and the property of edge labels stated in Section~\ref{subsubsec:aux_queries}.
\end{proof}

Guided by Lemma~\ref{lem:additional_vOcc}, the algorithm reads $(u_{j}, W)$ from the list element, follows the pointer at $u_j$ to the element representing $\mathbb{P}_j$, recovers $\mathcal{E}_{\expl}(\mathbb{P}_{j})$ with its edge labels via Lemma~\ref{lem:basic_queries}(iii), and either returns $\{W+1\}$ (if empty) or recursively computes $\vOcc(\mathrm{src}(e))$ and returns the union in the recurrence.
The total number $Z$ of recursive calls satisfies $Z = \mathcal{O}(|\vOcc(u_i)| \lceil H / \log B \rceil)$, since they decompose every root-to-$u_i$ path in the DAG into cached paths, inter-path edges, and zero-label paths. See Appendix~\ref{app:z_bound} for details. 
Substituting into Theorem~\ref{thm:locate} with $\log M = \mathcal{O}(\log n)$ and $B = \mathcal{O}(\log n)$ yields Theorem~\ref{thm:locate_improved} under $H = \mathcal{O}(\log n)$.

%% file: tex/7_update.tex
\section{Update Operations}\label{sec:update}
%\subsection{Insertion Operation}
\subsection{Insertion Operation}
%\label{sec:string_insertion}

Given a string $P' \in \Sigma^{+}$ of length $m' = \mathcal{O}(n)$ and a position $s \in \{1, 2, \ldots, n + 1\}$, the insertion operation produces $T'$ by inserting $P'$ into $T$ at position $s$ and updates the dynamic RR-index so that its RLSLP derives $T'$.
For simplicity, we focus on $2 \leq s \leq n$, so that $T'$ is the concatenation of $T[1..s-1]$, $P'$, and $T[s..n]$; the boundary cases $s = 1$ and $s = n+1$ are analogous.
Following the standard strategy of dynamic grammar-compressed data structures~\cite{DBLP:journals/iandc/NishimotoTT20,DBLP:journals/dam/NishimotoIIBT20,DBLP:journals/corr/abs-2404-07510,DBLP:conf/mfcs/TanimuraNBIT17}, the update proceeds in three phases (see Appendix~\ref{app:insertion_operation} for details).

\subparagraph{Phase 1: constructing the derivation tree of $\mathcal{G}^{R'}$.}
The first phase constructs the derivation tree of an RLSLP $\mathcal{G}^{R'} = (\mathcal{V}', \Sigma', \mathcal{D}', E')$ deriving $T'$ by restricted recompression, computing $(H'+1)$ sequences $S'^{0}, S'^{1}, \ldots, S'^{H'}$ bottom-up; production rules are shared with $\mathcal{G}^{R}$ whenever the right-hand sides agree.
For each nonterminal in $\mathcal{V}' \setminus \mathcal{V}$, we insert the corresponding node into the DAG and update the dynamic data structures.
A naive construction takes $\Omega(n + m')$ time since the tree has $n + m'$ leaves. We reduce this cost using the popped sequences $Q_{L}$ and $Q_{R}$ of $T[1..s-1]$ and $T[s..n]$: since each popped sequence is a core of its source, $Q_{L}$ and $Q_{R}$ are embedded in the derivation tree of $\mathcal{G}^{R'}$ at positions $1$ and $s+m'$ as subtrees that need not be reconstructed.
Concretely, (i) compute $Q_{L}$ and $Q_{R}$ by the algorithm of~\cite{DBLP:journals/mst/DuysterK26} in $\mathcal{O}(H)$ time; (ii) build $S'^{0}, \ldots, S'^{H'}$ in $\mathcal{O}(|\mathcal{T}_{\diff}|)$ time, where $\mathcal{T}_{\diff}$ collects (A) the ancestors of the embedded subtree roots, (B) the $m'$ leaves covering $T'[s..s+m'-1]$, and (C) their ancestors. Figure~\ref{fig:update_derivation_tree} illustrates the embeddings.
The phase runs in expected amortized $\mathcal{O}(|\mathcal{T}_{\diff}| + |\mathcal{I}_{\expl}| \max\{H, H', \log M\} \log M + |\mathcal{I}_{\impl}|)$ time, where $\mathcal{I}_{\expl}$ (resp.\ $\mathcal{I}_{\impl}$) is the set of explicit (resp.\ implicit) nodes inserted into the DAG.

\subparagraph{Phase 2: removing obsolete nonterminals.}
For each nonterminal in $\mathcal{V} \setminus \mathcal{V}'$, we remove the corresponding node from the DAG and update the dynamic data structures.
This phase takes expected amortized $\mathcal{O}(|\mathcal{R}_{\expl}| \log M + |\mathcal{R}_{\impl}|)$ time.
Here, $\mathcal{R}_{\expl}$ (resp.\ $\mathcal{R}_{\impl}$) is the set of explicit (resp.\ implicit) nodes whose nonterminals do not appear in the derivation tree of $\mathcal{G}^{R'}$.

\subparagraph{Phase 3: updating ancestor-path caches and coordinate prefixes.}
For each explicit node in $\mathcal{I}_{\expl} \cup \mathcal{U}_{\change}$, we update the additional information stored in the corresponding element of the doubly linked list for $\pathset$.
Here, $\mathcal{U}_{\change} \subseteq \mathcal{U}_{\expl}$ is the set of explicit nodes that have an ancestor in $\mathcal{I}_{\expl} \cup \mathcal{I}_{\impl} \cup \mathcal{R}_{\expl} \cup \mathcal{R}_{\impl}$ within distance $\lceil \alpha + \log B \rceil$ in the DAG. 
This phase takes $\mathcal{O}(W_{\max} + (|\mathcal{U}_{\change}| + |\mathcal{I}_{\expl}|) (H' + B))$ time,
where $W_{\max} = \max \{ |\mathcal{I}_{\expl}|, |\mathcal{I}_{\impl}|, |\mathcal{R}_{\expl}|, |\mathcal{R}_{\impl}|, |\mathcal{U}_{\change}| \}$.

\begin{figure}[t]
\makebox[\linewidth][c]{%
		\includegraphics[scale=0.55]{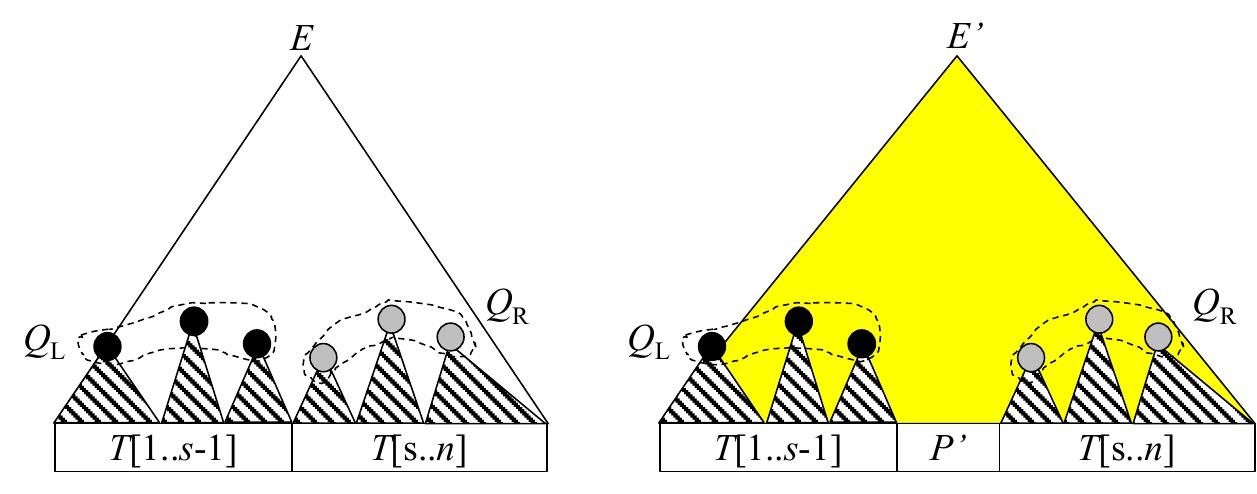}
}
\caption{Embeddings of $Q_{L}$ and $Q_{R}$ in the derivation trees of $\mathcal{G}^{R}$ and $\mathcal{G}^{R'}$ (start symbols $E$ and $E'$). The black/gray circles mark the nodes for $Q_{L}/Q_{R}$, hatched regions are the subtrees rooted there, and the yellow region is the part of the derivation tree of $\mathcal{G}^{R'}$ to be constructed.}
\label{fig:update_derivation_tree}
\end{figure}

\subparagraph{Correctness and running time.}
To ensure that $H' = \mathcal{O}(\log n)$, 
we declare the update to fail if 
$H' > \lceil 2(w+1)\log_{8/7}(4(n+m'))+2 \rceil$. 
By Lemma~\ref{lem:tree_height}, 
the update succeeds with high probability for a sufficiently large constant $w$.
%By Lemma~\ref{lem:tree_height}, the update fails---i.e., $H' > 2(w+1)\log_{8/7}(4(n+m'))+2$---with probability at most $1/n^{w}$, so it succeeds w.h.p.\ for sufficiently large $w$.
Since $H, H', B, \log M = \mathcal{O}(\log (n+m'))$ and $m' = \mathcal{O}(n)$, summing the three phases yields
\begin{equation*}
    \mathcal{O}(|\mathcal{T}_{\diff}| + (|\mathcal{I}_{\expl}| + |\mathcal{R}_{\expl}|) \log^{2} n + W_{\max} + |\mathcal{U}_{\change}| \log n).
\end{equation*}

The following lemma bounds $|\mathcal{T}_{\diff}|$, $|\mathcal{I}_{\expl}|$, $|\mathcal{I}_{\impl}|$, $|\mathcal{R}_{\expl}|$, $|\mathcal{R}_{\impl}|$, and $|\mathcal{U}_{\change}|$. 
\begin{lemma}\label{lem:ancestor_upper_bound}
(i) $|\mathcal{T}_{\diff}| = \mathcal{O}((m' + H) H')$, 
(ii) $|\mathcal{I}_{\impl}| = \mathcal{O}((m' + H) H')$, 
(iii) $|\mathcal{R}_{\impl}| = \mathcal{O}(H^{2})$, 
(iv) $|\mathcal{I}_{\expl}| = \mathcal{O}(m' + H)$, 
(v) $|\mathcal{R}_{\expl}| = \mathcal{O}(H)$,
and (vi) $|\mathcal{U}_{\change}| = \mathcal{O}(HB)$.
\end{lemma}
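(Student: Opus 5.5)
The plan is to prove all six bounds from one structural fact about restricted recompression. Because $Q_L$ and $Q_R$ are cores of $T[1..s-1]$ and $T[s..n]$, they are embedded in the derivation trees of both $\mathcal{G}^{R}$ and $\mathcal{G}^{R'}$. Each popped sequence has $\mathcal{O}(H)$ runs, and in fact $\mathcal{O}(1)$ popped segments per height. So the derivation tree of $\mathcal{G}^{R'}$ is the embedded subtrees together with a "frontier" part, and the frontier contains $\mathcal{O}(m' + H)$ nodes per height.

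\textbf{Items (i) and (ii).} I would first show that at every height $h$ the part of $S'^{h}$ not covered by the embedded subtrees has length $\mathcal{O}(m' + H)$. At height $0$ it consists of the $m'$ new leaves plus the $\mathcal{O}(1)$ popped symbols of that height from $Q_L$ and $Q_R$. Going up one height, segments are formed only from consecutive symbols, so this region can only shrink or absorb $\mathcal{O}(1)$ newly exposed popped symbols from each side. Summing over the $H'+1$ heights gives $|\mathcal{T}_{\diff}| = \mathcal{O}((m'+H)H')$. Every inserted node, implicit or explicit, labels some node of $\mathcal{T}_{\diff}$, because nodes inside embedded subtrees carry nonterminals that are already in $\mathcal{V}$. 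This gives (ii).

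\textbf{Item (iv).} To count $\mathcal{I}_{\expl}$ more sharply, I would charge each new explicit nonterminal to a position where its node in $\mathcal{T}_{\diff}$ has outdegree at least $2$. The nodes with outdegree at least $2$ in one horizontal frontier form a tree whose leaves are the $\mathcal{O}(m'+H)$ bottom symbols. Hence a tree with $\mathcal{O}(m'+H)$ leaves has $\mathcal{O}(m'+H)$ branching nodes, while unary chains contribute only implicit nodes.

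\textbf{Items (iii) and (v).} By the symmetric argument applied to $\mathcal{G}^{R}$, the part of the old derivation tree outside the embedded subtrees of $Q_L$ and $Q_R$ is the set of ancestors of the $\mathcal{O}(H)$ embedded roots. Every removed nonterminal occurs only there, since its other occurrences were inside embedded subtrees, and those survive. At each height the ancestors of $\mathcal{O}(1)$ popped symbols per height together with the boundary form $\mathcal{O}(H)$ nodes per height, which yields $\mathcal{O}(H^2)$ nodes and hence (iii). The branching nodes among them number $\mathcal{O}(H)$, because the leaves of this ancestor forest are the $\mathcal{O}(H)$ embedded roots; this gives (v).

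\textbf{Item (vi).} A node of $\mathcal{U}_{\change}$ lies within distance $\lceil \alpha+\log B\rceil$ below an inserted or removed node, so I would count descendants through cached-path-length walks. The naive count is exponential in the distance, but it is only $2^{\mathcal{O}(\log B)} = \mathcal{O}(B)$ times a constant because explicit nodes have at most two distinct children. Only changed nodes whose parent set actually changed can alter a cache. Those are the children of frontier nodes lying on the boundary of the embedded regions, and there are $\mathcal{O}(H)$ such boundary nodes, or $\mathcal{O}(m'+H)$ if one counts the new region. The cache of a node whose ancestor is new only matters when the cached path passes through a node of unique parent, so I must restrict to the $\mathcal{O}(H)$ boundary entry points and fan out by a factor of $\mathcal{O}(B)$, giving $\mathcal{O}(HB)$.

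\textbf{Main obstacle.} The main obstacle is (vi): bounding $\mathcal{U}_{\change}$ without an $m'$ term, since the definition literally counts descendants of all inserted nodes. I would try to show that for descendants reached from inserted nodes strictly inside the new region, either the cached path is entirely new (and so is counted in $\mathcal{I}_{\expl}$ and handled separately) or it enters through one of the $\mathcal{O}(H)$ boundary nodes.
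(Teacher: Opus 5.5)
Your items (i)--(v) follow the paper's proof. The paper treats $\mathcal{T}_{\diff}$ as a tree $\mathcal{T}$ of height $H'$. Its leaves are the $m'$ new leaves plus the set $\mathcal{T}_{\leaf}$ of common parents of popped segments, with $|\mathcal{T}_{\leaf}| = \mathcal{O}(\rho_L+\rho_R) = \mathcal{O}(H)$. It multiplies leaves by height for (i)--(iii) and counts leaves plus branching nodes for (iv)--(v). Your width-per-height induction is the same count.

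One correction: what is $\mathcal{O}(H)$ is the number of popped \emph{segments}, not the number of embedded roots or ``popped symbols''. For example, with $T=a^{n}$, $Q_L$ is a single run of $s-1$ leaves. So in (iv) and (v) the leaves of your forest must be the common parents of the segments, not the embedded roots. Explicit nodes with only one child inside $\mathcal{T}$ (the rest being embedded roots) must also be charged to these $\mathcal{O}(H)$ parents.

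Item (vi) is a genuine gap, and the repair you sketch cannot work, because the obstacle you flag is real. Old explicit nodes below the inserted block belong to $\mathcal{U}_{\change}$ without being reachable through any of the $\mathcal{O}(H)$ boundary nodes.

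Take $T=c_1c_2\cdots c_k$ with distinct characters and insert $P'=c_kc_{k-1}\cdots c_1$, so $m'=k=n$. All runs have length $1$, so above each leaf $X_c$ sits the old unary node $X^{1}_{c}\rightarrow X_c$. For each $j$ with $\assign(X^{1}_{c_{j+1}})=1$ and $\assign(X^{1}_{c_j})=0$ (probability $1/4$), the pair $X^{1}_{c_{j+1}}X^{1}_{c_j}$ never occurs in $T$, so it creates a new nonterminal at height $2$. The two old explicit leaves beneath it lie at distance $2$ from an inserted node. Hence $\mathbb{E}|\mathcal{U}_{\change}| = \Omega(m')$, far above $HB=\Theta(\log^{2} n)$.

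These nodes' cached paths genuinely change, since the previously unique-parent node $X^{1}_{c}$ acquires a second parent. So no redefinition of $\mathcal{U}_{\change}$ avoids them, and (vi) is false as stated.

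The paper's own argument has the same blind spot. It charges every node of $\mathcal{U}_{\change}$ to one of the $\tau=\mathcal{O}(H)$ lowest inserted or removed ancestors $f'(v)$ of $v\in\mathcal{T}_{\leaf}\cup\mathcal{T}'_{\leaf}$, then multiplies by the fan-out $2^{\lceil\alpha+\log B\rceil}=\mathcal{O}(B)$. That is exactly your boundary-entry-point idea: nothing pays for descendants of inserted nodes inside the $P'$ block.

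What does hold, and suffices for the update theorem, is $|\mathcal{U}_{\change}|=\mathcal{O}((m'+H)B)$. Inserted nodes of $\mathcal{U}_{\change}$ lie in $\mathcal{I}_{\expl}$. Every other node of $\mathcal{U}_{\change}$ either lies in $\mathcal{U}_{\children}$ or lies within distance $\lceil\alpha+\log B\rceil-1$ below a node of $\mathcal{U}_{\children}$ (take the last inserted or removed node on the witnessing path). So it suffices to show $|\mathcal{U}_{\children}|=\mathcal{O}(m'+H)$:
\begin{itemize}
\item An old rule has only old children, so old labels are closed downward in $\mathcal{T}$. Hence an old child of a new node is either a topmost old node of $\mathcal{T}$ or an embedded root.
\item The topmost old nodes form an antichain, so there are at most as many of them as leaves of $\mathcal{T}$, i.e.\ $\mathcal{O}(m'+H)$.
\item Embedded roots contribute one distinct label per popped segment, $\mathcal{O}(H)$ in total.
\item The removal side gives $\mathcal{O}(H)$ symmetrically, since surviving labels are closed downward in $\mathcal{T}'$.
\end{itemize}
Multiplying by $\mathcal{O}(B)$ gives the bound. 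Phase~3 then costs $\mathcal{O}((m'+H)B(H'+B))=\mathcal{O}(m'\log^{2}n+\log^{3}n)$, so the final update time is unaffected.
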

\begin{proof}
See Appendix~\ref{app:ancestor_upper_bound}.
\end{proof}
Therefore, the total expected amortized running time is $\mathcal{O}(m' \log^{2} n + \log^{3} n)$.

\subsection{Deletion Operation}
Given a position $s \in \{1, 2, \ldots, n\}$ and a length $m'$ with $s + m' - 1 \leq n$, the deletion operation produces $T'$ by removing the substring $T[s..s+m'-1]$ from $T$ and updates the dynamic RR-index so that its RLSLP derives $T'$.
It runs in expected amortized $\mathcal{O}(m' \log^{2} n + \log^{3} n)$ time by a similar approach. 
See Appendix~\ref{app:deletion_operation} for details.

%% file: tex/8_experiment.tex
\section{Experiments}\label{sec:experiment}
\newcommand{\Dataset}[1]{\textsf{#1}}

\subparagraph{Setup.}
We evaluated the dynamic RR-index 
%on locate queries and update operations over eleven highly repetitive strings:
for locate queries and update operations on eleven highly repetitive strings:
(i) nine strings from the Pizza\&Chili repetitive corpus~\cite{dataset:pc-repetitive-corpus};
(ii) a 37~GB string (\Dataset{enwiki}) of English Wikipedia articles with complete edit history~\cite{dataset:enwiki-all-pages}; and
(iii) a 59~GB string (\Dataset{chr19}) obtained by concatenating chromosome~19 from 1{,}000 human genomes in the 1000 Genomes Project~\cite{1000Genomes}.
Per-dataset statistics ($\sigma$, $n$, $\delta$, the explicit-node count $M$, 
the height $H$ of the derivation tree, 
and the average occurrence counts $\occ_{10}, \occ_{100}, \occ_{1000}$ for $1{,}000$ patterns of lengths $10$, $100$, and $1000$) appear in Appendix~\ref{app:dataset-stats_full}; for \Dataset{chr19}, $\sigma=4$, $n \approx 5.9 \times 10^{10}$, $\delta \approx 2.7 \times 10^{6}$, $M \approx 2.7 \times 10^{7}$, $H = 372$, $\occ_{10} = 881{,}774$, $\occ_{100} = 940$, and $\occ_{1000} = 504$. 

We compared the dynamic RR-index with both dynamic and static compressed indexes: the dynamic SE-index~\cite{DBLP:journals/dam/NishimotoIIBT20}, the dynamic r-index~\cite{dynamic_r_index_dcc}, and the static r-index~\cite{DBLP:journals/jacm/GagieNP20}.
We implemented the dynamic RR-index and the dynamic SE-index from scratch in C++ (\url{https://github.com/TNishimoto/dynamic-rlslp-index}), and used the public implementations of the r-index~(\url{https://github.com/nicolaprezza/r-index}) and the dynamic r-index~(\url{https://github.com/TNishimoto/dynamic_r_index}). 
All experiments ran on a single core of a 48-core Intel Xeon Gold 6126 CPU (2.6~GHz) with 2~TB of RAM, under 64-bit CentOS 7.9.

\subparagraph{Implementation details of the dynamic RR-index.}
The dynamic RR-index has three parameters: $w$ (Section~\ref{sec:recompression}) and $\alpha, \beta$ (Section~\ref{sec:improvement}).
For performance reasons, we fix the integer quantity that each parameter controls, as follows.
We store node heights in $16$-bit integers, i.e., we require the height bound to satisfy $\lceil 2(w+1)\log_{8/7}(4n)+2 \rceil \le 65535$ $(= 2^{16}-1)$, which fixes $w$.
We set the cached ancestor-path length $\lceil \alpha + \log B \rceil$ to $6$ when $n \le 10^{9}$, and to $8$ otherwise.
We set the stored coordinate-prefix length $\lceil \beta B / \log \sigma \rceil$ to $\lfloor 64 / \log \sigma \rfloor$, so that the first $\lfloor 64 / \log \sigma \rfloor$ characters fit in a single $64$-bit integer.

Since no public implementation of Blelloch's data structure (Section~\ref{subsec:data_structures}) was available, we used a dynamic wavelet-matrix-based data structure~\cite{DBLP:journals/is/ClaudeNP15} for two-dimensional range reporting.
It uses $\mathcal{O}(MB)$ bits of space, supports range reporting queries in $\mathcal{O}((1 + \rocc) \log^{2} M)$ time, where $\rocc$ is the number of reported points, and supports point insertions and deletions in amortized $\mathcal{O}(\log^{2} M)$ time.
Therefore, 
this substitution increases the time complexities of locate queries and update operations by logarithmic factors. 

\begin{figure}[t]
\centering
\begin{minipage}[t]{0.32\linewidth}\centering
    {\small $(m, m') = (10, 10)$}\\
    \includegraphics[width=\linewidth]{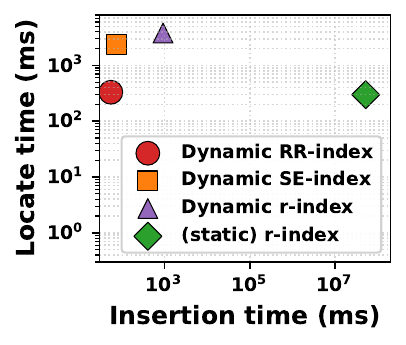}
\end{minipage}\hfill
\begin{minipage}[t]{0.32\linewidth}\centering
    {\small $(m, m') = (100, 100)$}\\
    \includegraphics[width=\linewidth]{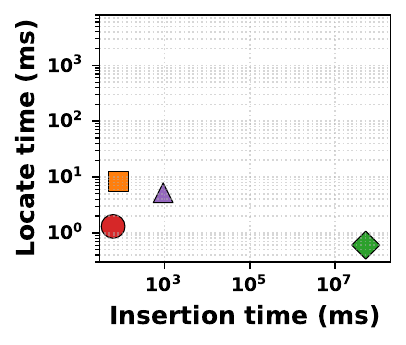}
\end{minipage}\hfill
\begin{minipage}[t]{0.32\linewidth}\centering
    {\small $(m, m') = (1000, 1000)$}\\
    \includegraphics[width=\linewidth]{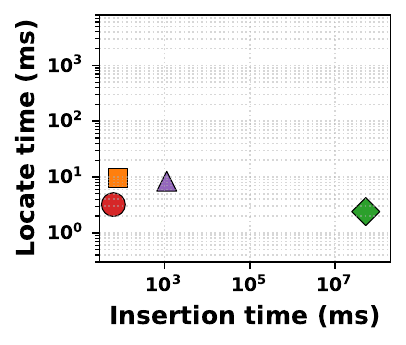}
\end{minipage}
\caption{Insertion--locate trade-off on \Dataset{chr19} for $(m, m') \in \{(10, 10), (100, 100), (1000, 1000)\}$. Each panel plots, for each method, the mean insertion time ($x$-axis) against the mean locate time ($y$-axis), both on a log scale; lower-left is better.}
\label{fig:tradeoff_chr19_ins}
\end{figure}
\begin{figure}[t]
\centering
\begin{minipage}[t]{0.32\linewidth}\centering
    {\small $(m, m') = (10, 10)$}\\
    \includegraphics[width=\linewidth]{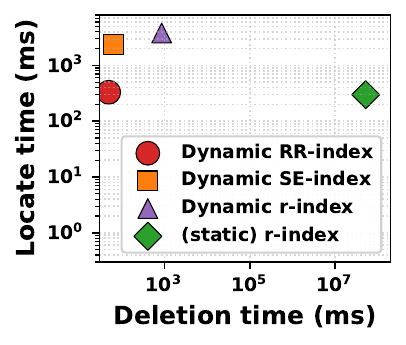}
\end{minipage}\hfill
\begin{minipage}[t]{0.32\linewidth}\centering
    {\small $(m, m') = (100, 100)$}\\
    \includegraphics[width=\linewidth]{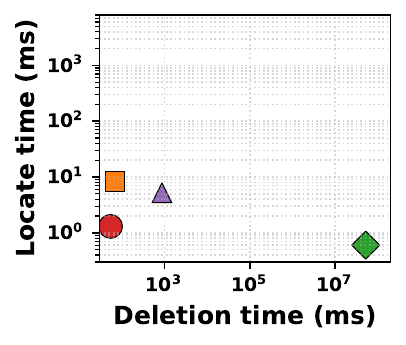}
\end{minipage}\hfill
\begin{minipage}[t]{0.32\linewidth}\centering
    {\small $(m, m') = (1000, 1000)$}\\
    \includegraphics[width=\linewidth]{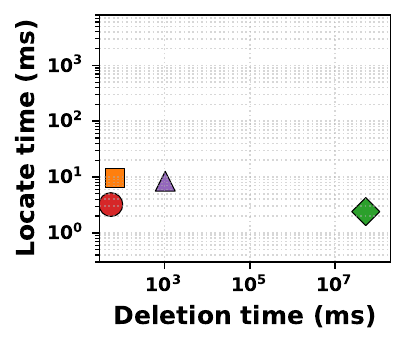}
\end{minipage}
\caption{Deletion--locate trade-off on \Dataset{chr19}. Axes and conventions are as in Figure~\ref{fig:tradeoff_chr19_ins}.}
\label{fig:tradeoff_chr19_del}
\end{figure}

\begin{figure}[t]
\centering
\begin{minipage}[t]{0.32\linewidth}\centering
    {\small $(m, m') = (10, 10)$}\\
    \includegraphics[width=\linewidth]{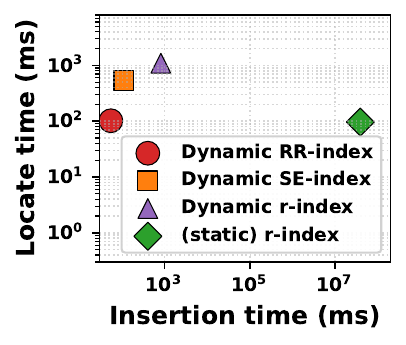}
\end{minipage}\hfill
\begin{minipage}[t]{0.32\linewidth}\centering
    {\small $(m, m') = (100, 100)$}\\
    \includegraphics[width=\linewidth]{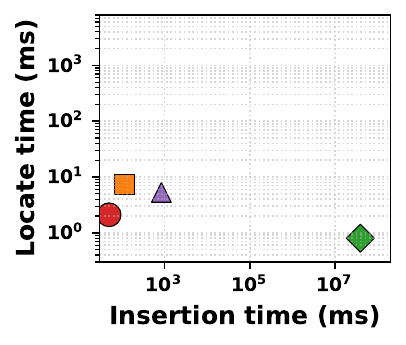}
\end{minipage}\hfill
\begin{minipage}[t]{0.32\linewidth}\centering
    {\small $(m, m') = (1000, 1000)$}\\
    \includegraphics[width=\linewidth]{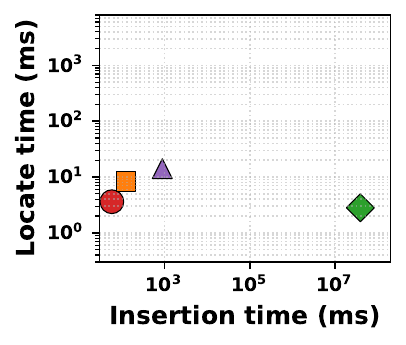}
\end{minipage}
\caption{Insertion--locate trade-off on \Dataset{enwiki}. Axes and conventions are as in Figure~\ref{fig:tradeoff_chr19_ins}.}
\label{fig:tradeoff_enwiki_ins}
%\end{figure}
%\begin{figure}[t]
\centering
\begin{minipage}[t]{0.32\linewidth}\centering
    {\small $(m, m') = (10, 10)$}\\
    \includegraphics[width=\linewidth]{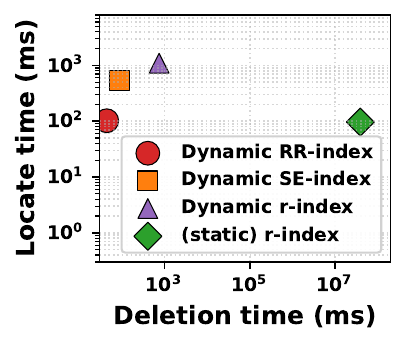}
\end{minipage}\hfill
\begin{minipage}[t]{0.32\linewidth}\centering
    {\small $(m, m') = (100, 100)$}\\
    \includegraphics[width=\linewidth]{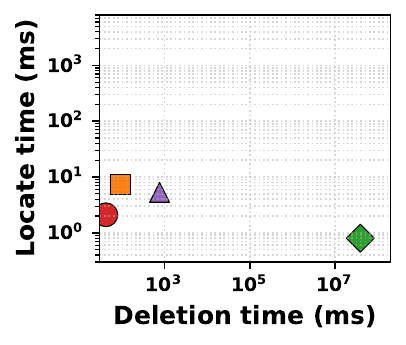}
\end{minipage}\hfill
\begin{minipage}[t]{0.32\linewidth}\centering
    {\small $(m, m') = (1000, 1000)$}\\
    \includegraphics[width=\linewidth]{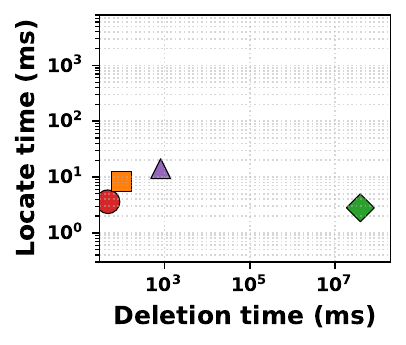}
\end{minipage}
\caption{Deletion--locate trade-off on \Dataset{enwiki}. Axes and conventions are as in Figure~\ref{fig:tradeoff_chr19_ins}.}
\label{fig:tradeoff_enwiki_del}
%\end{figure}
%\begin{figure}[t]
\centering
\begin{minipage}[c]{0.22\linewidth}\centering
    \includegraphics[width=\linewidth]{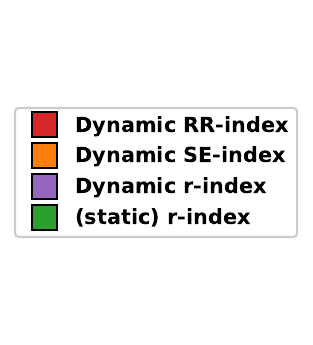}
\end{minipage}\hfill
\begin{minipage}[c]{0.37\linewidth}\centering
    \includegraphics[width=\linewidth]{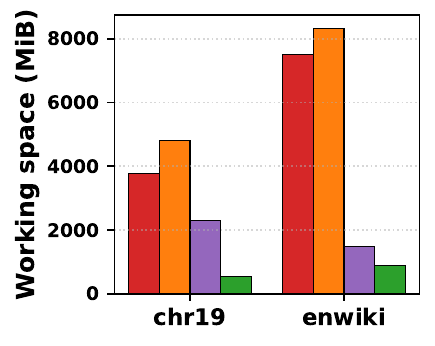}
\end{minipage}\hfill
\begin{minipage}[c]{0.37\linewidth}\centering
    \includegraphics[width=\linewidth]{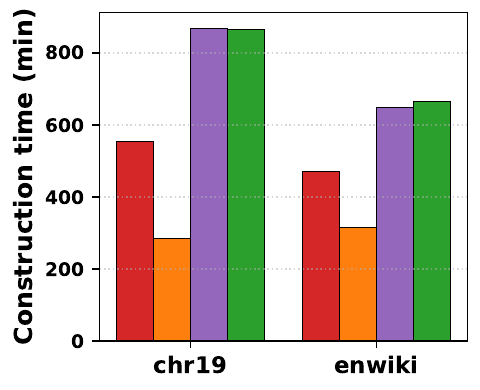}
\end{minipage}
\caption{Working space during locate queries (center) and construction time (right) on \Dataset{chr19} and \Dataset{enwiki}.}
\label{fig:space_construction}
\end{figure}

\subparagraph{Results for \Dataset{chr19}.}
For \Dataset{chr19} we measured update time, locate time, working space, and construction time. 
We inserted, then deleted, $1{,}000$ random substrings of length $m' \in \{1, 10, 100, 1000\}$ at random positions in $T$, and ran $1{,}000$ locate queries with random patterns of length $m \in \{10, 100, 1000\}$. 
In the trade-off figures below, the (static) r-index does not support updates, so its $x$-coordinate is the full reconstruction time ($\approx 5.2 \times 10^{7}$~ms $\approx 14.5$~h on \Dataset{chr19}; $\approx 3.9 \times 10^{7}$~ms $\approx 10.8$~h on \Dataset{enwiki}).

The dynamic RR-index appears in the lower-left region of every panel of Figures~\ref{fig:tradeoff_chr19_ins} and~\ref{fig:tradeoff_chr19_del}: 
it dominates the dynamic SE-index and dynamic r-index on both axes for every $(m, m')$, 
running roughly $1.3\times$ faster than the dynamic SE-index and $15$--$19\times$ faster than the dynamic r-index for update operations, 
and at least $2.6\times$ faster than both for locate queries (reaching $\approx 7\times$ vs.\ the dynamic SE-index and $\approx 11\times$ vs.\ the dynamic r-index at $m = 10$). 
%On locate, 
For locate queries, 
it is competitive with the (static) r-index---within $1.1$--$1.3\times$ at $m \in \{10, 1000\}$ and within $2.2\times$ at $m = 100$---while additionally supporting updates that the r-index cannot. 
On the $59$~GB \Dataset{chr19} ($\delta \approx 2.7 \times 10^{6}$), the dynamic RR-index used $\approx 3.7$~GB of working space (Figure~\ref{fig:space_construction}, center)---about $1/16$ of the raw text size---and finished construction in $\approx 9$~h (Figure~\ref{fig:space_construction}, right), within commodity-server budgets. 
Full numbers appear in Appendix~\ref{app:result_other_datasets}. 

\subparagraph{Results for \Dataset{enwiki}.}
The same trade-off pattern holds on \Dataset{enwiki} (Figures~\ref{fig:tradeoff_enwiki_ins} and~\ref{fig:tradeoff_enwiki_del}): the dynamic RR-index dominates the other two dynamic indexes, running roughly $2.1\times$ faster than the dynamic SE-index and $16\times$ faster than the dynamic r-index for update operations, and is competitive with the (static) r-index for locate queries---matching it for short patterns and within $\approx 2.6\times$ for medium patterns. On the $37$~GB \Dataset{enwiki} ($\delta \approx 7.3 \times 10^{6}$), the dynamic RR-index used $\approx 7.3$~GB of working space (Figure~\ref{fig:space_construction}, center) and finished construction in $\approx 8$~h (Figure~\ref{fig:space_construction}, right), again within commodity-server budgets; 
the $\approx 2\times$ memory usage for a $\delta$ value approximately $2.7\times$ larger than that of \Dataset{chr19} is consistent with linear-in-$\delta$ scaling. 
Full numbers appear in Appendix~\ref{app:result_other_datasets}.

\subparagraph{Results for the Pizza\&Chili corpus.}
Across most settings in the Pizza\&Chili corpus, the dynamic RR-index was either the fastest dynamic index or competitive with the fastest one for update operations, 
reaching a speedup of $\approx 77\times$ over the dynamic r-index on \Dataset{coreutils} at $m' = 1000$
when insertions and deletions are considered together. 
For locate queries, the dynamic RR-index was usually the fastest among the dynamic
indexes, although there were a few exceptions.
Compared with the static r-index, it was within $2.5\times$ for long patterns, but up to
about $8.6\times$ slower for short patterns on \Dataset{world leaders}.
Its working space scaled with $\delta$, from $19$~MiB ($\delta \approx 1.6 \times 10^4$, \Dataset{einstein.de.txt}) to $886$~MiB ($\delta \approx 1.4 \times 10^6$, \Dataset{para}); construction finished within $7$ minutes on every Pizza\&Chili string.

\subparagraph{Summary.}
The experiments show that the dynamic RR-index (i) scales with $\delta$ in memory, (ii) outperforms the other 
dynamic compressed indexes for both updates and locate queries in most settings, 
%dynamic compressed indexes on both updates and locate in most settings, 
and (iii) is competitive with the (static) r-index for short and long patterns on large datasets.

%% file: tex/9a_appendix.tex
\section{Summary Comparison of Dynamic and Static Self-Indexes}\label{app:summarize_table}
Table~\ref{tab:dat} summarizes state-of-the-art dynamic and static self-indexes supporting locate queries.
\begin{table}[p] 
\caption{Summary of state-of-the-art dynamic and static self-indexes supporting locate queries.
Here, $n$ is the length of the input string $T$, $m$ is the length of the pattern $P$, 
$\delta$ is the substring complexity of $T$, 
$\occ$ is the number of occurrences of $P$ in $T$, and $m'$ is the length of an inserted string or a deleted substring.
Moreover, $\occ_c \geq \occ$ is the number of candidate occurrences returned by the ESP-index~\cite{DBLP:journals/jda/MaruyamaNKS13}, $\sigma = n^{\mathcal{O}(1)}$ is the alphabet size, 
$\epsilon > 0$ is an arbitrary constant, and $L_{\max} \leq n$ is the maximum value in the LCP array of $T$.
Let $r$ be the number of runs in the BWT~\cite{burrows1994block} of $T$, 
where $r = \mathcal{O}(\delta \log \delta \max \{ 1, \log (n / (\delta \log \delta)) \})$~\cite{DBLP:conf/focs/KempaK20}.
Let $z$ be the number of factors in the LZ77 factorization~\cite{LZ76} of $T$, 
where $z = \mathcal{O}(\delta \log \frac{n \log \sigma}{\delta \log n})$~\cite{9961143}.
Let $g$ be the size of a compressed grammar deriving $T$, where $g = \mathcal{O}(z \log n \log^{*} n)$~\cite{DBLP:journals/dam/NishimotoIIBT20}. 
Let $\gamma$ be the size of the smallest string attractor~\cite{DBLP:conf/stoc/KempaP18} for $T$, 
where $\gamma = \mathcal{O}(\delta \log \frac{n \log \sigma}{\delta \log n})$~\cite{9961143}.
Also, $q \geq 1$ is a user-defined parameter, and $g'$ is the total size of the $q$-truncated suffix tree of $T$ and a compressed grammar deriving $T$, where $g' = \mathcal{O}(z(q^2 + \log n \log^{*} n))$~\cite{DBLP:journals/iandc/NishimotoTT20}. 
We assume a machine word size $B = \Theta(\log n)$ and $m' = \mathcal{O}(n)$. 
We exclude dynamic self-indexes that use $\Omega(n \log n)$ bits, such as \cite{DBLP:conf/soda/AlstrupBR00,DBLP:conf/cpm/EhrenfeuchtMW09}, with the exception of \cite{DBLP:journals/corr/GawrychowskiKKL15}. 
Furthermore, we exclude data structures that do not explicitly support locate queries, 
even when locate queries can be answered by combining other supported queries, 
because the resulting locate-query time is slower than that of standard self-indexes (e.g., \cite{DBLP:conf/stoc/KempaK22,DBLP:conf/focs/KempaK23,DBLP:journals/corr/abs-2404-07510}).
%Furthermore, we exclude data structures that do not explicitly support locate queries, 
%even if locate queries can be answered by combining other supported queries, 
%This is because their locate-query time is slower than that of standard self-indexes (e.g., \cite{DBLP:conf/stoc/KempaK22,DBLP:conf/focs/KempaK23,DBLP:journals/corr/abs-2404-07510}).
%The abbreviations $\DFMI$, $\SEI$, $\DRI$, and $\RRI$ represent the dynamic FM-index, dynamic SE-index, dynamic r-index, and dynamic RR-index, respectively.
}
\label{tab:dat}

%\footnotesize
%\centering
\scriptsize
%\begin{center}
%\resizebox{\textwidth}{!}{
%\scalebox{0.5}[0.9]{

\begin{tabular}{l|c|c|c}
Method  & Format & Type & Space (bits) \\
\hline
\hline
Dynamic FM-index~\cite{DBLP:journals/jda/SalsonLLM10,DBLP:journals/jda/LeonardMS12} & BWT & Dynamic & $\mathcal{O}(n \log \sigma)$ \\
\hline
Dynamic SE-index~\cite{DBLP:journals/dam/NishimotoIIBT20} & Grammar & Dynamic & $\mathcal{O}(g \log n)$ \\
\hline
TST-index-d~\cite{DBLP:journals/iandc/NishimotoTT20} & Grammar & Dynamic & $\mathcal{O}(g^{\prime} \log n)$ \\
\hline
Gawrychowski et al.~\cite{DBLP:conf/soda/GawrychowskiKKL18,DBLP:journals/corr/GawrychowskiKKL15} & Grammar & Dynamic & $\Omega(n \log n)$ \\
\hline
Dynamic r-index~\cite{dynamic_r_index_dcc} & RLBWT & Dynamic & $\mathcal{O}(r \log n)$ \\
\hline
\hline
r-index~\cite{DBLP:journals/jacm/GagieNP20} & RLBWT & Static & $\mathcal{O}(r \log n)$ \\
\hline
OptBWTR~\cite{DBLP:conf/icalp/NishimotoT21} & RLBWT & Static & $\mathcal{O}(r \log n)$ \\
\hline
Christiansen et al.~\cite{DBLP:journals/talg/ChristiansenEKN21} & Grammar & Static & $\mathcal{O}(\gamma \log (n/\gamma) \log n)$ \\
\hline
Kociumaka et al.~\cite{ViceVersa} & Grammar & Static & $\mathcal{O}(\delta \log \frac{n \log \sigma}{\delta \log n} \log n)$ \\
\hline
\hline
Dynamic RR-index (this paper) & Grammar & Dynamic & expected $\mathcal{O}(\delta \log \frac{n \log \sigma}{\delta \log n} \log n)$ \\
\end{tabular}

%%%%%%%%%%%%%%%%%%%%%%%%%%%%%%%%%%%%%%%%%%%%%%%%%%%%%%%%%%%%%%
\vspace{5mm}

\begin{tabular}{l|c|c}
Method & Locate time & Update time \\
\hline
\hline
Dynamic FM-index~\cite{DBLP:journals/jda/SalsonLLM10,DBLP:journals/jda/LeonardMS12} & $\mathcal{O}((m+\occ) \log^{2+\epsilon} n)$ & $\mathcal{O}((m^{\prime} + L_{\max}) \log n)$ \\
\hline
\multirow{3}{*}{Dynamic SE-index~\cite{DBLP:journals/dam/NishimotoIIBT20}} & $\mathcal{O}(m (\log \log n)^{2}$ & amortized \\
                                                                           & $+\log m (\log n \log^{*} n)^{2}$ & $\mathcal{O}(m^{\prime}(\log n \log^{*} n)^{2}$ \\
                                                                           & $+ \occ \log n)$ & $ + \log n(\log n \log^{*} n)^{2})$ \\
\hline
\multirow{3}{*}{TST-index-d~\cite{DBLP:journals/iandc/NishimotoTT20}} & $\mathcal{O}(m (\log\log \sigma)^{2} + \occ \log n)$ ($m \leq q$) & $\mathcal{O}((\log \log n)^{2} (m^{\prime}q + q^{2} +$ \\ 
\cline{2-2}
                                                                      &  $\mathcal{O}(m (\log\log n)^{2}$ & $ \log n \log^{*} n))$ \\ 

                                                                      & $+ \occ_{c} \log m \log n \log^{*} n)$ ($m > q$) &  \\
\hline
Gawrychowski et al.~\cite{DBLP:journals/corr/GawrychowskiKKL15} & $\mathcal{O}(m + \log^{2} n + \occ \log n)$ & $\mathcal{O}(m^{\prime} \log n + \log^{2} n)$ \\
\hline
Dynamic r-index~\cite{dynamic_r_index_dcc} & $\mathcal{O}((m+\occ) \log n)$ & $\mathcal{O}((m^{\prime} + L_{\max}) \log n)$ \\
\hline
\hline
r-index~\cite{DBLP:journals/jacm/GagieNP20} & $\mathcal{O}(m \log \log \sigma + \occ \log \log n)$ & Unsupported \\
\hline
OptBWTR~\cite{DBLP:conf/icalp/NishimotoT21} & $\mathcal{O}(m \log \log \sigma + \occ)$ & Unsupported \\
\hline
Christiansen et al.~\cite{DBLP:journals/talg/ChristiansenEKN21} & $\mathcal{O}(m + (1 + \occ) \log^{\epsilon} n)$ & Unsupported \\
\hline
Kociumaka et al.~\cite{ViceVersa} & $\mathcal{O}(m + (1 + \occ) \log^{\epsilon} n)$ & Unsupported \\
\hline
\hline
Dynamic RR-index & expected                            & expected amortized \\
(this paper)     & $\mathcal{O}(m + \log m \log^{2} n$ & $\mathcal{O}(m' \log^{2} n + \log^{3} n)$ \\
                 & $+  \occ (\log n / \log \log n))$   &  \\
\end{tabular}
\end{table}

%%%%%%%%%%%%%%%%%%%%%%%%%%%%%%%%%%%%%%%%%%%%%%%%%%%%%%%%%%%%%%%%%%%%%%
\section{Details of Restricted Recompression}\label{app:restricted_recompression}
\subsection{Proof of Lemma~\ref{lem:tree_height}}\label{app:tree_height}
\begin{proof}
Fix integers $w \geq 1$ and $h \geq 0$, and set $|S^{h}| := 1$ for $h > H$; under this extension, $|S^{h}| = 1 \iff H \leq h$.
By Lemma~V.10 of~\cite{9961143}, $\mathbb{E}[|S^{h}|] < 1 + \tfrac{4n}{\mu(h+1)}$, so Markov's inequality applied to the non-negative random variable $|S^{h}| - 1$ gives
\[
\mathbb{P}[H > h] \;=\; \mathbb{P}[|S^{h}| \geq 2] \;<\; \frac{4n}{\mu(h+1)}.
\]
Setting $h = \lceil 2(w+1)\log_{8/7}(4n) + 2 \rceil$ yields $\mu(h+1) \geq (4n)^{w+1}$, hence $\mathbb{P}[H > h] < 1/n^{w}$.
\end{proof}

%%%%%%%%%%%%%%%%%%%%%%%%%%%%%%%%%%%%%%%%%%%%%%%%%%%%%%%%%%%%%%%%%%%%%%
\subsection{Construction Algorithm}\label{app:restricted_recompression_construction}

Restricted recompression proceeds as follows.
At level $0$, for each position $s \in [1, n]$, the algorithm 
introduces or reuses a nonterminal $X_{i_{s}}$ with rule $X_{i_{s}} \rightarrow T[s]$ and determines $\assign(X_{i_{s}})$. 
%introduces a fresh nonterminal $X_{i_{s}}$ with rule $X_{i_{s}} \rightarrow T[s]$ and determines $\assign(X_{i_{s}})$. 
The sequence $S^{0}$ is then $X_{i_{1}}, X_{i_{2}}, \ldots, X_{i_{n}}$. 
Here, for two characters $T[s]$ and $T[s']$, $X_{i_{s}} = X_{i_{s'}}$ if and only if $T[s] = T[s']$.

For each $h \geq 1$, $S^{h}$ is built from $S^{h-1}$ in two steps. In the first step, $S^{h-1}$ is decomposed into segments $F_{1}, F_{2}, \ldots$ left-to-right as follows. Suppose $S^{h-1}[1..s-1]$ has been decomposed into $F_{1}, \ldots, F_{j-1}$; then the next segment $F_{j}$, starting at position $s$, is:
\begin{enumerate}
    \item $F_{j} = S^{h-1}[s..s+1]$ if $h$ is even, $\assign(S^{h-1}[s]) = 1$, and $\assign(S^{h-1}[s+1]) = 0$;
    \item $F_{j} = S^{h-1}[s..s+\ell-1]$, the longest run of $S^{h-1}[s]$ starting at position $s$, if $h$ is odd and $\assign(S^{h-1}[s]) \neq -1$;
    \item $F_{j} = S^{h-1}[s]$ otherwise.
\end{enumerate}

In the second step, for each segment $F_{j}$ of $S^{h-1}$, the algorithm introduces or 
reuses a nonterminal $X_{i_{j}}$ with rule $X_{i_{j}} \rightarrow F_{j}$ and determines $\assign(X_{i_{j}})$. Concatenating these nonterminals in order yields $S^{h}$. 
Here, for two segments $F_{j}$ and $F_{j'}$, $X_{i_{j}} = X_{i_{j'}}$ if and only if $F_{j} = F_{j'}$.

These two steps are applied for $h = 1, 2, \ldots$ until the sequence has length one; the final sequence is $S^{H}$, with $H \geq 1$ since $|S^{0}| = n \geq 2$.

The algorithm outputs an RLSLP $\mathcal{G}^{R} = (\mathcal{V}, \Sigma, \mathcal{D}, E)$, where $\mathcal{V}$ collects the nonterminals introduced across $S^{0}, S^{1}, \ldots, S^{H}$; $\Sigma$ is the alphabet of $T$; $\mathcal{D}$ is the set of production rules created above; and the start symbol $E$ is the unique nonterminal in $S^{H}$.

By Corollary~V.11 of~\cite{9961143}, the algorithm terminates with probability $1$, and its expected running time is $\mathcal{O}(n)$.

%% file: tex/9b_appendix.tex
\section{Details of the Dynamic RR-Index and Locate Algorithm}
\subparagraph{Representation of nodes and nonterminals.}
Each explicit node $u \in \mathcal{U}_{\expl}$ is stored as a pointer to the list element of $\pathset$ for the path containing $u$;
each implicit node is represented by the explicit node at the end of its path (i.e., the unique explicit endpoint; see Section~\ref{sec:component}).
In both cases, the list element is accessible in $\mathcal{O}(1)$ time.

Each nonterminal in $\mathcal{V}$ is represented by its DAG node.
Equality of two nonterminals is therefore tested by comparing DAG nodes; the dynamic RR-index never stores nonterminal subscripts explicitly.

%%%%%%%%%%%%%%%%%%%%%%%%%%%%%%%%%%%%%%%%%%%%%%%%%%%%%%%%%%%%%%%%%%%%%%
\subsection{Definition and Properties of the Popped Sequence}\label{app:popped_sequence}
\subparagraph{Definition.}
Let $\mathcal{G}_{P} = (\mathcal{V}_{P}, \Sigma_{P}, \mathcal{D}_{P}, E_{P})$ be the RLSLP obtained by applying restricted recompression to $P$.
Its production rules are shared with those of $\mathcal{G}^{R}$ whenever possible: if $X_{i} \rightarrow \expr_{i} \in \mathcal{D}_{P}$ and $\mathcal{D}$ contains a rule $X_{j} \rightarrow \expr_{j}$ with $\expr_{j} = \expr_{i}$, then $i = j$.
The derivation tree of $\mathcal{G}_{P}$ decomposes into $(\hat{H}+1)$ sequences
$S_{P}^{0}, S_{P}^{1}, \ldots, S_{P}^{\hat{H}}$.

The popped sequence of $P$ is defined using $3(\hat{H}+1)$ sequences of nonterminals
$L^{0}, L^{1}, \ldots, L^{\hat{H}}$, $R^{0}, R^{1}, \ldots, R^{\hat{H}}$,
and $C^{0}, C^{1}, \ldots, C^{\hat{H}}$. 
For each $h = 0, 1, \ldots, \hat{H}$, 
let $F_{1}, F_{2}, \ldots, F_{d}$ be the segments of $S_{P}^{h}$ obtained by restricted recompression. 
Suppose that $C^{h}$ is the sequence of nonterminals obtained by concatenating
$k \geq 0$ consecutive segments $F_{x}, F_{x+1}, \ldots, F_{x+k-1}$, where $x$ is an integer. 

\begin{description}
    \item[$L^{h}$:] $L^{h} = F_{x}$ unless $k = 0$ or $F_{x}$ consists of two distinct nonterminals. 
    In these two cases, $L^{h}$ is the empty sequence. 
    \item[$R^{h}$:] $R^{h} = F_{x+k-1}$ unless $k \leq 1$ or $F_{x+k-1}$ consists of two distinct nonterminals. 
    In these two cases, $R^{h}$ is the empty sequence. 
    \item[$C^{h}$:] $C^{0} = S_{P}^{0}$.
    For $h \geq 1$, let $\bar{C}^{h-1}$ be the sequence of nonterminals with $C^{h-1} = L^{h-1} \cdot \bar{C}^{h-1} \cdot R^{h-1}$.
    Then $\bar{C}^{h-1}$ corresponds to $k' \geq 0$ consecutive segments $F_{x'}, F_{x'+1}, \ldots, F_{x'+k'-1}$ for some integer $x'$,
    and for each $j \in \{1, 2, \ldots, k'\}$ the nonterminals in $F_{x' + j - 1}$ share the same parent $X_{i_{j}}$ in the derivation tree.
    Set $C^{h} = X_{i_{1}}, X_{i_{2}}, \ldots, X_{i_{k'}}$.
\end{description}
Let $q \geq 0$ be the largest integer satisfying $|C^{q}| \geq 1$. 
Then the popped sequence $Q$ of $P$ is defined as the concatenation of
$L^{0}, L^{1}, \ldots, L^{q}, R^{q}, R^{q-1}, \ldots, R^{0}$. 

\subparagraph{Properties.}
The popped sequence $Q$ has four properties.
First, the number of runs in the run-length encoding of $Q$ is $\mathcal{O}(q)$, since 
each $L^{h}$ and each $R^{h}$ contains at most one run. 
%each $L^{h}$ and each $R^{h}$ has run-length at most $1$.

Second, the number of runs in the run-length encoding of $Q$ is $\mathcal{O}(\log m)$ in expectation. 
This is because $q \leq \hat{H}$, 
and Lemma~\ref{lem:tree_height} gives $\mathbb{E}[\hat{H}] = \mathcal{O}(\log m)$. 

Third, if $P$ occurs in $T$, then the nodes corresponding to each $L^{h}$ (resp.\ $R^{h}$) share a common parent in the derivation tree of $\mathcal{G}^{R}$, because each $L^{h}$ (resp.\ $R^{h}$) is a segment that restricted recompression replaces by a single nonterminal. 

Fourth, for $m \geq 2$, the popped sequence $Q$ contains at least two nonterminals (i.e., $|Q| \geq 2$).
Indeed, $C^0=S_P^0$ has length $m$. Let $F_1,\ldots,F_k$ be the segments of
$S_P^0$. Since height $0$ is even, no segment consists of two distinct
nonterminals. If $k=1$, then $L^0=F_1$ has length $m\geq 2$. If $k\geq 2$, then
both $L^0=F_1$ and $R^0=F_k$ are nonempty. Hence $|Q|\geq 2$ in either case.

%%%%%%%%%%%%%%%%%%%%%%%%%%%%%%%%%%%%%%%%%%%%%%%%%%%%%%%%%%%%%%%%%%%%%%
\subsection{Proof of Lemma~\ref{lem:core_rectangle}}\label{app:core_rectangle}
This lemma generalizes Lemma~7 in \cite{DBLP:journals/dam/NishimotoIIBT20}. 
Similar lemmas also appear in 
\cite{DBLP:conf/latin/ChristiansenE18,ViceVersa,DBLP:journals/talg/ChristiansenEKN21,DBLP:journals/dam/NishimotoIIBT20}. 
In \cite{DBLP:journals/dam/NishimotoIIBT20}, Lemma~7 is proved for the RLSLP $\mathcal{G}^{R}$ constructed by signature encoding, using the following property.
\begin{description}
    \item[Property.]
    Let $(p', h')$ be a pair of integers such that $S^{h'}[p'] = S^{h'}[p'+1]$. 
    Let $v_{p'}$ and $v_{p'+1}$ be the nodes in the derivation tree of $\mathcal{G}^{R}$ corresponding to
    the two nonterminals $S^{h'}[p']$ and $S^{h'}[p'+1]$, respectively. 
    Then all children of the lowest common ancestor of $v_{p'}$ and $v_{p'+1}$ have the same label $X \in \mathcal{V}$, 
    and $\val(X) = \val(S^{h'}[p'])$. 
\end{description}

This property is common to RLSLPs constructed by \emph{locally consistent parsing algorithms}, 
such as signature encoding, recompression, signature grammar, and restricted block compression. 
Since restricted recompression is also such a parsing algorithm, 
Lemma~\ref{lem:core_rectangle} can be proved using the same property.

%We now give the detailed proof, by contradiction. 
We now give a detailed proof by contradiction. 
Suppose that some integer $s \in \{1,2,\ldots,m-1\}$ is such that
$\rect_{s}$ contains a point of $\mathcal{P}$ while
$s \notin \{ |\val(Q[1])| \} \cup \{ \phi(1), \phi(2), \ldots, \phi(\rho-1) \}$.

If no integer $s' \in \{1,2,\ldots,|Q|-1\}$ satisfies
$s = |\val(Q[1..s'])|$, then the split position $s$ lies strictly inside the
string derived by a single nonterminal of $Q$. In that case, the first boundary
of the production rule of the explicit node corresponding to a point in
$\rect_s$ cannot coincide with this split, contradicting the definition of a
primary occurrence. Hence such an $s'$ exists. 
Fix such an integer $s'$. 
Since $s \notin \{ \phi(1), \phi(2), \ldots, \phi(\rho-1) \}$, the boundary at position $s$ does not separate two distinct runs of $Q$, and hence $Q[s'] = Q[s'+1]$. 

%Fix such an $s'$ and set $P_{L} = \val(Q[1..s'])$.
%By the definition of the core, there exists an integer $s' \in \{1,2,\ldots,|Q|-1\}$ with $|P_{L}| = s$, where $P_{L} = \val(Q[1..s'])$.
%If no such $s'$ exists, then the split position $s$ lies strictly inside the string derived by a single nonterminal of $Q$. In that case, the first boundary of the production rule of the explicit node corresponding to a point in $\rect_s$ cannot coincide with this split, contradicting the definition of a primary occurrence. Hence such an $s'$ exists.

Let $X_i \in \mathcal{V}$ be the nonterminal corresponding to a point $t$ in $\rect_{s}$. 
The nonterminal $X_i$ occurs in $S^{h}$ at position $p \in \{ 1, 2, \ldots, |S^{h}| \}$, 
where $h$ is the height of $X_i$, 
and $X_i$ produces a sequence of $d$ nonterminals
$X_{i_1}, X_{i_2}, \ldots, X_{i_d}$. 
The $x$-coordinate $L_i$ of $t$ is the reverse of $\val(X_{i_1})$. 
Similarly, 
$Q[s'..s'+1]$ occurs in $S^{h'}$ at position $p' \in \{ 1, 2, \ldots, |S^{h'}|-1 \}$, 
where $h'$ is the height of $Q[s']$, 
and $h' < h$.

Let $v_p$, $v_{p'}$, and $v_{p'+1}$ be the nodes in the derivation tree
corresponding to $S^{h}[p]$, $S^{h'}[p']$, and $S^{h'}[p'+1]$, respectively. 
Let $v_{\mathsf{lca}}$ be the lowest common ancestor of $v_{p'}$ and $v_{p'+1}$. 
By the common property of locally consistent parsing algorithms, 
all children of $v_{\mathsf{lca}}$ have the same label $X \in \mathcal{V}$, 
and $\val(X) = \val(S^{h'}[p']) = \val(Q[s'])$. 
Therefore, $v_{p'}$ does not occur on the path from $v_{\mathsf{lca}}$ to 
the rightmost leaf in the subtree rooted at $v_{\mathsf{lca}}$. 
%the rightmost leaf under the subtree rooted at $v_{\mathsf{lca}}$. 

We next show that $v_p = v_{\mathsf{lca}}$.
If instead $v_{\mathsf{lca}}$ were a proper descendant of $v_p$, then $v_{p'}$ would not occur on the path from $v_p$ to the rightmost leaf of the subtree rooted at the leftmost child of $v_p$; but $(X_i, s)$ being a primary occurrence forces $v_{p'}$ onto this path, a contradiction.

Since $L_i$ is the reverse of $\val(X_{i_1})$ and $\val(X_{i_1}) = \val(X) = \val(Q[s'])$, we have $L_i = \reverse(\val(Q[s']))$, which forces $s' = 1$. 
Since $s'=1$, the equality $s = |\val(Q[1..s'])|$ gives
$s = |\val(Q[1])|$, contradicting the standing assumption that
$s \neq |\val(Q[1])|$.

%Then $P_L = \val(Q[1])$ gives $s = |\val(Q[1])|$, contradicting the standing assumption that $s \neq |\val(Q[1])|$.

%%%%%%%%%%%%%%%%%%%%%%%%%%%%%%%%%%%%%%%%%%%%%%%%%%%%%%%%%%%%%%%%%%%%%%
\subsection{Proof of Lemma~\ref{lem:basic_queries}}\label{app:basic_queries}
We first establish how edge labels are recovered from the production rules in the doubly linked list.

\begin{observation}\label{ob:comp_edges}
Let $e \in \mathcal{E}$ be a directed edge.
If $e$ is an intra-path edge, then $\mathcal{L}_{E}(e) = 0$.
Otherwise (i.e., $e$ is an inter-path edge in $\mathcal{E}_{\expl}(\mathbb{P})$ for a path $\mathbb{P} \in \pathset$),
$e$ corresponds to an occurrence of $X_{j}$ in the right-hand side of the production rule $X_{i} \rightarrow X_{i_{1}} X_{i_{2}} \cdots X_{i_{d}} \in \mathcal{D}$, where $X_{i} = \mathcal{L}_{U}(\mathrm{src}(e))$ and $X_{j} = \mathcal{L}_{U}(\mathrm{dst}(e))$.
In this case, $\mathcal{L}_{E}(e) = \sum_{s=1}^{k-1} |\val(X_{i_{s}})|$, where $k$ is the position of the occurrence of $X_{j}$ in the right-hand side.
\end{observation}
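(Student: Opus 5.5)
The statement to prove is Observation~\ref{ob:comp_edges}. I would argue directly from the definition of edge labels: $\mathcal{L}_E(e) = j' - i'$, where $i'$ and $j'$ are the starting positions of the substrings derived from $\mathrm{src}(e)$ and $\mathrm{dst}(e)$. The plan is to fix any node $v$ of the derivation tree labeled $\mathcal{L}_U(\mathrm{src}(e))$. By construction of $\dagraph$, the outgoing edges of $\mathrm{src}(e)$ are in one-to-one correspondence with the children of $v$, listed in the order of the right-hand side $X_{i_1} X_{i_2} \cdots X_{i_d}$. The edge $e$ therefore corresponds to the child at some position $k$. Since the children of $v$ derive consecutive substrings of $T$ covering $\val(X_i)$, the $k$-th child starts exactly $\sum_{s=1}^{k-1} |\val(X_{i_s})|$ positions after $v$. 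This gives the general formula for every edge. It also shows the label is independent of the chosen occurrence $v$, which is why the label is well defined in the first place.

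\emph{Intra-path case.} If $e$ is intra-path, then $\mathrm{src}(e)$ is implicit, so $d^+(\mathrm{src}(e)) = 1$. The rule of $\mathcal{L}_U(\mathrm{src}(e))$ then has a single symbol on the right-hand side. It cannot be of form (i), since that would make the node a leaf with outdegree $0$. It cannot be of form (ii), since that gives outdegree $2$. It cannot be of form (iv), since $d \geq 2$ gives outdegree at least $2$ when counting multiplicity. Hence the rule is $X_i \rightarrow X_j$ with $d=1$ and $k=1$, and the sum is empty, so $\mathcal{L}_E(e)=0$. The underlying fact is simply that a unary node and its only child derive the same substring at the same position.

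\emph{Inter-path case.} Otherwise $\mathrm{src}(e)$ is explicit. This is because every edge whose source is implicit belongs to $\dagraph'$ and is hence intra-path. So $e \in \mathcal{E}_{\expl}(\mathbb{P})$ for the path $\mathbb{P}$ containing $\mathrm{dst}(e)$, and the general formula above applies verbatim with $X_j = X_{i_k}$.

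The only point needing care is the well-definedness and occurrence-independence argument. It rests on the DAG merging all tree nodes with the same nonterminal, and on each such node having the same ordered children. I expect this to be the main (though mild) obstacle, and it is settled by the correspondence stipulated in the DAG definition.
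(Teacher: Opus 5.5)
Your proof is correct and takes essentially the route the paper relies on: the paper gives no separate proof and treats the observation as immediate from two facts, namely the definition of $\mathcal{L}_{E}$ as an offset between starting positions in the derivation tree, and the fact that intra-path edges are exactly the edges whose source is implicit. Your remark that the label does not depend on which tree occurrence you pick, and your case check that outdegree $1$ forces a rule $X_i \rightarrow X_j$, spell out steps the paper leaves implicit.
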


By Observation~\ref{ob:comp_edges}, every edge label can be reconstructed in $\mathcal{O}(1)$ time from the production rule at the source.
Moreover, each value in $\vOcc(u_i)$ corresponds to a path from the root to $u_i$ in the DAG and equals $1$ plus the sum of edge labels along that path.

\begin{proof}[Proof of Lemma~\ref{lem:basic_queries}]
Let $u_j \in \mathcal{U}_{\expl}$ be the explicit node on the path $\mathbb{P}$ (i.e., the last node of the path),
and let $X_j \in \mathcal{V}$ be the corresponding nonterminal.

\subparagraph{Computation of $\expr_i$.}
If $u_i$ is explicit, then $\expr_i$ is stored in the list element corresponding to $\mathbb{P}$. 
Otherwise, $u_i$ has a child $u_k$ on $\mathbb{P}$, and $\expr_i = X_k$, where $X_k$ is the nonterminal corresponding to $u_k$. 
Therefore, $\expr_i$ can be computed in $\mathcal{O}(1)$ time.

\subparagraph{Computation of $|\val(X_i)|$.}
We have $|\val(X_i)| = |\val(X_j)|$, and $|\val(X_j)|$ is stored in the list element corresponding to $\mathbb{P}$. 
Therefore, $|\val(X_i)|$ can be computed in $\mathcal{O}(1)$ time.

\subparagraph{Computation of the height $h$ of $u_{i}$.}
If $u_i$ is explicit, then its height $h$ is stored in the list element corresponding to $\mathbb{P}$. 
Otherwise, the height $h'$ of $u_j$ is stored in the list element corresponding to $\mathbb{P}$, and $h = h' + s$, where $s$ is the distance from $u_i$ to $u_j$ along $\mathbb{P}$. 
Since $s$ can be computed in $\mathcal{O}(1)$ time, $h$ can also be computed in $\mathcal{O}(1)$ time.

\subparagraph{Computation of $\assign(X_i)$.}
$\assign(X_i)$ is stored in the list element corresponding to $\mathbb{P}$, and hence $\assign(X_i)$ can be computed in $\mathcal{O}(1)$ time.

\subparagraph{Computation of the distinct children of $u_i$.}
The distinct children of $u_i$ are exactly the nodes whose corresponding nonterminals appear in $\expr_i$. 
Since the run-length encoding of $\expr_i$ has length at most $2$, 
the distinct children of $u_i$ can be computed in $\mathcal{O}(1)$ time.
%Since the run-length of $\expr_i$ is at most $2$, 
%the children of $u_i$ can be computed in $\mathcal{O}(1)$ time.

\subparagraph{Proof of Lemma~\ref{lem:basic_queries}(ii): outgoing edges.}
If $\expr_i$ is a character, then $u_i$ has no outgoing edges.
Otherwise, $\expr_i$ is a sequence of nonterminals $X_{i_1}, X_{i_2}, \ldots, X_{i_d}$, and for each nonterminal $X_{i_s}$, $u_i$ has an outgoing edge $e_s$ to the node $u_{i_s}$ corresponding to $X_{i_s}$.
The label $\mathcal{L}_{E}(e_s)$ of $e_s$ is $\sum_{t=1}^{s-1} |\val(X_{i_t})|$.
By Lemma~\ref{lem:basic_queries}(i), we can compute $\expr_i$ and the $d-1$ integers $|\val(X_{i_1})|, |\val(X_{i_2})|, \ldots, |\val(X_{i_{d-1}})|$.
Therefore, the outgoing edges of $u_i$ can be computed in $\mathcal{O}(1)$ time per edge.

\subparagraph{Proof of Lemma~\ref{lem:basic_queries}(iii): inter-path edges.}
For each key-value pair $(\mathrm{src}(e), \mathrm{dst}(e))$ stored in the per-path hash table for $\mathcal{E}_{\expl}(\mathbb{P})$, 
we compute the inter-path edges from $\mathrm{src}(e)$ to $\mathrm{dst}(e)$ using Lemma~\ref{lem:basic_queries}(ii). 
In this way, we obtain all edges in $\mathcal{E}_{\expl}(\mathbb{P})$. 
This computation takes $\mathcal{O}(1)$ time per edge.

\subparagraph{Proof of Lemma~\ref{lem:basic_queries}(iv): enumerating $\vOcc(u_i)$.}
Since the label of each edge on $\mathbb{P}$ is $0$, 
the following equation follows from the relationship between $\vOcc(u_i)$ and $\mathcal{L}_{E}$: 
\[
\vOcc(u_i) =
\bigcup_{e \in \mathcal{E}_{\expl}(\mathbb{P})}
\{\, \mathcal{L}_{E}(e) + q \mid q \in \vOcc(\mathrm{src}(e)) \,\}
\]
if $|\mathcal{E}_{\expl}(\mathbb{P})| \neq 0$, and $\vOcc(u_i) = \{ 1 \}$ otherwise.

Hence $\vOcc(u_i)$ can be computed recursively.
By Lemma~\ref{lem:basic_queries}(iii), the edges in $\mathcal{E}_{\expl}(\mathbb{P})$ and their labels are enumerated in $\mathcal{O}(1)$ time per edge, so the recursion runs in $\mathcal{O}(\tau)$ time, where $\tau$ is the total edge count over all root-to-$u_i$ paths in the DAG.
Since the DAG has height $H$ and contains $|\vOcc(u_i)|$ such paths, $\tau \leq H |\vOcc(u_i)|$, proving Lemma~\ref{lem:basic_queries}(iv).
\end{proof}

%%%%%%%%%%%%%%%%%%%%%%%%%%%%%%%%%%%%%%%%%%%%%%%%%%%%%%%%%%%%%%%%%%%%%%
\subsection{Details of Random Access, LCE, and Reversed LCE Queries}\label{app:lce}
\subparagraph{Random access queries.}
We compute $T[i..i+\ell-1]$ by traversing the derivation tree of $\mathcal{G}^{R}$ from the root to
the $\ell$ leaves corresponding to the substring. 
Using Lemma~\ref{lem:basic_queries}, 
this traversal can be performed in $\mathcal{O}(\kappa)$ time, 
where $\kappa$ is the number of nodes in the derivation tree that satisfy the following two conditions:
(i) the label of the node corresponds to an explicit node in the DAG;
(ii) the node is an ancestor of a leaf corresponding to a character of $T[i..i+\ell-1]$.
$\kappa$ can be bounded by $\mathcal{O}(H + \ell)$.
Therefore, $T[i..i+\ell-1]$ can be computed in $\mathcal{O}(H + \ell)$ time.

\subparagraph{LCE and reversed LCE queries.}
Consider an LCE query on two suffixes $T[i..n]$ and $T[j..n]$, and let $\ell \geq 0$ denote the length of their longest common prefix. 
If $\ell=0$, the query is answered immediately. Otherwise, 
let $Q$ be the shortest sequence of nonterminals that (i) derives $T[i..i+\ell-1]$ and (ii) is embedded at both positions $i$ and $j$.
The standard algorithm finds $Q$ by traversing the derivation tree and recovers $\ell$ as the sum of $|\val(\cdot)|$ over the nonterminals of $Q$; reversed LCE is symmetric.

Duyster and Kociumaka~\cite{DBLP:journals/mst/DuysterK26} showed that this algorithm runs in $\mathcal{O}(H)$ time on the restricted recompression RLSLP, and Lemma~\ref{lem:basic_queries} lets us simulate it within the same bound.
Given the popped sequence of $P$, the same procedure extends to an LCE between a suffix of $T$ and a suffix of $P$ (traverse the subtrees rooted at the nodes of the popped sequence), and likewise to a reversed LCE between prefixes.

\subparagraph{Comparing a suffix $P_R$ of $P$ with the coordinate string $R_i$ of an element of $\mathcal{Y}$.}
We compare $P_{R}$ and $R_{i}$ in the following six steps.
{\renewcommand{\labelenumi}{(\roman{enumi})}
\begin{enumerate}
\item Access the node corresponding to $R_{i}$ using the list indexing data structure built on $\mathcal{Y}$. 
\item Compute the length $\ell$ of the longest common prefix of $P_{R}$ and $R_{i}$ using an LCE query.
\item If $\ell = |P_{R}|$ and $|P_{R}| < |R_{i}|$, then $P_{R}$ is lexicographically smaller than $R_{i}$.
\item If $\ell = |R_{i}|$ and $|P_{R}| > |R_{i}|$, then $R_{i}$ is lexicographically smaller than $P_{R}$.
\item If $\ell = |P_{R}| = |R_{i}|$, then $P_{R}$ and $R_{i}$ are equal.
\item Otherwise, comparing $P_{R}$ and $R_{i}$ reduces to comparing $P_{R}[\ell+1]$ and $R_{i}[\ell+1]$. 
We access $R_{i}[\ell+1]$ by a random access query and return the result of this character comparison.
\end{enumerate}
}
This algorithm runs in $\mathcal{O}(H + \log M)$ time.

\subparagraph{Comparing the reverse of a prefix $P_L$ of $P$ with the coordinate string $L_i$ of an element of $\mathcal{X}$.}
This comparison can be performed in $\mathcal{O}(H + \log M)$ time
by modifying the algorithm for comparing $P_{R}$ and $R_{i}$.

%%%%%%%%%%%%%%%%%%%%%%%%%%%%%%%%%%%%%%%%%%%%%%%%%%%%%%%%%%%%%%%%%%%%%%
\subsection{Algorithm for Constructing the Popped Sequence}\label{app:pop}
To explain how to construct the popped sequence, 
we introduce a \emph{rule query}: given a right-hand side $\expr$, return the nonterminal $X \in \mathcal{V}$ such that $X \rightarrow \expr \in \mathcal{D}$, or $\bot$ if no such nonterminal exists. 
This query is based on the following observation.
If $X$ corresponds to an explicit node, then it is stored in the global hash table as the value associated with key $\expr$.
Otherwise, $X$ corresponds to an implicit node $u \in \mathcal{U}$; in this case, $\expr$ is a single nonterminal $X' \in \mathcal{V}$, and the node corresponding to $X'$ is the child of $u$ on a path in $\pathset$. 
Using this observation, 
a rule query can be answered in $\mathcal{O}(1)$ time using the data structures maintained by the dynamic RR-index.

The popped sequence of $P$ is built bottom up using rule queries, as in restricted recompression.
If the construction would require creating a new nonterminal, the algorithm aborts. 
In this case, $\Occ(T, P) = \emptyset$ follows from the definition of the core. 
Otherwise it runs in $\mathcal{O}(\sum_{h=0}^{\hat{H}} |S^{h}_{P}|)$ time.
Kociumaka showed $\mathbb{E}[\sum_{h=0}^{\hat{H}} |S^{h}_{P}|] = \mathcal{O}(m)$ (Lemma~V.10 in \cite{9961143}), so the expected running time is $\mathcal{O}(m)$.

%%%%%%%%%%%%%%%%%%%%%%%%%%%%%%%%%%%%%%%%%%%%%%%%%%%%%%%%%%%%%%%%%%%%%%
\subsection{Detailed Proof of \texorpdfstring{$Z = \mathcal{O}(|\vOcc(u_i)| \lceil H / \log B \rceil)$}{Z}}\label{app:z_bound}
\begin{figure}[t]
 \begin{center}
		\includegraphics[scale=0.6]{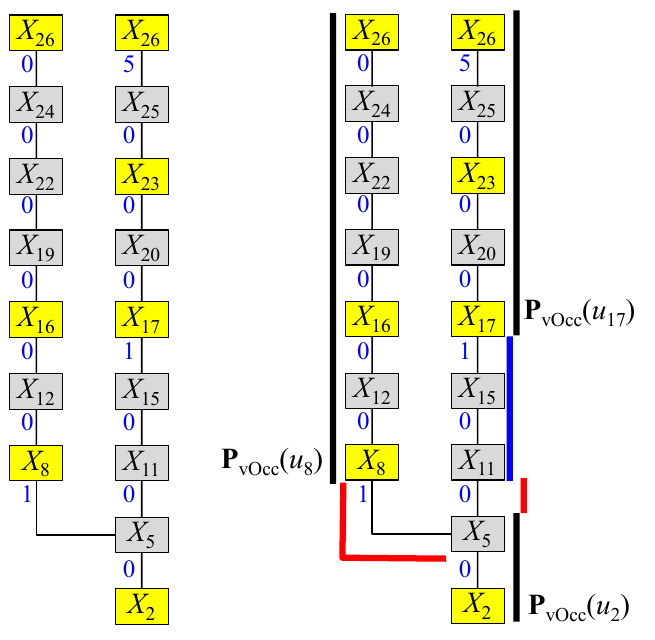}

\caption{
(Left) The tree $\mathcal{T}_{\vOcc}(u_2)$, where $u_2$ is a node in the DAG shown in Figure~\ref{fig:rrdag}. 
(Right) The tree decomposition of $\mathcal{T}_{\vOcc}(u_2)$ proposed in Appendix~\ref{app:z_bound}. 
The bold red, black, and blue lines represent the paths corresponding to Cases 1, 2, and 3, respectively.
}
\label{fig:vOccTree}
 \end{center}
\end{figure}
To prove $Z = \mathcal{O}(|\vOcc(u_i)| \lceil H / \log B \rceil)$,
we introduce a labeled tree $\mathcal{T}_{\vOcc}(u_i)$ that represents all directed paths from the root of the DAG to $u_i$. 
The root is labeled with the nonterminal $X_i$ corresponding to $u_i$. 
For each directed edge $e \in \mathcal{E}$ from a node $u_{j} \in \mathcal{U}$ to $u_i$, 
the root has a child labeled with the nonterminal $X_j$ corresponding to $u_j$, 
and the edge between the root and the child is labeled with $\mathcal{L}_{E}(e)$. 
The subtree rooted at that child is recursively defined as $\mathcal{T}_{\vOcc}(u_j)$. 
Thus, each root-to-leaf path in $\mathcal{T}_{\vOcc}(u_i)$ corresponds, in reverse order, to a directed path from the root of the DAG to $u_i$.

We recursively decompose $\mathcal{T}_{\vOcc}(u_i)$ into paths according to the following four cases:
\begin{description}
    \item[Case 1.]
    The root has at least two children. 
    In this case, the edge between the root and each child forms a path of length $1$. 
    If a child is not a leaf, then the subtree rooted at that child is recursively decomposed into paths.

    \item[Case 2.]
    The root has a single child, and $u_i$ is explicit. 
    In this case, the tree is decomposed into 
    (i) the path $\mathbb{P}$ in $\mathcal{T}_{\vOcc}(u_i)$ corresponding to $\mathbb{P}_{\vOcc}(u_i)$, and 
    (ii) the subtree rooted at the last node of $\mathbb{P}$. 
    If the last node of $\mathbb{P}$ is not a leaf, then the subtree rooted at that node is recursively decomposed into paths.

    \item[Case 3.]
    The root has a single child, and $u_i$ is implicit.
    In this case, $u_i$ lies on a path $\mathbb{P}$ 
    in $\pathset$ whose first node corresponds to a nonterminal $X_j$, 
    and in the tree $\mathcal{T}_{\vOcc}(u_i)$, 
    there exists a path from the root to the lowest node $v$ satisfying one of the following two conditions: 
    (A) the node has at least two children; 
    (B) an ancestor of the node is labeled with $X_{j}$.    
    %(B) the ancestor of the node is labeled with $X_{j}$.    
    The tree is decomposed into
    (i) $\mathbb{P}$, and
    (ii) the subtree rooted at $v$.
    If $v$ is not a leaf, the subtree is recursively decomposed into paths.

    \item[Case 4.]
    The root has no child. 
    In this case, the tree is decomposed into a path of length $0$.
\end{description}

Figure~\ref{fig:vOccTree} illustrates $\mathcal{T}_{\vOcc}(u_2)$ and its tree decomposition for an explicit node $u_2 \in \mathcal{U}$. The tree consists of sixteen nodes, and the decomposition yields six paths. 

Let $Z'$ be the number of paths obtained by the tree decomposition. 
The following lemma gives an upper bound on $Z'$. 

\begin{lemma}
$Z' = \mathcal{O}(|\vOcc(u_{i})| \lceil H / \log B \rceil)$.
\end{lemma}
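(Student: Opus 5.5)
We bound $Z'$ by charging every path of the decomposition to the leaves of $\mathcal{T}_{\vOcc}(u_i)$. The leaves of $\mathcal{T}_{\vOcc}(u_i)$ are in bijection with root-to-$u_i$ paths in the DAG, so there are exactly $|\vOcc(u_i)|$ of them. Every root-to-leaf path of the tree has at most $H+1$ nodes, because the DAG has height $H$ and every edge increases the height by exactly one.

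We first count the branching nodes. The tree has $|\vOcc(u_i)|$ leaves, so it has fewer than $|\vOcc(u_i)|$ nodes with at least two children, and the total number of child edges at such nodes is at most $2|\vOcc(u_i)|$. The Case~1 paths are exactly these child edges, so there are $\mathcal{O}(|\vOcc(u_i)|)$ of them. The Case~4 paths are the leaves, so there are $|\vOcc(u_i)|$ of them.

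For Cases~2 and~3, fix one leaf-to-root line in the tree (a root-to-leaf path read upward). On this line we consider the maximal unary stretches between consecutive branching nodes. A Case~2 path beginning at an explicit node follows the cached path $\mathbb{P}_{\vOcc}$. That cached path is truncated only for one of three reasons:
\begin{itemize}
\item it reached length $\lceil \alpha + \log B\rceil$;
\item its first node has at least two parents, which is a branching node of the tree;
\item its first node has no parent, which is a leaf.
\end{itemize}
A Case~3 path starting at an implicit node runs within a single path of $\pathset$ until it meets a branching node or leaves the first node $X_j$ of that path. After a Case~3 path, the next node is either branching or an explicit node reached through an inter-path edge. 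After a Case~2 path, the next node is either explicit or implicit.

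The key claim is that, within a unary stretch of length $\lambda$, Case~2 and Case~3 paths alternate in a way that allows at most two paths per $\lceil \alpha+\log B\rceil$ edges, plus $\mathcal{O}(1)$ extra paths. Two observations give this. First, every Case~2 path that is not truncated at a branching node or leaf has full length $\lceil\alpha+\log B\rceil$. Second, a Case~3 path is followed either by a branching node or by a Case~2 path. Hence each stretch contributes $\mathcal{O}(1 + \lambda/\log B)$ paths.

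We then sum over stretches. Stretches correspond to edges of the tree obtained by contracting unary chains, so there are $\mathcal{O}(|\vOcc(u_i)|)$ of them. Each stretch has length at most $H$, so the total is
\[
\mathcal{O}\bigl(|\vOcc(u_i)| (1 + H/\log B)\bigr) = \mathcal{O}\bigl(|\vOcc(u_i)|\lceil H/\log B\rceil\bigr).
\]

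The main obstacle is the alternation claim. We must verify that a Case~2 path of short length cannot repeat without consuming a branching node or leaf. This holds because a cached path stops short only at a node with zero parents or at least two parents. We must also verify that Case~3 does not produce many zero-length pieces. Condition~(B) ensures that a Case~3 path covers the entire remaining portion of its $\pathset$-path, so the node that follows it is explicit or branching.
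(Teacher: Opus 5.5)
Your proposal is correct and follows the paper's argument. The paper also cuts $\mathcal{T}_{\vOcc}(u_i)$ at its branching nodes into $\mathcal{O}(|\vOcc(u_i)|)$ unary chains of length at most $H$, and bounds the decomposition paths inside each chain by $\mathcal{O}(\lceil H/\log B\rceil)$. Your alternation observation justifies that per-chain bound, which the paper only asserts: within a chain, every non-final Case~2 path has full length $\lceil \alpha+\log B\rceil$, and every Case~3 path is followed by a Case~2 path or by the end of the chain.
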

\begin{proof}
Remove from $\mathcal{T}_{\vOcc}(u_i)$ every edge whose upper endpoint has at least two children.
This partitions the tree into $\mathcal{O}(|\vOcc(u_i)|)$ paths.
Each resulting path $\mathbb{P}$ is further decomposed into $\mathcal{O}(\lceil |\mathbb{P}| / \log B \rceil)$ subpaths by the tree decomposition; since $|\mathbb{P}| \leq H$, the total subpath count is $\mathcal{O}(|\vOcc(u_i)| \lceil H / \log B \rceil)$.
\end{proof}

The recursive algorithm for $\vOcc(u_{i})$ mirrors the tree decomposition of $\mathcal{T}_{\vOcc}(u_i)$, so $Z = \mathcal{O}(Z')$, and $Z = \mathcal{O}(|\vOcc(u_i)| \lceil H / \log B \rceil)$ follows.

%%%%%%%%%%%%%%%%%%%%%%%%%%%%%%%%%%%%%%%%%%%%%%%%%%%%%%%%%%%%%%%%%%%%%%
\subsection{Speeding up Phase (ii) in Practice}\label{app:practical_phase2}
The bottleneck in phase (ii) is comparing the $x$- or $y$-coordinate $s' \in \Sigma^{*}$ of a point in $\mathcal{P}$ with a substring $s$ of $P$ (either a suffix of $P$ or the reverse of a prefix of $P$) using LCE, reversed LCE, and random access queries (see also Section~\ref{subsubsec:aux_queries}).
To reduce the number of calls to these three queries, we use the following observation. 
Let $\lambda = \lceil \beta B / \log \sigma \rceil$.
Let $s_{\short}$ and $s'_{\short}$ be the prefixes of $s$ and $s'$ of lengths
$\min\{|s|,\lambda\}$ and $\min\{|s'|,\lambda\}$, respectively. 
%Let $s_{\short} \in \Sigma^{*}$ and $s'_{\short} \in \Sigma^{*}$ be the prefixes of length $\lceil \beta B / \log \sigma \rceil$ of $s$ and $s'$, respectively.
If $s_{\short} \neq s'_{\short}$, then the lexicographic order between $s$ and $s'$ is the same as that between $s_{\short}$ and $s'_{\short}$.
Therefore, when the longest common prefix of $s$ and $s'$ is sufficiently short, we can determine the lexicographic order between $s$ and $s'$ without using these queries.
In practice, we expect this case to occur frequently.

We compare $s$ and $s'$ in the following three steps:
(i) compute $s_{\short}$ and access $s'_{\short}$ stored in the doubly linked list for $\pathset$;
(ii) compare $s_{\short}$ and $s'_{\short}$;
(iii) if $s_{\short} \neq s'_{\short}$, return this comparison as the result for $s$ and $s'$; otherwise, fall back to the algorithm of Section~\ref{subsubsec:aux_queries}.

%% file: tex/9c_appendix.tex
\section{Details of the Insertion Operation}\label{app:insertion_operation}
\subsection{Details of Phase 1}\label{app:insertion_phase1}
\subparagraph{Updating the dynamic RR-index.}
Every time a new production rule $X_{i} \rightarrow \expr_{i}$ is established, 
a new node $u_{i}$ of height $h$ corresponding to $X_{i}$ is inserted into the DAG. 
If $u_{i}$ is explicit, then a new path $\mathbb{P} = u_{i}$ is inserted into $\pathset$, 
and a new directed edge $e$ from $u_{i}$ to the node $u_{j}$ corresponding to each nonterminal $X_{j}$ in $\expr_{i}$ 
is inserted into $\mathcal{E}_{\expl}(\mathbb{P}_{j})$, 
where $\mathbb{P}_{j} \in \pathset$ is the path containing $u_{j}$.

Otherwise (i.e., $u_{i}$ is implicit), 
$\expr_{i}$ consists of a single nonterminal $X_{j'}$, 
the corresponding node $u_{j'}$ is the first node of a path $\mathbb{P}' \in \pathset$, 
and the path is extended by connecting $u_{i}$ and $\mathbb{P}'$. 
To reflect the changes to $\pathset$ and $\mathcal{E}_{\expl}(\mathbb{P}_{j})$, 
we need to update the following data structures: 

\begin{description}
    \item[The global hash table.]
    If $u_{i}$ is explicit, 
    %$X_{i}$ and the binary representation of $\expr_{i}$ are inserted into the hash table as a value and its key, respectively. 
    the binary representation of $\expr_{i}$ is inserted as the key and $X_{i}$ as the value.
    This insertion takes expected amortized $\mathcal{O}(1)$ time~\cite{DBLP:journals/siamcomp/DietzfelbingerKMHRT94}. 
    
    \item[The list element corresponding to $\mathbb{P}$.]
    If $u_{i}$ is explicit,
    then create a new element corresponding to $\mathbb{P}$,
    append it to the doubly linked list for $\pathset$,
    and associate an empty hash table with it.
    This element stores the following five pieces of information: 
    (i) a pair $(u_{i}, 0)$ for recovering $\mathbb{P}$ from the element, 
    (ii) $\expr_{i}$, (iii) $h$, (iv) $|\val(X_{i})|$, 
    and (v) $\assign(X_{i})$. 
    Here, $|\val(X_{i})|$ is computed by summing the lengths of 
    the substrings derived from the nonterminals $X_{j}$ in $\expr_{i}$ 
    %the substring derived from each nonterminal $X_{j}$ in $\expr_{i}$ 
    if $\expr_{i}$ is a sequence of nonterminals. Otherwise, $|\val(X_{i})| = 1$. 
    $|\val(X_{i})|$ can be computed in $\mathcal{O}(1)$ time using Lemma~\ref{lem:basic_queries}. 
    Therefore, the list element can be updated in $\mathcal{O}(1)$ time. 
    
    \item[The list element corresponding to $\mathbb{P'}$.]
    If $u_{i}$ is implicit,
    then the second component of the pair stored in the list element is increased by one,
    and $\assign(X_{i})$ is appended to the list element.
    These updates take $\mathcal{O}(\max \{1, h/B \})$ time.
    
    \item [The per-path hash table corresponding to $\mathbb{P}_{j}$.]
    The key and value of the directed edge $e$ are inserted into the hash table 
    if the hash table does not already contain the key. 
    This insertion takes expected amortized $\mathcal{O}(1)$ time. 
    In addition, the key of $e$ is inserted into the doubly linked list for the keys at 
    an appropriate position. 
    %The list indexing data structure can support random access to the keys 
    %and an insertion into the list in $\mathcal{O}(\log M)$ time~\cite{DBLP:conf/wads/Dietz89}. 
    The list indexing data structure supports random access to the keys and insertions into the list 
    in $\mathcal{O}(\log M)$ time~\cite{DBLP:conf/wads/Dietz89}. 
    To compute the appropriate insertion position, 
    we need $\mathcal{O}(\log M)$ random accesses to the keys. 
    Therefore, the update of the doubly linked list takes $\mathcal{O}(\log^{2} M)$ time in total. 

    \item[The dynamic data structures for $\mathcal{X}$.]
    If $u_{i}$ is explicit,
    then $L_{i}$ is inserted after the largest element in $\mathcal{X}$ preceding the $x$-coordinate $L_{i}$ of $u_{i}$,
    and the list indexing and order maintenance data structures for $\mathcal{X}$ are updated accordingly in $\mathcal{O}(\log M)$ time.
    The largest element is found by a binary search on $\mathcal{X}$ in $\mathcal{O}(\max \{ H, H', \log M \} \log M)$ time, using the same approach as in the locate query algorithm.

    \item[The dynamic data structures for $\mathcal{Y}$.]
    As with the update of the dynamic data structures for $\mathcal{X}$, 
    $R_{i}$ is inserted after the largest element in $\mathcal{Y}$ preceding the $y$-coordinate $R_{i}$ of $u_{i}$, 
    and the list indexing data structure and order maintenance data structure corresponding to $\mathcal{Y}$ 
    are updated accordingly. 
    This update takes $\mathcal{O}(\max \{ H, H', \log M \} \log M)$ time. 
    
    \item[The dynamic data structures for $\mathcal{P}$.]    
    If $u_{i}$ is explicit,     
    then the corresponding point is inserted into $\mathcal{P}$,
    and the dynamic data structure for range reporting queries is updated accordingly in amortized $\mathcal{O}(\log M)$ time.
\end{description}

These updates take amortized $\mathcal{O}(\max \{ H, H', \log M \} \log M)$ time if $u_{i}$ is explicit; 
otherwise, they take $\mathcal{O}(\max \{1, h/B \})$ time, 
where $h / B = \mathcal{O}(1)$. 

\subparagraph{Algorithm.}
The first phase consists of the following two steps.
{\renewcommand{\labelenumi}{(\roman{enumi})}
\begin{enumerate}
\item Compute $Q_{L}$ and $Q_{R}$ using the algorithm of~\cite{DBLP:journals/mst/DuysterK26}. 
This is done by traversing the derivation tree of $\mathcal{G}^{R}$ from the root to 
the nodes corresponding to the sequence $Q_{L}$ embedded at position $1$ and the sequence $Q_{R}$ embedded at position $s$, respectively. 
By Lemma~\ref{lem:basic_queries}, 
these traversals can be performed in $\mathcal{O}(H)$ time.
\item Compute $\mathcal{T}_{\diff}$ in $\mathcal{O}(|\mathcal{T}_{\diff}|)$ time using the rule query introduced in Appendix~\ref{app:pop}. 
This computation may create new production rules. 
Whenever a new production rule $X_{i} \rightarrow \expr_{i}$ is established, 
we update the dynamic RR-index accordingly. 
\end{enumerate}
}

Therefore, the expected amortized running time of the first phase is  
\begin{equation*}
\mathcal{O}(|\mathcal{T}_{\diff}| + |\mathcal{I}_{\expl}| \max \{ H, H', \log M \} \log M + |\mathcal{I}_{\impl}|),
\end{equation*}
where $\mathcal{I}_{\expl}$ and $\mathcal{I}_{\impl}$ are introduced in Section~\ref{sec:update}. 

%%%%%%%%%%%%%%%%%%%%%%%%%%%%%%%%%%%%%%%%%%%%%%%%%%%%%%%%%%%%%%%%%%%%%%
\subsection{Details of Phase 2}\label{app:insertion_phase2}
\subparagraph{Computation of $\mathcal{R}_{\expl} \cup \mathcal{R}_{\impl}$.}
To update the dynamic RR-index, 
we need to identify the nodes in $\mathcal{R}_{\expl} \cup \mathcal{R}_{\impl}$. 
For this purpose, we use the following observation:
a node $u_i$ belongs to $\mathcal{R}_{\expl} \cup \mathcal{R}_{\impl}$ 
if and only if all parents of $u_i$ belong to $\mathcal{R}_{\expl} \cup \mathcal{R}_{\impl}$.

We start from the node $u_i$ corresponding to the start symbol $E$ of $\mathcal{G}^{R}$, letting $\mathbb{P} \in \pathset$ denote the path containing $u_i$; this $u_i$ has an incoming edge only if $E$ still appears in $\mathcal{G}^{R'}$.

$\mathcal{R}_{\expl} \cup \mathcal{R}_{\impl}$ can be computed recursively as follows:
{\renewcommand{\labelenumi}{(\roman{enumi})}
\begin{enumerate}
\item Determine whether $u_i$ has an incoming edge. 
This can be verified in $\mathcal{O}(1)$ time:
$u_i$ has at least one incoming edge if
(A) $u_i$ is not the first node of $\mathbb{P}$, or
(B) $u_i$ is the first node of $\mathbb{P}$ and one incoming edge of $u_i$ is explicitly stored in the list element corresponding to $\mathbb{P}$.
\item If $u_i$ has an incoming edge, then the recursion stops.
Otherwise, $u_i$ belongs to $\mathcal{R}_{\expl} \cup \mathcal{R}_{\impl}$;
we output $u_i$, remove $u_i$ and its outgoing edges from the DAG, 
and recursively process the target node of each outgoing edge of $u_i$ that has not yet been removed.
\end{enumerate}
}
By the observation above, this recursive algorithm correctly computes
$\mathcal{R}_{\expl} \cup \mathcal{R}_{\impl}$.
Since it processes $\mathcal{O}(|\mathcal{R}_{\expl}| + |\mathcal{R}_{\impl}|)$ nodes,
its running time is $\mathcal{O}(|\mathcal{R}_{\expl}| + |\mathcal{R}_{\impl}|)$.
This bound excludes the update time of the dynamic RR-index.

\subparagraph{Updating the dynamic RR-index.}
Whenever a node is removed from the DAG, the dynamic RR-index is updated accordingly.
As in Phase~1, this update takes expected amortized $\mathcal{O}(\log M)$ time if the removed node is explicit, and $\mathcal{O}(1)$ time otherwise.
Therefore, Phase~2 takes expected amortized $\mathcal{O}(|\mathcal{R}_{\expl}| \log M + |\mathcal{R}_{\impl}|)$ time in total.

%%%%%%%%%%%%%%%%%%%%%%%%%%%%%%%%%%%%%%%%%%%%%%%%%%%%%%%%%%%%%%%%%%%%%%
\subsection{Details of Phase 3}\label{app:insertion_phase3}
To compute $\mathcal{U}_{\change}$, 
we introduce a set $\mathcal{U}_{\children} \subseteq \mathcal{U}$ of nodes $u$ satisfying the following two conditions: 
(i) a parent of $u$ is removed from or inserted into the DAG when updating the RLSLP $\mathcal{G}^{R}$ 
(i.e., $u$ is a child of a node in $\mathcal{I}_{\expl} \cup \mathcal{I}_{\impl} \cup \mathcal{R}_{\expl} \cup \mathcal{R}_{\impl}$); and 
(ii) $u$ itself is neither inserted into nor removed from the DAG when updating the RLSLP $\mathcal{G}^{R}$. 
The set $\mathcal{U}_{\children}$ can be computed in 
$\mathcal{O}(|\mathcal{R}_{\expl}| + |\mathcal{R}_{\impl}| + |\mathcal{I}_{\expl}| + |\mathcal{I}_{\impl}|)$ time 
by modifying the algorithm for computing $\mathcal{R}_{\expl} \cup \mathcal{R}_{\impl}$. 
Here, $|\mathcal{R}_{\expl}| + |\mathcal{R}_{\impl}| + |\mathcal{I}_{\expl}| + |\mathcal{I}_{\impl}| = \mathcal{O}(W_{\max})$. 

\subsubsection{Computation of \texorpdfstring{$\mathcal{U}_{\change}$}{Uchange}}
%\subsubsection{Computation of $\mathcal{U}_{\change}$.}
Let $u \in \mathcal{U}_{\expl}$ be an explicit node that is neither inserted into nor removed from the DAG when updating the RLSLP $\mathcal{G}^{R}$. 
Then, $u \in \mathcal{U}_{\change}$ if and only if either $u \in \mathcal{U}_{\children}$ or 
$u$ is a descendant of a node in $\mathcal{U}_{\children}$ within distance $\lceil \alpha + \log B \rceil - 1$. 
Hence, analogously to the algorithm for $\mathcal{R}_{\expl} \cup \mathcal{R}_{\impl}$,
$\mathcal{U}_{\change}$ is computed by traversing the DAG from the nodes in $\mathcal{U}_{\children}$.
This traversal takes $\mathcal{O}(|\mathcal{U}_{\change}|)$ time,
where $|\mathcal{U}_{\change}| \leq W_{\max}$.

\subsubsection{Recomputation of the first \texorpdfstring{$\lceil \beta B / \log \sigma \rceil$}{beta B / log sigma} characters of \texorpdfstring{$L_i$}{Li} and \texorpdfstring{$R_i$}{Ri}}
%\subparagraph{Recomputation of the first $\lceil \beta B / \log \sigma \rceil$ characters in $L_i$ and $R_i$}
%For each node $u_i$, the first $\lceil \beta B / \log \sigma \rceil$ characters of $L_i$ and of $R_i$ are each obtained by a random access query in $\mathcal{O}(H' + \beta B)$ time.
For each $u_i \in \mathcal{U}_{\change} \cup \mathcal{I}_{\expl}$, we recompute
the prefixes of $L_i$ and $R_i$ of lengths
$\min\{|L_i|,\lceil \beta B / \log \sigma \rceil\}$ and
$\min\{|R_i|,\lceil \beta B / \log \sigma \rceil\}$, respectively, by random
access queries in $\mathcal{O}(H' + \beta B)$ time.

\subsubsection{Recomputation of 
\texorpdfstring{$\mathbb{P}_{\vOcc}(u_i)$}{PvOcc(ui)}.}
For each $u_i \in \mathcal{U}_{\change} \cup \mathcal{I}_{\expl}$, we also
recompute $\mathbb{P}_{\vOcc}(u_i)$ as follows.

We start with a path $\mathbb{Q}$ consisting only of $u_i$ and extend it to
$\mathbb{P}_{\vOcc}(u_i)$ by a recursive algorithm. 
The algorithm traverses the DAG from $u_i$ toward the root. 
Suppose that $\mathbb{Q}$ has already been extended to the last node $u_{i_1}$
of a path
$\mathbb{P} = u_{i_d} \rightarrow u_{i_{d-1}} \rightarrow \cdots \rightarrow u_{i_1}$
in $\pathset$.
Let $\eta$ be the number of keys stored in the per-path hash table for
$\mathcal{E}_{\expl}(\mathbb{P})$.
For each $x \in \{1,2,\ldots,\eta\}$, let
$e_x \in \mathcal{E}_{\expl}(\mathbb{P})$ be an inter-path edge corresponding
to the $x$-th key in the doubly linked list for these $\eta$ keys, and let
$z_x$ be the number of directed edges from $\mathrm{src}(e_x)$ to
$\mathrm{dst}(e_x)$.
The key list is ordered so that the destinations of the corresponding edges
are encountered in order when traversing $\mathbb{P}$ from $u_{i_1}$ toward
$u_{i_d}$.
Let $L = \lceil \alpha + \log B \rceil$. 
We distinguish the following three cases.

\subparagraph{Case 1: $\eta = 0$.}
No inter-path edge enters $\mathbb{P}$. 
Hence, the first node $u_{i_d}$ of $\mathbb{P}$ is the root of the DAG. 
Since the root is explicit, $\mathbb{P}$ consists only of this root. 
Therefore, the current path $\mathbb{Q}$ has reached the root, and we return
$\mathbb{Q}$ as $\mathbb{P}_{\vOcc}(u_i)$.

\subparagraph{Case 2: $\eta = 1$.}
The unique inter-path edge entering $\mathbb{P}$ must enter $u_{i_d}$;
otherwise, $u_{i_d}$ would be the root of the DAG, contradicting $\eta=1$.
Thus, $\mathrm{dst}(e_1)=u_{i_d}$.

We compute $z_1$ in $\mathcal{O}(1)$ time using
Lemma~\ref{lem:basic_queries}(ii) and the corresponding key-value pair
$(\mathrm{src}(e_1), \mathrm{dst}(e_1))$.
The algorithm then extends $\mathbb{Q}$ further along $\mathbb{P}$ whenever
possible.

If $|\mathbb{Q}| + d - 1 \geq L$, then the length bound is reached before
$\mathbb{Q}$ can be extended to $\mathrm{src}(e_1)$. 
Let $s = L - |\mathbb{Q}|$.
We extend $\mathbb{Q}$ to $u_{i_{s+1}}$ and return it as
$\mathbb{P}_{\vOcc}(u_i)$.

Assume now that $|\mathbb{Q}| + d - 1 < L$.
If $z_1 = 1$, then $\mathbb{Q}$ can be extended to $\mathrm{src}(e_1)$.
We extend $\mathbb{Q}$ to $\mathrm{src}(e_1)$ and recursively apply the
algorithm to the path in $\pathset$ that contains $\mathrm{src}(e_1)$.

Otherwise, $z_1 \neq 1$, and $u_{i_d}$ has at least two incoming edges in
the DAG. Hence, $\mathbb{Q}$ cannot be extended beyond $u_{i_d}$ while
preserving the required property. 
Therefore, we extend $\mathbb{Q}$ to $u_{i_d}$ and return it as
$\mathbb{P}_{\vOcc}(u_i)$.

\subparagraph{Case 3: $\eta \geq 2$.}
Let $k \in \{1,2,\ldots,d\}$ be the integer satisfying
$u_{i_k} = \mathrm{dst}(e_1)$.
Then $u_{i_k}$ has at least two incoming edges in the DAG.
Indeed, $u_{i_k}$ has the incoming edge $e_1$.
If $k < d$, then $u_{i_k}$ also has the intra-path incoming edge from
$u_{i_{k+1}}$.
Otherwise, $k=d$, and the ordering of the key list implies that at least two
inter-path edges enter $u_{i_d}$.

The algorithm extends $\mathbb{Q}$ further along $\mathbb{P}$ whenever
possible.
If $|\mathbb{Q}| + k - 1 < L$, then $\mathbb{Q}$ can be extended to
$u_{i_k}$ but cannot be extended beyond $u_{i_k}$.
Therefore, we extend $\mathbb{Q}$ to $u_{i_k}$ and return it as
$\mathbb{P}_{\vOcc}(u_i)$.

Otherwise, the length bound is reached before $\mathbb{Q}$ can be extended
beyond $u_{i_{s+1}}$, where $s = L - |\mathbb{Q}|$.
Therefore, we extend $\mathbb{Q}$ to $u_{i_{s+1}}$ and return it as
$\mathbb{P}_{\vOcc}(u_i)$.

This recursive algorithm makes $\mathcal{O}(H')$ recursive calls, each taking
$\mathcal{O}(1)$ time. Thus, $\mathbb{P}_{\vOcc}(u_i)$ is computed in
$\mathcal{O}(H')$ time.

\subsubsection{Running time of Phase 3.}
%\subparagraph{Running time of Phase 3.}
Computing $\mathcal{U}_{\change}$ takes $\mathcal{O}(W_{\max})$ time,
and updating the additional information takes $\mathcal{O}((|\mathcal{U}_{\change}| + |\mathcal{I}_{\expl}|)(H' + B))$ time.
Therefore, the total running time is $\mathcal{O}(W_{\max} + (|\mathcal{U}_{\change}| + |\mathcal{I}_{\expl}|)(H' + B))$.

%%%%%%%%%%%%%%%%%%%%%%%%%%%%%%%%%%%%%%%%%%%%%%%%%%%%%%%%%%%%%%%%%%%%%%
\subsection{Proof of Lemma~\ref{lem:ancestor_upper_bound}}\label{app:ancestor_upper_bound}
\subparagraph{Proof of $|\mathcal{T}_{\diff}| = \mathcal{O}((m' + H) H')$.}
In the derivation tree, 
the nodes in $\mathcal{T}_{\diff}$ form a tree $\mathcal{T}$ of height $H'$ with $\mathcal{O}(|\mathcal{T}_{\leaf}| + m')$ leaves.
Here, $\mathcal{T}_{\leaf}$ is the set of internal nodes in the derivation tree of $\mathcal{G}^{R'}$
such that each internal node is a leaf of $\mathcal{T}$.
The nodes in $\mathcal{T}_{\leaf}$ correspond to distinct runs in $Q_{L}$ and $Q_{R}$,
resulting in $|\mathcal{T}_{\leaf}| = \mathcal{O}(\rho_{L} + \rho_{R})$,
where $\rho_{L}$ (respectively, $\rho_{R}$) is the number of runs in $Q_{L}$ (respectively, $Q_{R}$).
This is because the nodes corresponding to each run have the same parent (see Appendix~\ref{app:popped_sequence}).
Since $\rho_{L}, \rho_{R} = \mathcal{O}(H)$,
the tree has $\mathcal{O}((m' + H) H')$ nodes, resulting in $|\mathcal{T}_{\diff}| = \mathcal{O}((m' + H) H')$. 

\subparagraph{Proof of $|\mathcal{I}_{\impl}| = \mathcal{O}((m' + H) H')$.}
The nodes in $\mathcal{I}_{\impl}$ correspond to distinct nodes in $\mathcal{T}_{\diff}$, 
resulting in $|\mathcal{I}_{\impl}| = \mathcal{O}((m' + H) H')$. 

\subparagraph{Proof of $|\mathcal{R}_{\impl}| = \mathcal{O}(H^2)$.}
The popped sequences $Q_{L}$ and $Q_{R}$ are embedded in the derivation tree of $\mathcal{G}^{R}$ at two positions $1$ and $s$ in $T$, respectively. 
Similar to $\mathcal{T}_{\diff}$, 
we consider the set $\mathcal{T}'_{\diff}$ obtained by collecting the ancestors of each root of the subtrees. 
Then, 
$\mathcal{T}'_{\diff}$ forms a tree $\mathcal{T}'$ of height $H$ with $\mathcal{O}(|\mathcal{T}'_{\leaf}|)$ leaves. 
Here, $\mathcal{T}'_{\leaf}$ is the set of internal nodes in the derivation tree of $\mathcal{G}^{R}$
such that each internal node is a leaf of $\mathcal{T}'$.
Similar to $\mathcal{T}_{\leaf}$, 
we obtain $|\mathcal{T}'_{\leaf}| = \mathcal{O}(H)$. 
The tree has $\mathcal{O}(H^{2})$ nodes, resulting in $|\mathcal{T}'_{\diff}| = \mathcal{O}(H^{2})$. 
The nodes in $\mathcal{R}_{\impl}$ correspond to distinct nodes in $\mathcal{T}'_{\diff}$, 
resulting in $|\mathcal{R}_{\impl}| = \mathcal{O}(H^{2})$. 

\subparagraph{Proof of $|\mathcal{I}_{\expl}| = \mathcal{O}(m' + H)$ and $|\mathcal{R}_{\expl}| = \mathcal{O}(H)$.}
The nodes in $\mathcal{I}_{\expl}$ correspond to the leaves and internal nodes of $\mathcal{T}$ with degree at least $2$.
The number of such nodes is $\mathcal{O}(m' + H)$, 
resulting in $|\mathcal{I}_{\expl}| = \mathcal{O}(m' + H)$. 
Similarly, we obtain $|\mathcal{R}_{\expl}| = \mathcal{O}(H)$. 

\subparagraph{Proof of $|\mathcal{U}_{\change}| = \mathcal{O}(HB)$.}
For each node $v \in \mathcal{T}_{\leaf} \cup \mathcal{T}'_{\leaf}$, 
let $f(v)$ be the lowest ancestor of $v$ such that 
$f'(v)$ is inserted into or removed from the DAG when the RLSLP is updated, 
where $f'(v)$ is the DAG-node corresponding to $f(v)$. 
Let $\tau = |\{ f'(v) \mid v \in \mathcal{T}_{\leaf} \cup \mathcal{T}'_{\leaf}\}|$.
Since each node in the DAG has at most two children, 
any node has at most $2^{\lceil \alpha + \log B \rceil}$ descendants within distance $\lceil \alpha + \log B \rceil$. 
Therefore, $|\mathcal{U}_{\change}| \leq 2^{\lceil \alpha + \log B \rceil}\tau$,
because each node in $\mathcal{U}_{\change}$ is a descendant of some node in
$\{ f'(v) \mid v \in \mathcal{T}_{\leaf} \cup \mathcal{T}'_{\leaf} \}$
at distance at most $\lceil \alpha + \log B \rceil$. 
Moreover, $\tau = \mathcal{O}(H)$ follows from
$|\mathcal{T}_{\leaf}|, |\mathcal{T}'_{\leaf}| = \mathcal{O}(H)$.
Hence, $|\mathcal{U}_{\change}| = \mathcal{O}(2^{\lceil \alpha + \log B \rceil} H) = \mathcal{O}(HB)$.

%%%%%%%%%%%%%%%%%%%%%%%%%%%%%%%%%%%%%%%%%%%%%%%%%%%%%%%%%%%%%%%%%%%%%%
\section{Details of the Deletion Operation}\label{app:deletion_operation}
We focus on the case $s \neq 1$ and $s+m'-1 \neq n$, so that
$T'$ is the concatenation of $T[1..s-1]$ and $T[s+m'..n]$.
Let $Q_{L}$ and $Q_{R}$ be the popped sequences of the prefix $T[1..s-1]$ and suffix $T[s+m'..n]$;
they are embedded as subtrees in the derivation tree of $\mathcal{G}^{R'}$ at positions $1$ and $s$ in $T'$, respectively,
and the computation of these subtrees can be skipped.

The update of the dynamic RR-index proceeds in three phases, paralleling the insertion algorithm.
Phase~1 constructs the derivation tree of $\mathcal{G}^{R'} = (\mathcal{V}', \Sigma', \mathcal{D}', E')$ by restricted recompression, computing the sequences $S'^{0}, S'^{1}, \ldots, S'^{H'}$ in a bottom-up manner.
During this construction, the nodes corresponding to the embeddings of $Q_{L}$ and $Q_{R}$ are skipped.
For each nonterminal in $\mathcal{V}' \setminus \mathcal{V}$, the corresponding node is inserted into the DAG, and the dynamic data structures are updated.
Phases~2 and~3 are identical to those of the insertion algorithm.

Thus, the expected amortized running time is $\mathcal{O}(m' \log^{2} n + \log^{3} n)$.

%% file: tex/9d_appendix.tex
\clearpage
\section{Details of Experiments}\label{app:experiment}
\begin{table}[t]
\scriptsize
\centering
\caption{
Per-dataset statistics, summarized in Section~\ref{sec:experiment} for \Dataset{chr19}. 
$\sigma$ is the alphabet size of the string $T$;
$n$ is the length of $T$; 
$\delta$ is the substring complexity of $T$;
$M$ is the number of explicit nodes in the DAG representing the derivation tree built by restricted recompression; $H$ is the height of the derivation tree.
$\occ_{10}$, $\occ_{100}$, and $\occ_{1000}$ denote the average number of occurrences in $T$ of 1{,}000 patterns of lengths 10, 100, and 1000, respectively, used for the locate queries.
}
\label{tab:dataset-stats_full}
\begin{tabular}{l||r|r|r|r|r|r|r|r}
\hline
Data & $\sigma$ & $n$ [$10^3$] & $\delta$ [$10^3$] & $M$ [$10^3$] & $H$ & $\occ_{10}$ & $\occ_{100}$ & $\occ_{1000}$ \\
\hline
\hline
\Dataset{cere}             & 5   & 461,287    & 1,003 & 4,874  & 304  & 1,046 & 29 & 6 \\
\Dataset{coreutils}        & 236 & 205,282    & 636   & 3,828  & 292  & 7,308 & 154 & 16 \\
\Dataset{einstein.de.txt}  & 117 & 92,758     & 16    & 122    & 272    & 5,983 & 615 & 130 \\
\Dataset{einstein.en.txt}  & 139 & 467,627    & 42    & 320    & 300    & 21,097 & 2,547 & 604 \\
\Dataset{Escherichia Coli} & 15  & 112,690    & 1,338 & 4,897  & 282  & 214 & 7 & 2 \\
\Dataset{influenza}        & 15  & 154,809    & 282   & 2,576  & 286  & 2,993 & 307 & 14 \\
\Dataset{kernel}           & 160 & 257,962    & 406   & 2,069  & 302  & 2,857 & 53 & 26 \\
\Dataset{para}             & 5   & 429,266    & 1,369 & 6,333  & 298  & 916 & 20 & 10 \\
\Dataset{world leaders}    & 89  & 46,968     & 69    & 580    & 260    & 24,326 & 1,217 & 10 \\
\hline
\Dataset{enwiki}           & 206 & 37,227,587 & 7,306 & 52,806 & 366 & 525,289 & 1,672 & 427 \\
\Dataset{chr19}            & 4   & 59,125,169 & 2,715 & 27,336 & 372 & 881,774 & 940 & 504 \\
\hline
\end{tabular}
\end{table}

\subsection{Dataset Statistics}\label{app:dataset-stats_full}
Table~\ref{tab:dataset-stats_full} reports the relevant statistics for every dataset.

\subsection{Results for Other Datasets}\label{app:result_other_datasets}
\begin{table}[t]
\scriptsize
\centering
\caption{
Average time and standard deviation for substring insertions on every dataset. 
For the (static) r-index, each row shows the construction time, since it does not support updates. 
The fastest dynamic method in each row is shown in bold.
}
\label{tab:insertion_time_full}
\begin{tabular}{l||r|r|r|r}
\hline
 & \multicolumn{4}{c}{Insertion Time for $m' = 1$ (ms)} \\
Data & Dynamic RR-index & Dynamic SE-index & Dynamic r-index & r-index \\
\hline
\hline
\Dataset{cere}             & $\mathbf{24 \pm 7}$  & $36 \pm 6$  & $62 \pm 115$    & 378,650 \\
\Dataset{coreutils}        & $\mathbf{27 \pm 7}$  & $31 \pm 7$  & $2,051 \pm 4,641$ & 143,300 \\
\Dataset{einstein.de.txt}  & $\mathbf{13 \pm 2}$  & $16 \pm 2$  & $221 \pm 177$   & 41,880 \\
\Dataset{einstein.en.txt}  & $\mathbf{14 \pm 2}$  & $24 \pm 3$  & $497 \pm 436$   & 207,580 \\
\Dataset{Escherichia Coli} & $\mathbf{22 \pm 8}$  & $31 \pm 6$  & $158 \pm 565$   & 152,760 \\
\Dataset{influenza}        & $21 \pm 4$  & $33 \pm 6$  & $\mathbf{6 \pm 8}$       & 109,180 \\
\Dataset{kernel}           & $43 \pm 6$  & $\mathbf{30 \pm 6}$  & $2,365 \pm 3,702$ & 154,020 \\
\Dataset{para}             & $\mathbf{25 \pm 9}$  & $64 \pm 8$  & $29 \pm 39$     & 409,420 \\
\Dataset{world leaders}    & $\mathbf{15 \pm 3}$  & $20 \pm 3$  & $65 \pm 331$    & 27,660 \\
\hline
\Dataset{enwiki}           & $\mathbf{51 \pm 56}$ & $117 \pm 46$ & $883 \pm 829$  & 38,921,100 \\
\Dataset{chr19}            & $\mathbf{55 \pm 29}$ & $73 \pm 25$ & $824 \pm 2,317$  & 52,106,240 \\
\hline
\multicolumn{5}{c}{} \\
\hline
 & \multicolumn{4}{c}{Insertion Time for $m' = 10$ (ms)} \\
Data & Dynamic RR-index & Dynamic SE-index & Dynamic r-index & r-index \\
\hline
\hline
\Dataset{cere}             & $\mathbf{24 \pm 7}$  & $37 \pm 7$  & $60 \pm 118$    & 378,650 \\
\Dataset{coreutils}        & $\mathbf{27 \pm 8}$  & $30 \pm 6$  & $2,099 \pm 4,414$ & 143,300 \\
\Dataset{einstein.de.txt}  & $\mathbf{13 \pm 2}$  & $17 \pm 2$  & $219 \pm 172$   & 41,880 \\
\Dataset{einstein.en.txt}  & $\mathbf{15 \pm 3}$  & $24 \pm 3$  & $503 \pm 445$   & 207,580 \\
\Dataset{Escherichia Coli} & $\mathbf{21 \pm 7}$  & $35 \pm 8$  & $133 \pm 516$   & 152,760 \\
\Dataset{influenza}        & $21 \pm 4$  & $32 \pm 5$  & $\mathbf{7 \pm 8}$       & 109,180 \\
\Dataset{kernel}           & $44 \pm 6$  & $\mathbf{27 \pm 4}$  & $2,150 \pm 3,417$ & 154,020 \\
\Dataset{para}             & $\mathbf{28 \pm 10}$ & $66 \pm 10$ & $32 \pm 45$     & 409,420 \\
\Dataset{world leaders}    & $\mathbf{15 \pm 3}$  & $20 \pm 3$  & $87 \pm 426$    & 27,660 \\
\hline
\Dataset{enwiki}           & $\mathbf{55 \pm 58}$ & $111 \pm 44$ & $818 \pm 770$  & 38,921,100 \\
\Dataset{chr19}            & $\mathbf{55 \pm 29}$ & $74 \pm 25$ & $919 \pm 2,781$  & 52,106,240 \\
\hline
\multicolumn{5}{c}{} \\
\hline
 & \multicolumn{4}{c}{Insertion Time for $m' = 100$ (ms)} \\
Data & Dynamic RR-index & Dynamic SE-index & Dynamic r-index & r-index \\
\hline
\hline
\Dataset{cere}             & $\mathbf{24 \pm 7}$  & $39 \pm 8$  & $56 \pm 114$    & 378,650 \\
\Dataset{coreutils}        & $\mathbf{29 \pm 8}$  & $34 \pm 7$  & $2,053 \pm 4,573$ & 143,300 \\
\Dataset{einstein.de.txt}  & $\mathbf{14 \pm 2}$  & $18 \pm 2$  & $226 \pm 180$   & 41,880 \\
\Dataset{einstein.en.txt}  & $\mathbf{15 \pm 2}$  & $25 \pm 3$  & $521 \pm 448$   & 207,580 \\
\Dataset{Escherichia Coli} & $\mathbf{21 \pm 7}$  & $33 \pm 7$  & $131 \pm 484$   & 152,760 \\
\Dataset{influenza}        & $21 \pm 4$  & $37 \pm 7$  & $\mathbf{7 \pm 8}$       & 109,180 \\
\Dataset{kernel}           & $47 \pm 6$  & $\mathbf{28 \pm 4}$  & $2,156 \pm 3,491$ & 154,020 \\
\Dataset{para}             & $\mathbf{29 \pm 10}$ & $71 \pm 11$ & $31 \pm 36$     & 409,420 \\
\Dataset{world leaders}    & $\mathbf{14 \pm 2}$  & $21 \pm 3$  & $72 \pm 382$    & 27,660 \\
\hline
\Dataset{enwiki}           & $\mathbf{50 \pm 55}$ & $117 \pm 43$ & $841 \pm 849$  & 38,921,100 \\
\Dataset{chr19}            & $\mathbf{62 \pm 31}$ & $83 \pm 29$ & $925 \pm 2,713$  & 52,106,240 \\
\hline
\multicolumn{5}{c}{} \\
\hline
 & \multicolumn{4}{c}{Insertion Time for $m' = 1000$ (ms)} \\
Data & Dynamic RR-index & Dynamic SE-index & Dynamic r-index & r-index \\
\hline
\hline
\Dataset{cere}             & $\mathbf{26 \pm 7}$  & $44 \pm 9$  & $60 \pm 104$    & 378,650 \\
\Dataset{coreutils}        & $\mathbf{28 \pm 6}$  & $40 \pm 8$  & $2,025 \pm 4,282$ & 143,300 \\
\Dataset{einstein.de.txt}  & $\mathbf{15 \pm 2}$  & $19 \pm 2$  & $223 \pm 173$   & 41,880 \\
\Dataset{einstein.en.txt}  & $\mathbf{15 \pm 2}$  & $27 \pm 3$  & $520 \pm 498$   & 207,580 \\
\Dataset{Escherichia Coli} & $\mathbf{23 \pm 7}$  & $37 \pm 7$  & $148 \pm 482$   & 152,760 \\
\Dataset{influenza}        & $23 \pm 4$  & $36 \pm 5$  & $\mathbf{11 \pm 8}$      & 109,180 \\
\Dataset{kernel}           & $53 \pm 9$  & $\mathbf{33 \pm 5}$  & $1,997 \pm 3,061$ & 154,020 \\
\Dataset{para}             & $\mathbf{32 \pm 11}$ & $75 \pm 9$  & $35 \pm 37$     & 409,420 \\
\Dataset{world leaders}    & $\mathbf{16 \pm 3}$  & $22 \pm 3$  & $86 \pm 435$    & 27,660 \\
\hline
\Dataset{enwiki}           & $\mathbf{58 \pm 56}$ & $125 \pm 43$ & $885 \pm 824$  & 38,921,100 \\
\Dataset{chr19}            & $\mathbf{63 \pm 30}$ & $81 \pm 25$ & $1,122 \pm 3,625$ & 52,106,240 \\
\hline
\end{tabular}
\end{table}

\begin{table}[t]
\scriptsize
\centering
\caption{
Average time and standard deviation for substring deletions on every dataset. For the (static) r-index, each row shows the construction time, since it does not support updates. 
The fastest dynamic method in each row is shown in bold.
}
\label{tab:deletion_time_full}
\begin{tabular}{l||r|r|r|r}
\hline
 & \multicolumn{4}{c}{Deletion Time for $m' = 1$ (ms)} \\
Data & Dynamic RR-index & Dynamic SE-index & Dynamic r-index & r-index \\
\hline
\hline
\Dataset{cere}             & $\mathbf{21 \pm 2}$ & $31 \pm 3$  & $55 \pm 109$    & 378,650 \\
\Dataset{coreutils}        & $\mathbf{22 \pm 3}$ & $26 \pm 4$  & $1,891 \pm 4,448$ & 143,300 \\
\Dataset{einstein.de.txt}  & $\mathbf{11 \pm 1}$ & $16 \pm 2$  & $215 \pm 183$   & 41,880 \\
\Dataset{einstein.en.txt}  & $\mathbf{13 \pm 1}$ & $23 \pm 2$  & $484 \pm 459$   & 207,580 \\
\Dataset{Escherichia Coli} & $\mathbf{19 \pm 3}$ & $27 \pm 3$  & $143 \pm 533$   & 152,760 \\
\Dataset{influenza}        & $18 \pm 2$ & $29 \pm 3$  & $\mathbf{6 \pm 7}$       & 109,180 \\
\Dataset{kernel}           & $32 \pm 3$ & $\mathbf{25 \pm 4}$  & $2,186 \pm 3,591$ & 154,020 \\
\Dataset{para}             & $\mathbf{22 \pm 3}$ & $53 \pm 5$  & $25 \pm 36$     & 409,420 \\
\Dataset{world leaders}    & $\mathbf{14 \pm 2}$ & $19 \pm 2$  & $60 \pm 313$    & 27,660 \\
\hline
\Dataset{enwiki}           & $\mathbf{42 \pm 8}$ & $92 \pm 13$ & $808 \pm 792$   & 38,921,100 \\
\Dataset{chr19}            & $\mathbf{49 \pm 9}$ & $62 \pm 6$  & $775 \pm 2,240$  & 52,106,240 \\
\hline
\multicolumn{5}{c}{} \\
\hline
 & \multicolumn{4}{c}{Deletion Time for $m' = 10$ (ms)} \\
Data & Dynamic RR-index & Dynamic SE-index & Dynamic r-index & r-index \\
\hline
\hline
\Dataset{cere}             & $\mathbf{21 \pm 2}$ & $31 \pm 3$ & $54 \pm 112$    & 378,650 \\
\Dataset{coreutils}        & $\mathbf{22 \pm 3}$ & $26 \pm 3$ & $1,921 \pm 4,200$ & 143,300 \\
\Dataset{einstein.de.txt}  & $\mathbf{12 \pm 2}$ & $16 \pm 2$ & $214 \pm 179$   & 41,880 \\
\Dataset{einstein.en.txt}  & $\mathbf{14 \pm 3}$ & $23 \pm 2$ & $488 \pm 450$   & 207,580 \\
\Dataset{Escherichia Coli} & $\mathbf{18 \pm 2}$ & $29 \pm 4$ & $119 \pm 494$   & 152,760 \\
\Dataset{influenza}        & $18 \pm 2$ & $28 \pm 3$ & $\mathbf{6 \pm 8}$       & 109,180 \\
\Dataset{kernel}           & $33 \pm 3$ & $\mathbf{24 \pm 2}$ & $1,978 \pm 3,263$ & 154,020 \\
\Dataset{para}             & $\mathbf{24 \pm 5}$ & $55 \pm 6$ & $28 \pm 43$     & 409,420 \\
\Dataset{world leaders}    & $\mathbf{14 \pm 2}$ & $19 \pm 2$ & $82 \pm 415$    & 27,660 \\
\hline
\Dataset{enwiki}           & $\mathbf{44 \pm 8}$ & $89 \pm 8$ & $744 \pm 736$   & 38,921,100 \\
\Dataset{chr19}            & $\mathbf{49 \pm 9}$ & $63 \pm 6$ & $855 \pm 2,636$  & 52,106,240 \\
\hline
\multicolumn{5}{c}{} \\
\hline
 & \multicolumn{4}{c}{Deletion Time for $m' = 100$ (ms)} \\
Data & Dynamic RR-index & Dynamic SE-index & Dynamic r-index & r-index \\
\hline
\hline
\Dataset{cere}             & $\mathbf{22 \pm 2}$  & $33 \pm 4$  & $51 \pm 108$    & 378,650 \\
\Dataset{coreutils}        & $\mathbf{23 \pm 4}$  & $28 \pm 4$  & $1,901 \pm 4,444$ & 143,300 \\
\Dataset{einstein.de.txt}  & $\mathbf{12 \pm 2}$  & $17 \pm 2$  & $224 \pm 195$   & 41,880 \\
\Dataset{einstein.en.txt}  & $\mathbf{14 \pm 1}$  & $24 \pm 2$  & $505 \pm 451$   & 207,580 \\
\Dataset{Escherichia Coli} & $\mathbf{19 \pm 2}$  & $29 \pm 3$  & $119 \pm 455$   & 152,760 \\
\Dataset{influenza}        & $19 \pm 2$  & $31 \pm 4$  & $\mathbf{6 \pm 7}$       & 109,180 \\
\Dataset{kernel}           & $34 \pm 3$  & $\mathbf{25 \pm 3}$  & $1,993 \pm 3,381$ & 154,020 \\
\Dataset{para}             & $\mathbf{25 \pm 6}$  & $57 \pm 6$  & $27 \pm 34$     & 409,420 \\
\Dataset{world leaders}    & $\mathbf{14 \pm 2}$  & $19 \pm 2$  & $67 \pm 369$    & 27,660 \\
\hline
\Dataset{enwiki}           & $\mathbf{42 \pm 5}$  & $92 \pm 8$  & $763 \pm 809$   & 38,921,100 \\
\Dataset{chr19}            & $\mathbf{54 \pm 12}$ & $69 \pm 11$ & $866 \pm 2,597$  & 52,106,240 \\
\hline
\multicolumn{5}{c}{} \\
\hline
 & \multicolumn{4}{c}{Deletion Time for $m' = 1000$ (ms)} \\
Data & Dynamic RR-index & Dynamic SE-index & Dynamic r-index & r-index \\
\hline
\hline
\Dataset{cere}             & $\mathbf{22 \pm 3}$  & $37 \pm 5$  & $56 \pm 99$     & 378,650 \\
\Dataset{coreutils}        & $\mathbf{23 \pm 2}$  & $32 \pm 5$  & $1,871 \pm 4,131$ & 143,300 \\
\Dataset{einstein.de.txt}  & $\mathbf{12 \pm 1}$  & $18 \pm 2$  & $218 \pm 178$   & 41,880 \\
\Dataset{einstein.en.txt}  & $\mathbf{14 \pm 1}$  & $25 \pm 3$  & $506 \pm 508$   & 207,580 \\
\Dataset{Escherichia Coli} & $\mathbf{20 \pm 2}$  & $31 \pm 3$  & $138 \pm 458$   & 152,760 \\
\Dataset{influenza}        & $20 \pm 2$  & $31 \pm 3$  & $\mathbf{13 \pm 8}$      & 109,180 \\
\Dataset{kernel}           & $36 \pm 5$  & $\mathbf{28 \pm 4}$  & $1,847 \pm 2,956$ & 154,020 \\
\Dataset{para}             & $\mathbf{27 \pm 6}$  & $60 \pm 6$  & $34 \pm 35$     & 409,420 \\
\Dataset{world leaders}    & $\mathbf{15 \pm 2}$  & $20 \pm 2$  & $84 \pm 429$    & 27,660 \\
\hline
\Dataset{enwiki}           & $\mathbf{47 \pm 8}$  & $97 \pm 9$  & $810 \pm 783$   & 38,921,100 \\
\Dataset{chr19}            & $\mathbf{55 \pm 11}$ & $69 \pm 7$  & $1,038 \pm 3,357$ & 52,106,240 \\
\hline
\end{tabular}
\end{table}

\begin{table}[t]
\scriptsize 
\centering
\caption{Average locate time and standard deviation for patterns 
of length $m \in \{10, 100, 1000\}$ on every dataset.
The fastest dynamic method in each row is shown in bold.
}
\label{tab:locate-time_full}
\begin{tabular}{l||r|r|r|r}
\hline
 & \multicolumn{4}{c}{Locate Time for $m = 10$ (ms)} \\
Data & Dynamic RR-index & Dynamic SE-index & Dynamic r-index & r-index \\
\hline\hline
\Dataset{cere}             & $\mathbf{1.8 \pm 2.9}$ & $3.5 \pm 6.5$ & $3.2 \pm 6.9$ & $0.4 \pm 0.9$ \\
\Dataset{coreutils}        & $\mathbf{8.3 \pm 51.1}$ & $13.6 \pm 71.5$ & $11.4 \pm 48.8$ & $1.5 \pm 6.0$ \\
\Dataset{einstein.de.txt}  & $\mathbf{1.1 \pm 2.9}$ & $2.5 \pm 9.7$ & $6.6 \pm 26.5$ & $1.0 \pm 4.0$ \\
\Dataset{einstein.en.txt}  & $\mathbf{2.3 \pm 6.2}$ & $7.7 \pm 27.5$ & $22.0 \pm 81.0$ & $2.5 \pm 9.0$ \\
\Dataset{Escherichia Coli} & $\mathbf{1.1 \pm 0.8}$ & $1.9 \pm 1.1$ & $1.3 \pm 1.3$ & $0.2 \pm 0.1$ \\
\Dataset{influenza}        & $\mathbf{4.2 \pm 4.5}$ & $7.7 \pm 6.3$ & $5.2 \pm 4.3$ & $0.7 \pm 0.6$ \\
\Dataset{kernel}           & $\mathbf{1.3 \pm 5.3}$ & $2.8 \pm 11.1$ & $4.7 \pm 21.4$ & $0.7 \pm 3.2$ \\
\Dataset{para}             & $\mathbf{2.0 \pm 4.4}$ & $3.7 \pm 4.2$ & $3.2 \pm 3.7$ & $0.4 \pm 0.5$ \\
\Dataset{world leaders}    & $27.4 \pm 78.5$ & $\mathbf{25.4 \pm 61.4}$ & $31.1 \pm 84.4$ & $3.2 \pm 8.9$ \\
\hline
\Dataset{enwiki} & $\mathbf{101.8 \pm 483.7}$ & $532.1 \pm 2{,}979.3$ & $1{,}093.3 \pm 6{,}018.5$ & $96.3 \pm 546.9$ \\
\Dataset{chr19} & $\mathbf{330.0 \pm 781.4}$ & $2{,}362.9 \pm 6{,}722.4$ & $3{,}793.2 \pm 10{,}835.2$ & $298.1 \pm 845.1$ \\
\hline 
\multicolumn{5}{c}{} \\
\hline
 & \multicolumn{4}{c}{Locate Time for $m = 100$ (ms)} \\
Data & Dynamic RR-index & Dynamic SE-index & Dynamic r-index & r-index \\
\hline\hline
\Dataset{cere}             & $\mathbf{0.6 \pm 2.1}$ & $1.7 \pm 1.5$ & $\mathbf{0.6 \pm 0.6}$ & $0.2 \pm 0.1$ \\
\Dataset{coreutils}        & $\mathbf{1.0 \pm 1.5}$ & $2.0 \pm 1.6$ & $\mathbf{1.0 \pm 0.9}$ & $0.3 \pm 0.2$ \\
\Dataset{einstein.de.txt}  & $\mathbf{0.7 \pm 0.3}$ & $1.4 \pm 0.4$ & $1.2 \pm 0.7$ & $0.3 \pm 0.1$ \\
\Dataset{einstein.en.txt}  & $\mathbf{1.3 \pm 0.9}$ & $2.6 \pm 1.2$ & $3.4 \pm 2.3$ & $0.6 \pm 0.3$ \\
\Dataset{Escherichia Coli} & $\mathbf{0.4 \pm 0.1}$ & $1.5 \pm 0.2$ & $0.6 \pm 0.1$ & $0.1 \pm 0.0$ \\
\Dataset{influenza}        & $1.9 \pm 2.5$ & $3.4 \pm 3.3$ & $\mathbf{0.9 \pm 0.9}$ & $0.2 \pm 0.1$ \\
\Dataset{kernel}           & $\mathbf{0.6 \pm 0.2}$ & $1.4 \pm 0.3$ & $1.0 \pm 0.5$ & $0.2 \pm 0.1$ \\
\Dataset{para}             & $\mathbf{0.5 \pm 0.3}$ & $1.6 \pm 0.2$ & $0.6 \pm 0.1$ & $0.2 \pm 0.0$ \\
\Dataset{world leaders}    & $2.5 \pm 6.5$ & $3.5 \pm 6.6$ & $\mathbf{2.2 \pm 5.0}$ & $0.4 \pm 0.7$ \\
\hline
\Dataset{enwiki} & $\mathbf{2.1 \pm 1.5}$ & $7.3 \pm 4.9$ & $5.3 \pm 4.3$ & $0.8 \pm 0.5$ \\
\Dataset{chr19} & $\mathbf{1.3 \pm 0.5}$ & $8.3 \pm 2.9$ & $5.2 \pm 1.7$ & $0.6 \pm 0.2$ \\
\hline 
\multicolumn{5}{c}{} \\
\hline
 & \multicolumn{4}{c}{Locate Time for $m = 1000$ (ms)} \\
Data & Dynamic RR-index & Dynamic SE-index & Dynamic r-index & r-index \\
\hline\hline
\Dataset{cere}               & $\mathbf{1.5 \pm 0.3}$ & $4.2 \pm 0.7$ & $3.8 \pm 0.3$ & $1.1 \pm 0.2$ \\
\Dataset{coreutils}          & $\mathbf{2.1 \pm 0.4}$ & $4.1 \pm 0.5$ & $6.7 \pm 0.3$ & $1.5 \pm 0.4$ \\
\Dataset{einstein.de.txt}    & $\mathbf{1.6 \pm 0.5}$ & $3.2 \pm 0.9$ & $3.6 \pm 0.7$ & $1.3 \pm 0.4$ \\
\Dataset{einstein.en.txt}    & $\mathbf{2.1 \pm 0.8}$ & $3.9 \pm 0.9$ & $5.3 \pm 0.9$ & $1.1 \pm 0.2$ \\
\Dataset{Escherichia Coli}   & $\mathbf{1.5 \pm 0.2}$ & $3.8 \pm 0.3$ & $4.8 \pm 0.2$ & $1.0 \pm 0.2$ \\
\Dataset{influenza}          & $\mathbf{2.4 \pm 0.8}$ & $4.4 \pm 1.1$ & $3.9 \pm 0.2$ & $1.0 \pm 0.2$ \\
\Dataset{kernel}             & $\mathbf{1.8 \pm 0.3}$ & $3.8 \pm 0.6$ & $6.3 \pm 0.2$ & $1.4 \pm 0.4$ \\
\Dataset{para}               & $\mathbf{1.5 \pm 0.3}$ & $4.2 \pm 0.5$ & $4.0 \pm 0.3$ & $0.9 \pm 0.1$ \\
\Dataset{world leaders}      & $\mathbf{1.6 \pm 0.4}$ & $2.9 \pm 0.7$ & $4.1 \pm 0.5$ & $1.2 \pm 0.4$ \\
\hline
\Dataset{enwiki}             & $\mathbf{3.6 \pm 0.9}$ & $8.4 \pm 1.7$ & $14.1 \pm 3.4$ & $2.8 \pm 2.0$ \\
\Dataset{chr19}              & $\mathbf{3.2 \pm 1.0}$ & $9.6 \pm 1.9$ & $8.4 \pm 1.8$ & $2.4 \pm 2.0$ \\
\hline

\end{tabular}

\end{table}

\begin{table}[t]
\scriptsize 
\centering
\caption{Working space during locate queries for all datasets.
The smallest working space among the dynamic methods in each row is shown in bold.}
\label{tab:index_size_full}
\begin{tabular}{l||r|r|r|r}
\hline
 & \multicolumn{4}{c}{Working Space (MiB)} \\
Data & Dynamic RR-index & Dynamic SE-index & Dynamic r-index & r-index \\
\hline
\hline
\Dataset{cere}              & 685 & 826 & \textbf{260} & 110 \\
\Dataset{coreutils}         & 522 & 602 & \textbf{216} & 48 \\
\Dataset{einstein.de.txt}   & 19 & 22 & \textbf{12} & 2 \\
\Dataset{einstein.en.txt}   & 53 & 54 & \textbf{38} & 4 \\
\Dataset{Escherichia Coli}  & 691 & 781 & \textbf{448} & 127 \\
\Dataset{influenza}         & 354 & 420 & \textbf{77} & 29 \\
\Dataset{kernel}            & 283 & 324 & \textbf{137} & 29 \\
\Dataset{para}              & 886 & 1{,}040 & \textbf{473} & 145 \\
\Dataset{world leaders}     & 84 & 102 & \textbf{42} & 6 \\
\hline
\Dataset{enwiki} & 7{,}515 & 8{,}332 & \textbf{1{,}482} & 887 \\
\Dataset{chr19} & 3{,}787 & 4{,}810 & \textbf{2{,}297} & 541 \\
\hline
\end{tabular}
\end{table}

\begin{table}[t]
\scriptsize 
\centering
\caption{Construction time for all datasets.
The fastest construction time among the dynamic methods in each row is shown in bold.}
\label{tab:construction-time_full}
\begin{tabular}{l||r|r|r|r}
\hline
 & \multicolumn{4}{c}{Construction Time (min)} \\
Data & Dynamic RR-index & Dynamic SE-index & Dynamic r-index & r-index \\
\hline\hline
\Dataset{cere}             & \textbf{4.9} & 5.3 & 6.3 & 5.7 \\
\Dataset{coreutils}        & 2.8 & 3.2 & \textbf{2.4} & 2.2 \\
\Dataset{einstein.de.txt}  & 0.9 & \textbf{0.4} & 0.7 & 0.6 \\
\Dataset{einstein.en.txt}  & 4.6 & \textbf{2.1} & 3.5 & 3.1 \\
\Dataset{Escherichia Coli} & \textbf{1.9} & 3.5 & 2.5 & 1.9 \\
\Dataset{influenza}        & 1.9 & 2.2 & \textbf{1.8} & 1.6 \\
\Dataset{kernel}           & 2.9 & \textbf{2.4} & 2.6 & 2.5 \\
\Dataset{para}             & \textbf{5.1} & 6.2 & 6.8 & 5.9 \\
\Dataset{world leaders}    & \textbf{0.4} & \textbf{0.4} & 0.5 & 0.4 \\
\hline
\Dataset{enwiki}           & 469.4 & \textbf{315.1} & 648.7 & 664.1 \\
\Dataset{chr19}            & 554.6 & \textbf{285.8} & 868.4 & 863.4 \\
\hline
\end{tabular}
\end{table}
\subparagraph{Results for update operations.}
%Table~\ref{tab:insertion_time_full} reports the average time and standard deviation of substring insertions on all datasets. 
%See the full version~\cite{DBLP:journals/corr/abs-2604-24080} for
%the corresponding results for substring deletions. 
Tables~\ref{tab:insertion_time_full}--\ref{tab:deletion_time_full} report the average time and standard deviation for string insertions and substring deletions on all datasets. 
For update operations, the dynamic RR-index was either the fastest dynamic index or competitive with the fastest one in most settings. 
On \Dataset{enwiki} with $m' = 1000$, it was approximately $2.1\times$ faster than the dynamic SE-index and $16\times$ faster than the dynamic r-index. On the Pizza\&Chili repetitive corpus with $m' = 1000$, the best-case speedups were approximately $2.3\times$ over the dynamic SE-index (\Dataset{para}) and $77\times$ over the dynamic r-index (\Dataset{coreutils}), with insertions and deletions considered together.

\subparagraph{Results for locate queries.}
Table~\ref{tab:locate-time_full} reports the average time and standard deviation for 1{,}000 locate queries with short, medium, and long patterns on all datasets.
On \Dataset{enwiki}, the dynamic RR-index matched the r-index for short patterns and was only about $2.6\times$ slower for medium patterns; against the other dynamic indexes, it was at least $5\times$ faster for short patterns.

On most datasets and pattern lengths, the r-index was the fastest, and the dynamic
RR-index was usually the fastest among the dynamic indexes, with a few exceptions:
the dynamic SE-index was faster on \Dataset{world leaders} for short patterns, and
the dynamic r-index was faster on \Dataset{influenza} and \Dataset{world leaders}
for medium patterns.
For long patterns, however, it was at most about $2.5\times$ slower than the r-index. In the best case, the dynamic RR-index was at least $3\times$ faster than both other dynamic indexes for short patterns, at least $2\times$ for medium patterns, and at least $2.6\times$ for long patterns.

\subparagraph{Results for memory consumption.}
Table~\ref{tab:index_size_full} reports the working space used during locate queries on all datasets. 
Among the three dynamic indexes, the dynamic r-index was the most space-efficient, followed by the dynamic RR-index. On \Dataset{enwiki}, the dynamic RR-index used approximately $5\times$ the memory of the dynamic r-index. On the Pizza\&Chili repetitive corpus, this ratio ranged from $1.4\times$ (\Dataset{einstein.en.txt}) to $4.6\times$ (\Dataset{influenza}).

\subparagraph{Results for construction time.}
Table~\ref{tab:construction-time_full} reports the construction times of the dynamic RR-index and the other three methods on all datasets. On \Dataset{enwiki}, all four methods finished within about 11 hours; the dynamic SE-index was the fastest (about 5 hours), followed by the dynamic RR-index (about 8 hours). On the Pizza\&Chili repetitive corpus, all methods finished within 7 minutes for every dataset, with no significant difference among the four.